\lstdefinestyle{sql}{
    language=SQL,
    basicstyle=\ttfamily,
    keywordstyle=\color{blue},
    commentstyle=\color{gray},
    stringstyle=\color{purple},
    showstringspaces=false,
    breaklines=true,
    captionpos=b,
    frame=lines,
    numbers=left,
    numberstyle=\tiny,
    numbersep=5pt
}
\def\notationcolor{black} % change to black in final version. This is the color that our notation macros will appear in, so we can tell if we forgot to use it
\newcommand{\notation}[2]{\newcommand{#1}{{\textcolor{\notationcolor}{\ensuremath{#2}}}}}
\newcommand{\term}[2]{\newcommand{#1}{\textcolor{\notationcolor}{#2}\xspace}}
\notation{\data}{\mathcal{D}}
\notation{\mech}{\mathcal{M}} % mechanism
\notation{\datavec}{\vec{x}} % data
\notation{\covar}{\Sigma} % covariance
\notation{\bmat}{\mathbf{B}} % b matrix
\notation{\comvar}{\covar_{*}} % covariance of common mechanism
\notation{\comb}{\bmat_{*}} % B matrix of common mechanism
\notation{\commech}{\mech_{*}} % common mechanism
\notation{\amat}{\mathbf{A}} % matrix that converts one mechanism into another
\notation{\pcostmat}{\mathbf{X}}
\notation{\nullspace}{\mathbf{H}}
\notation{\dimsize}{d}
\notation{\query}{\vec{q}}
\notation{\qmat}{\mathbf{Q}}
\notation{\numquery}{m}
\notation{\senstwo}{\Delta_2}
\notation{\brank}{k}
\notation{\randalg}{\mathcal{A}}
\notation{\loewner}{\preceq}
\notation{\rloewner}{\succeq}
\notation{\rowspace}{\text{rowspace}}
\notation{\trace}{\text{trace}}
\notation{\impression}{\mathbf{i}}
\notation{\conversion}{\mathbf{c}}
\notation{\metadata}{\mathbf{\phi}}
\notation{\threshold}{\mathbf{T}}
\notation{\outp}{\omega}
\term{\obj}{\textbf{OBJ}}
\term{\myalg}{CM}
\notation{\adj}{\mathcal{N}}
\notation{\subeventadj}{\mathcal{N}^{\text{Event}}}
\notation{\subglobaladj}{\mathcal{N}^{\text{Global}}}
\notation{\subperdayadj}{\mathcal{N}^{\text{User}}}
\notation{\diffset}{\mathcal{D}}
\notation{\subeventdiff}{\mathcal{D}^{\text{Event}}}
\notation{\subglobaldiff}{\mathcal{D}^{\text{Global}}}
\notation{\subperdaydiff}{\mathcal{D}^{\text{User}}}
\notation{\diffex}{\mathcal{D}^{\text{User}}_{ex}}
\notation{\wmat}{\mathbf{A}} % workload matrix
\notation{\pmat}{\mathbf{X}} % privac cost mat
\notation{\dvec}{\mathbf{u}} % constraint vec
\notation{\bound}{r}
\notation{\thres}{\tau}
\notation{\varfun}{\text{Var}}
\notation{\pcost}{\text{budget}}
\notation{\loss}{\mathcal{L}}
\notation{\abovet}{\text{NumAbove}}
\notation{\gap}{s}
\newtheoremstyle{mystyle}%                % Name
  {}%                                     % Space above
  {}%                                     % Space below
  {\itshape}%                             % Body font
  {}%                                     % Indent amount
  {\scshape}%                             % Theorem head font
  {.}%                                    % Punctuation after theorem head
  { }%                                    % Space after theorem head, ' ', or \newline
  {}%                                     % Theorem head spec (can be left empty, meaning `normal')
\theoremstyle{mystyle}
\newtheorem{theorem}{Theorem}[section]
\newtheorem{definition}[theorem]{Definition}
\newtheorem{lemma}[theorem]{Lemma}
\let\vec\mathbf  % bold font for vectors
\let\mat\mathbf  % bold font for matrices
  \providecommand\BibTeX{{%
    \normalfont B\kern-0.5em{\scshape i\kern-0.25em b}\kern-0.8em\TeX}}}
\def\matmech{AdsBPC\xspace}
\def\adsmech{AdsBPC\xspace}
\def\bintree{BIN\xspace}
\def\bincrop{Stream\xspace}
\def\idglobal{IPA\xspace}
\def\matglobal{UMM\xspace}
\def\mmbpc{MMBPC\xspace}
\newcommand{\addcolor}[1]{\textcolor{black}{#1}}
\begin{document}

%don't want date printed
% \date{}

%make title bold and 14 pt font (Latex default is non-bold, 16 pt)
\title{Click Without  Compromise:  Online Advertising Measurement \\ via Per User Differential Privacy}

\makeatletter
\newcommand{\linebreakand}{%
  \end{@IEEEauthorhalign}
  \hfill\mbox{}\par
  \mbox{}\hfill\begin{@IEEEauthorhalign}
}
\makeatother

% \author{
%   \IEEEauthorblockN{Yingtai Xiao}
%   \IEEEauthorblockA{
%     TikTok Inc.\\
%     yingtai.xiao1@bytedance.com
%   }
%   \and
%   \IEEEauthorblockN{Jian Du}
%   \IEEEauthorblockA{
%     TikTok Inc.\\
%     jian.du@tiktok.com
%   }
%   \and
%   \IEEEauthorblockN{Shikun Zhang}
%   \IEEEauthorblockA{
%     TikTok Inc.\\
%     shikun.zhang@tiktok.com
%   }
%     \and
%   \IEEEauthorblockN{Wanrong Zhang}
%   \IEEEauthorblockA{
%     TikTok Inc.\\
%     wanrong.zhang@tiktok.com
%   }
%   \linebreakand
%   \IEEEauthorblockN{Qiang Yan}
%   \IEEEauthorblockA{
%     TikTok Inc.\\
%     yanqiang.mr@tiktok.com
%   }
%   \and
%   \IEEEauthorblockN{Danfeng Zhang}
%   \IEEEauthorblockA{
%     Duke University\\
%     danfeng.zhang@duke.edu
%   }
%   \and
%   \IEEEauthorblockN{Daniel Kifer}
%   \IEEEauthorblockA{
%     Penn State University\\
%     duk17@psu.edu
%   }
% }

\author{
    \IEEEauthorblockN{Yingtai Xiao\IEEEauthorrefmark{1}, 
    Jian Du\IEEEauthorrefmark{1}, 
    Shikun Zhang\IEEEauthorrefmark{1}, 
    Wanrong Zhang\IEEEauthorrefmark{1},
    Qiang Yan\IEEEauthorrefmark{1}, 
    Danfeng Zhang\IEEEauthorrefmark{2},
    Daniel Kifer\IEEEauthorrefmark{3}}
    \IEEEauthorblockA{
        \IEEEauthorrefmark{1}TikTok Inc.\;
        \IEEEauthorrefmark{2}Duke University\; \IEEEauthorrefmark{3}Penn State University\;
        \\ 
        \IEEEauthorrefmark{1}\{yingtai.xiao1, jian.du, shikun.zhang, wanrong.zhang, yanqiang.mr\}@tiktok.com
        \\
        \IEEEauthorrefmark{2}danfeng.zhang@duke.edu  \; \IEEEauthorrefmark{3}duk17@psu.edu
    }
}

\maketitle

% Use the following at camera-ready time to suppress page numbers.
% Comment it out when you first submit the paper for review.

\pagestyle{empty}

\begin{abstract}
    Online advertising is a cornerstone of the Internet ecosystem, with advertising measurement playing a crucial role in optimizing efficiency. 
Ad measurement entails attributing desired behaviors, such as purchases, to ad exposures across various platforms, necessitating the collection of user activities across these platforms. 
As this practice faces increasing restrictions due to rising privacy concerns, safeguarding user privacy in this context is imperative. 
Our work is the first to formulate the real-world challenge of advertising measurement systems with real-time reporting of streaming data in advertising campaigns. 
We introduce AdsBPC, a novel user-level differential privacy protection scheme for online advertising measurement results. 
This approach optimizes global noise power and results in a non-identically distributed noise distribution that preserves differential privacy while enhancing measurement accuracy. 
Through experiments on both real-world advertising campaigns and synthetic datasets, AdsBPC achieves a 33\% to 95\% increase in accuracy over existing streaming DP mechanisms applied to advertising measurement. This highlights our method's effectiveness in achieving superior accuracy alongside a formal privacy guarantee, thereby advancing the state-of-the-art in privacy-preserving advertising measurement.
\end{abstract}

% \subsection*{Abstract}

\section{Introduction}\label{sec:intro}

Online advertising is essential in the Internet ecosystem as it fuels the economy of the web and enables businesses to connect with their target audience effectively. Advertising measurement systems \cite{yun2020challenges, guha2010challenges, briggs1997advertising, yuan2013real} are pivotal in the online advertising landscape. They aim to assess the effectiveness of advertising campaigns by determining the total number of conversions---such as purchases, sign-ups, downloads, or other desired actions---that occur as a result of ad exposure. To reach a more diverse audience, advertisers often set up ad campaigns across various platforms, including Google Ads, Facebook Ads, and other digital advertising networks. To effectively gauge the impact of these ads on each platform, tracking mechanisms are in place to monitor advertising campaigns across the selected channels.

The ad measurement process encompasses two major steps: attribution and computation.
The attribution step involves analyzing the customer's online journey, starting from initial ad exposure to the ultimate conversion. Attribution aims to identify which exposures (touchpoints) contributed to conversions. 
The method that allocates credit to various advertising channels or touchpoints throughout the customer journey is referred to as multi-touch attribution (MTA). MTA recognizes the multiple interactions a customer may have with different advertising channels before making a conversion. By distributing credit among these touchpoints based on their influence on the conversion, MTA models offer a more nuanced understanding of each channel's effectiveness within the advertising mix.
The computation step is to calculate the total conversion value based on the attribution results to gain insights into the effectiveness of different advertising channels.
Based on the such insights, advertisers can improve their advertising campaigns by reallocating budget, adjusting targeting parameters, refining ad designs, or tweaking other campaign elements to improve their return on investment and users' advertising experience.

Ad measurement requires collecting a variety of user data, including the ads users have viewed, clicks registered by the ad platform, and conversion data from the advertiser's end. Traditionally, this involves automatically tracking user behavior across both ad publishers' platforms and advertisers' platforms, utilizing tools such as pixels, tags, or software development kits (SDKs). However,the tracking and collection of user data across sites have raised increasing privacy concerns regarding user consent, data security, and the potential misuse of personal information~\cite{toubiana2010adnostic, yun2020challenges, haddadi2011targeted, ullah2020privacy}. These privacy concerns have led to legislative actions, such as the General Data Protection Regulation (GDPR) in Europe \cite{voigt2017eu} and industry initiatives such as Apple's App Tracking Transparency~\cite{AppleATT}, which impose varying degrees of restrictions on cross-site tracking. Recent studies have developed privacy-preserving mechanisms to support advertising measurement~\cite{ayala2022show, pfeiffer2021masked, rosales2012post}. However, these approaches lack formal privacy guarantees. Our work addresses this gap by introducing a formal differential privacy guarantee for advertising measurement outcomes.

Differential privacy~\cite{dwork06Calibrating,zcdp} provides a mathematical framework for adding carefully calibrated noise to query results or datasets to protect \textit{individual data record} while still allowing for meaningful data analysis. The calibrated noise makes it nearly impossible to discern whether the dataset contains a specific data record or not, thereby safeguarding the privacy of individual data record.
Differential privacy has become a key component in modern privacy protection efforts in research areas such as database systems~\cite{near2021differential}, machine learning model training data protection~\cite{abadi2016deep}, and census statistics~\cite{abowd2018uscensus}.

Employing a differentially private mechanism to protect individual user privacy in a practical MTA ad measurement setting presents two challenges. First, maintaining a continuous data stream poses significant hurdles for the effective implementation of differential privacy. In advertising campaigns, measurement results must be continuously disclosed to both advertisers and ad providers in real-time, per time slot (e.g., daily or hourly). This is crucial as it allows advertisers to observe the outcomes of their campaigns and adjust their strategies accordingly. Unlike one-time data releases in database queries, continuous streams gradually consume the differential privacy budget, leading to significant accuracy deterioration and making the analysis results less valuable from a business standpoint. \addcolor{Existing differentially private streaming algorithms \cite{chan2011private, dwork2010differential, dong2023continual} primarily focus on protecting prefix-sum queries (i.e., releasing the cumulative sum up to the current day) and minimizing the root mean squared error. These algorithms are limited in the types of supported queries and objective functions.}

Second, MTA involves records from various platforms, making traditional \textit{event-level} differential privacy insufficient. In differential privacy, the granularity of privacy refers to the level at which DP guarantees are provided, indicating the unit of data protected by DP. Event-level DP protects whether a specific user action (such as clicking an ad or making a purchase) is included in the dataset, whereas \textit{user-level} DP provides privacy protection at the level of individual users, covering all event-level records such as views, clicks, and conversions of a particular user.
Consequently, user-level DP protects whether a user engaged with the ad campaign at any point, offering a stronger and more meaningful level of privacy protection than event-level DP. For instance, protecting whether a user has ever engaged with an ad campaign for diabetic medication is more meaningful than merely protecting whether a user has clicked on an Instagram ad for diabetic medication.

To address the two challenges above, we propose the \textbf{AdsBPC} algorithm, which offers greater flexibility by supporting various types of online queries and objective functions. This is achieved without significant modifications to the algorithm’s structure—only the initialization of the noise scale parameters needs to be adjusted. The algorithm can then adaptively produce the desired results based on the provided data. Our AdsBPC algorithm is also designed to provide a formal user-level differential privacy. The primary contributions of this paper are as follows:
\begin{enumerate}
\item  We are the first to introduce and formalize the problem regarding \addcolor{online} MTA advertising measurement systems, including both attribution and computation steps, which require the release of measurement results on a streaming basis.
\item We devised a user-level streaming differentially private algorithm, AdsBPC, tailored to address the challenges in MTA advertising measurement. Mathematically, we formulate the noise calibration process as a noise variance minimization problem with privacy budget constraints.
\item \addcolor{We developed a differentially private algorithm designed to accurately bound per-day user contributions. This approach reduces the sensitivity of AdsBPC, thereby enhancing accuracy while maintaining a fixed privacy budget.}
\item Through comprehensive experiments with both synthetic and real-world datasets, our algorithm achieves notable improvements in accuracy, \addcolor{ranging from 33\% to 95\%}, compared to previous works. Our results indicate that our algorithm can provide more accurate and meaningful insights for advertisers while ensuring users’ privacy.

\end{enumerate}

This paper is structured as follows: Section \ref{sec:background} illustrates how advertising measurement operates through concrete examples and discuss related concepts in differential privacy. Section \ref{sec:related} explores previous work in the differential privacy area. Section \ref{sec:problemdef} formally defines our user-level differentially private advertising measurement system and presents an efficient algorithm to address it. Section \ref{sec:discuss} presents discussions on the proposed method. Finally, Section \ref{sec:experiments} presents the results of comprehensive experiments conducted on both synthetic and real-world datasets using our algorithm.

\

% \section{Motivating Example}\label{sec:example}

\section{Ad Measurement and Differential Privacy }\label{sec:background}

In this section, we elucidate advertising measurement with multi-touch attribution as well as differential privacy definitions in the advertising measurement context.
Mathematical notations used throughout this paper are summarized in Table~\ref{tab:notation}.
\begin{table}[h]
\begin{center}
\caption{Table of Notation}\label{tab:notation}
\resizebox{1 \linewidth}{!}{
\begin{tabular}{|cl|}\hline
$n$: & Number of days. \\
$\datavec$: & Data vector representation of a stream. \\
$\mathcal{N}$: &All pairs of neighboring streams. $\datavec$\\
$\diffset$: & All possible deltas between neighboring. $\datavec$, $\tilde{\datavec}$. \\
$\vec{d}$: & The difference between two neighboring dataset. \\
% $\dimsize$:     & Number of possible records\\
% $\mech$: & Mechanism. \\
% $\outp$: & Output of a mechanism. \\
% $\commech$: & Common mechanism. \\
% $\bmat$: & Query matrix.\\
% $\numquery$: & Number of rows of $\bmat$ ($\bmat$ has size $\numquery\times \dimsize$).\\
% $\brank$: & Rank of $\bmat$.\\
% $\comb$: & Workload matrix of common mechanism. \\
% $\covar$: & Covariance matrix. \\
% $\bmat^T\covar^{-1}\bmat$: & Privacy cost matrix of mechanism $\mech(\datavec)=\bmat\datavec + N(\vec{0}, \covar).$ \\
$\impression_{u, j}$: &$j$-th impression of user $u$.\\
$\conversion_{u, j}$: &$j$-th conversion of user $u$.\\
$w$: &Attribution weight. \\
\hline
$\query_j$: & The $j$-th query vector.\\
$\mat{Q}$: & Query workload matrix.\\
$\mat{L}$: & Linear transformation matrix. \\
$\mat{B}$: & Basis matrix (strategy matrix).\\
$\vec{z}$: & Random variable. \\
$\rho$: & Privacy budget. \\
\hline
$r_i$: &User contribution bound on day $i$. \\
$\sigma_i$: &Noise scale on day $i$.\\
$l $: & Number of days using DP-quantile.\\
$p$: & The quantile percentage used in the DP-quantile.\\
$s^{\uparrow}, s^{\downarrow}$: & Scaling parameters.\\
$ T^{\uparrow}, T^{\downarrow}$: &Thresholds used in the sparse vector technique.\\
\hline
\end{tabular}
}
\end{center}
\end{table}

% \begin{figure*}[t]
% \centering

% \begin{tikzpicture}[node distance=3cm, minimum width=2cm]
%     % Define styles
%     \tikzstyle{startstop} = [rectangle, rounded corners, minimum width=3cm, minimum height=1cm, text centered, draw=black, fill=red!30]
%     \tikzstyle{process} = [rectangle, minimum width=3cm, minimum height=1cm, text centered, draw=black, fill=orange!30]
%     \tikzstyle{arrow} = [thick,->,>=stealth]

%     % Nodes
%     \node (impression) [process] {Impression};
%     \node (conversion) [process, right of=impression] {Conversion};
%     \node (attribution) [process, right of=conversion] {Attribution};
%     \node (stop) [startstop, right of=attribution] {End};

%     % Arrows
%     \draw [arrow] (impression) -- (conversion);
%     \draw [arrow] (conversion) -- (attribution);
%     \draw [arrow] (attribution) -- (stop);

% \end{tikzpicture}

% \caption{Diagram of SQL Join between Orders and Customers tables.}
% \label{fig:sql_join}
% \end{figure*}

\subsection{An Example of Advertising Measurement}

At the core of advertising measurement lies the concept of impression and conversion.

\noindent\textbf{Impression Events.} An impression event encompasses several key elements that define the context of an ad being served to a user. These elements include:
1) User ID, 2) Publisher ID, 3) Advertiser ID, 
4) Timestamp,  and 5) Metadata (supplementary information related to the impression, such as the type of user interaction (click or view) and ad format, providing the context for further analysis). We use the notation $\impression_{u, j}$ to express the $j$-th impression of user $u$, and use $\metadata(\impression_{u, j})$ to express the metadata of the impression $\impression_{u, j}$. In the case where an advertiser has multiple products, we use different Advertiser IDs for different products. Table \ref{tab:impression} shows an example of the impression table, where each row represents an impression event of users interacting (viewing or clicking) with ads on various publishers.

\begin{table}[h]
\centering
\caption{Impression Table I on Day 1}
\begin{tabular}{|c|c|c|c|c|c|c|}
\hline
User-ID & Pub-ID & Ad-ID &Imp. &Time & Metadata\\
\hline
$u_1$ & P-1  & Ad-1& $\impression_{u_1, 1}$ & 1 & $\metadata(\impression_{u_1, 1})$ \\
$u_1$ & P-1  & Ad-1& $\impression_{u_1, 2}$ & 11 & $\metadata(\impression_{u_1, 2})$ \\
\hline
$u_2$ & P-1 & Ad-1& $\impression_{u_2, 1}$ & 15 &$\metadata(\impression_{u_2, 1})$ \\
$u_2$ & P-2  & Ad-1& $\impression_{u_2, 2}$  & 25 &$\metadata(\impression_{u_2, 2})$ \\
\hline
\end{tabular}
\label{tab:impression}
\end{table}

\begin{table}[h]
\centering
\caption{Conversion Table C on Day 1}
\begin{tabular}{|c|c|c|c|c|c|}
\hline
User-ID & Ad-ID&  Conversion &Time & Metadata \\
\hline
$u_1$  &Ad-1 & $\conversion_{u_1,1}$&10 & $\metadata(\conversion_{u_1,1})$ \\
\hline
$u_2$  &Ad-1 & $\conversion_{u_2,1}$&20 & $\metadata(\conversion_{u_2,1})$\\
\hline
$u_2$  &Ad-1 & $\conversion_{u_2, 2}$&30 &$\metadata(\conversion_{u_2, 2})$\\
\hline
\end{tabular}
\label{tab:conversion}
\end{table}

% Impression events serve as the foundation for measuring ad campaign performance, offering insights into the reach and visibility of advertisements to target audiences.

\noindent\textbf{Conversion Events}
Conversion events represent pivotal moments in the advertising pipeline when users generate value for advertisers after interacting with ads on the publishers' platforms. These events encompass 1) User ID, 
2) Advertiser ID, 3) Timestamp,  and
4) Metadata: Additional data that characterize the conversion, including its type (e.g., purchase or sign-up) and value (e.g., purchase amount). We use the notation $\conversion_{u, j}$ to express the $j$-th conversion of user $u$, and use $\metadata(\conversion_{u, j})$ to express the metadata of the conversion $\conversion_{u, j}$. Table \ref{tab:conversion} shows an example of the conversion table, where each row represents a conversion, such as a purchase of Nike shoes.

% Conversion events provide a window into the effectiveness of advertising campaigns, shedding light on the actual impact in terms of desired user actions.

% \textbf{Entities in Ad Conversion Measurement} Three primary entities are integral to the ad conversion measurement process 1) Publisher: This entity represents the platform or website where ad impressions take place. Publishers play a crucial role in facilitating ad exposure and are central to the initial stage of the advertising journey. 2) Advertiser: Advertisers are the entities responsible for the creation and dissemination of advertisements. Conversion events occur on advertiser-owned platforms, making them central to assessing the ultimate impact of advertising efforts. 3) Ad Tech: Often an independent third party, ad tech entities are instrumental in the digital advertising ecosystem. 

% They are frequently engaged by advertisers to facilitate the purchase, management, and measurement of their digital advertising campaigns. Ad tech solutions contribute significantly to the accuracy and efficiency of ad conversion measurement.

\noindent\textbf{Advertising Measurement Process} The gist of ad measurement can be distilled into a two-step process. In the first step, conversions are attributed to one or more impressions using predefined logic or attribution models. This step is referred to as ``attribution,'' determining which impressions played a role in driving the eventual conversion event. Section \ref{subsec:attribution} gives more details about attribution logic.
Subsequently, in the second step, queries are executed on the resulting attributed dataset to derive valuable aggregate statistics. These queries may involve counting the total number of conversions attributed to a subset of impressions, calculating the corresponding conversion rate, or assessing the total return on an ad.

%In essence, ad conversion measurement endeavors to bridge the gap between the initial impression and the ultimate conversion, offering a comprehensive view of advertising efficacy in the digital realm. As the digital advertising landscape continues to evolve, the refinement of these measurement techniques remains critical for advertisers and publishers seeking to optimize their strategies and investments.

\begin{table}
\centering
\caption{Table after Joining Impression Table I and Conversion Table C for Day 1 based on User-ID and Ad-ID}
\begin{tabular}{|c|c|c|c|c|c|c|}
\hline
User-ID & Pub-ID & Ad-ID& Imp. & Conv. &I.Time & C.Time\\
\hline
$u_1$ & P-1& Ad-1 & $\impression_{u_1, 1}$ &  $\conversion_{u_1, 1}$ & 1 & 10\\
\hline
$u_2$ & P-1& Ad-1 &  $\impression_{u_2, 1}$ & $\conversion_{u_2, 1}$ & 15 & 20\\
$u_2$ & P-1& Ad-1 & $\impression_{u_2, 1}$ & $\conversion_{u_2, 2}$ & 15 & 30\\
$u_2$ & P-2& Ad-1 & $\impression_{u_2, 2}$ & $\conversion_{u_2, 2}$ & 25 & 30\\
\hline
\end{tabular}
\label{tab:join}
\end{table}

\subsubsection{Attribution Logic}\label{subsec:attribution}

For attribution, we assume there is a trusted third party that joins the impression table and the conversion table. Listing \ref{lst:sql} shows a simplified SQL query that joins these two tables. Table \ref{tab:join} shows the result of joining Tables 2 and 3. Then the third party assigns the attribution weight of a conversion to each relevant impression. The following are several commonly used attribution logic. Last-Touch Attribution (\textbf{LTA}), also known as Last-Click Attribution, assigns full credit for a conversion to the last interaction or touchpoint that directly led to the conversion. First-Touch Attribution (\textbf{FTA}), as the name suggests, attributes full credit for a conversion to the first interaction or touchpoint that introduced the customer to the brand or campaign. Uniform Attribution (\textbf{UNI}) is an attribution model in advertising measurement that evenly distributes credit for conversions across all touchpoints in a customer's journey, providing equal weight to each interaction. 

\begin{lstlisting}[style=sql, caption={SQL Query to Join Impression Table and Conversion Table on Day 1}, label=lst:sql]
SELECT User-ID, Pub-ID, Ad-ID,  
Impression, Conversion
FROM I JOIN C
ON I.User-ID = C.User-ID
AND I.Ad-ID = C.Ad-ID
And I.Time < C.Time
\end{lstlisting}

Table \ref{tab:attribution} shows an example of the above attribution logics, given the joined Table \ref{tab:join}. 
Since only $\impression_{u_2, 1}$ is relevant to $\conversion_{u_2, 1}$, both LTA, FTA, and UNI will assign a weight of 1 to $\impression_{u_2, 1}$ (similarly for $\conversion_{u_1, 1}$). In the case of $\conversion_{u_2, 2}$, where both $\impression_{u_2, 1}$ and $\impression_{u_2, 2}$ are relevant, LTA attributes a weight of 1 to $\impression_{u_2, 2}$ as it occurs last. FTA assigns a weight of 1 to $\impression_{u_2, 1}$ since it happens first. UNI, on the other hand, evenly distributes the weight, assigning 0.5 to both $\impression_{u_2, 1}$ and $\impression_{u_2, 2}$.

\begin{table}
\centering
\caption{Attribution on Day 1 in Table~\ref{tab:join}}
\begin{tabular}{c|| c|c || c|c ||c| c  }
\hline
Conversion &\multicolumn{2}{c||}{$\conversion_{u_1, 1}$} &\multicolumn{2}{c||}{$\conversion_{u_2, 1}$} &\multicolumn{2}{c}{$\conversion_{u_2, 2}$} \\
\hline
 Pub-ID & P-1 & P-2 & P-1 &  P-2 & P-1 & P-2  \\
\hline
\hline
LTA & 1& 0  & 1& 0 & 0&  1  \\
\hline
FTA  & 1& 0  & 1& 0& 1&  0  \\
\hline
UNI   &1 & 0   & 1& 0 & 0.5 &  0.5  \\
\hline
\end{tabular}
\label{tab:attribution}
\end{table}

% Because LTA is the most commonly used attribution logic \cite{choi2020online}, we will use LTA for illustration throughout this paper. However, the techniques discussed can also be applied to other MTA logics, as demonstrated in the experiments (Section \ref{sec:experiments}). 

\addcolor{There are other more sophisticated attribution algorithms. For example, Incremental Value Heuristics (\textbf{IVH}) \cite{singal2019shapley} measures the impact of an impression by comparing results between a group exposed to the ads and a control group that was not exposed. Shapley Value methods (\textbf{SV}) \cite{zhao2018shapley,singal2019shapley} originate from game theory and allocate credit to each impression based on its marginal contribution. Data-Driven Attribution (\textbf{DDA}) \cite{shao2011data,googleDDA} uses machine learning models to analyze historical data and assign attribution weights by comparing counterfactual impression patterns to reveal how different touchpoints work together to drive conversions}. \addcolor{Our Ads-BPC algorithm is compatible with these advanced attribution models, provided that the normalized total weight assigned to each publisher for any given conversion $\conversion$ does not exceed 1.}

\subsubsection{Measuring Attributed Conversions}
\label{subsec:sum}
% Using the Last Touch Attribution, each conversion is attributed to the impression that happens last. In order to compare the performance of the advertisement on different publishers, it's necessary to sum the total attribution weights for each publisher. If a publisher receives a high sum of weights, it means that the advertisement on this publisher has a high probability of converting to a purchase of a product (conversion). 

% With LTA, each conversion is attributed to the impression that occurs last. 

In order to effectively evaluate the advertisement's performance across multiple publishers, it is crucial to aggregate the total attribution weights, i.e., total conversions, for each publisher. A large cumulative sum of weights for a publisher signifies that advertisements on that platform have a substantial likelihood of resulting in product conversions. This metric becomes instrumental in assessing the effectiveness of ad campaigns and optimizing advertising strategies. 

% Specifically, for a given publisher, and a given conversion $\conversion$, after the attribution, suppose $\conversion$ is attributed to an impression $\impression$. If $\impression$ belongs to the given publisher ($\impression$ happens on the publisher), then the publisher receives an attribution weight of $w=1$, otherwise, it receives an attribution weight of $w=0$. 

For a specific publisher and conversion event $\conversion$, the attribution process operates as follows: If $\conversion$ is attributed to an impression $\impression$ from this publisher, the publisher is assigned an attribution weight of $w$. Then we will compute, for all conversions, the total attribution weights received by each publisher. A large sum of attribution weights for a publisher signifies that advertisements on that platform are more likely to be profitable and convert into actual conversions.

% \subsection{Privacy Requirement in Online Measuring System}
\subsubsection{Streaming Measurement}
\label{sec:online}

% \begin{table}
% \centering
% \caption{Total Conversion via LTA}
% \begin{tabular}{|c|c|c|c|c}
% \hline
%  Pub-ID & Day 1 & Day 1 to 2 & Day 1 to 3 & $\cdots$ \\
% \hline
%  P-1 & $\vec{x}_{\text{P1}}[1]$ & $\vec{x}_{\text{P1}}[1] +\vec{x}_{\text{P1}}[2]$ & $\sum_{i=1}^3 \vec{x}_{\text{P1}}[i]$ & $\cdots$\\
%  P-2 & $\vec{x}_{\text{P2}}[1]$ & $\vec{x}_{\text{P2}}[1] +\vec{x}_{\text{P2}}[2]$ & $\sum_{i=1}^3 \vec{x}_{\text{P2}}[i]$ & $\cdots$\\
% \hline
% \end{tabular}
% \label{tab:online}
% \end{table}

\begin{table}
\centering
\caption{Streaming Measurement of Attributed Conversions across Publishers via Queries}
\begin{tabular}{|c|c|c|c|c}
\hline
 Pub-ID & Day 1 & Day 2 & Day 3 & $\cdots$ \\
\hline
 P-1 & $\query_1(\vec{x}_{\text{P1}}[1])$& $\query_2(\vec{x}_{\text{P1}}[1:2])$ & $\query_3(\vec{x}_{\text{P1}}[1:3])$ & $\cdots$\\
 P-2 & $\query_1(\vec{x}_{\text{P2}}[1])$& $\query_2(\vec{x}_{\text{P2}}[1:2])$ & $\query_3(\vec{x}_{\text{P2}}[1:3])$ & $\cdots$\\
\hline
\end{tabular}
\label{tab:online}
\end{table}

\addcolor{In practical applications, it is essential to adopt a continuous approach for releasing queries on attribution weights. For instance, an advertiser might want to obtain the cumulative sum of attribution up to the current day and report this sum daily. Alternatively, they might be interested in reporting the sum of attributions for the past week, using a sliding window approach. This approach facilitates the ongoing monitoring of trends over time, allowing for the evaluation of changes in advertising performance.}

Before presenting an example, let's define some key notations. We assume the ad campaign will run for $n$ days, and we represent the data vector as $\vec{x}_{\text{P}} \in \mathbb{R}^{n}$, where $\vec{x}_{\text{P}}[i]$ denotes the sum of attribution weights for publisher P on day $i$ ($1 \leq i \leq n$). For simplicity, we use $\datavec$ to represent the data vector when the publisher P is clear from the context. On day $i$, the advertiser is interested in obtaining an answer to the query $\query_i(\datavec[1:i])$. The function $\query_i$ depends on $\datavec[1:i]$ and has access only to the first $i$ entries of $\datavec$, as the remaining entries are yet to be determined.

\addcolor{Table \ref{tab:online} provides an example of streaming measurement for attributed conversions using online queries. In this scenario, the advertiser Nike is interested in obtaining answers to online queries $\query_1, \query_2, \dots, \query_n$ across two major publishers, Publisher 1 and Publisher 2. This dataset allows Nike to evaluate the performance of their advertisements in real-time, enabling Nike to make timely and informed decisions that can significantly impact their business.}

Our primary concern is to ensure that the released information will be privacy-preserving. Specifically, we require that the released information satisfies differential privacy, a formal privacy guarantee that will be discussed next in Section \ref{sec:dp}. In Section \ref{sec:problemdef}, we will introduce the methodology for achieving this privacy-preserving release of the dataset.

\subsection{Differential Privacy}
\label{sec:dp}

Differential Privacy \cite{dwork06Calibrating,zcdp} establishes a set of constraints governing the operation of a mechanism $\mech$. It has emerged as the de facto standard for safeguarding data privacy when publicly accessible data are released. Differential privacy offers a robust guarantee of plausible deniability by constraining an adversary's ability to ascertain whether a specific individual data record was included in the dataset or not. 

% Its adoption extends to a growing number of real-world applications, with notable examples including the U.S. Census Bureau \cite{tdahdsr}, Apple \cite{appledpscale}, Meta \cite{opacus}, and Google \cite{tensorflowprivacy}.

One of the key features of differential privacy is its robust privacy guarantee, which remains effective regardless of an adversary's knowledge or actions when attempting to attack the data. In this context, an adversary refers to any person or entity trying to extract sensitive information about an individual record from the analysis results of the data. This privacy guarantee is maintained even if the adversary possesses unlimited computing power and has full knowledge of the algorithm and system used for data collection and analysis.

% \addcolor{One of the key features of differential privacy is that it provides a privacy guarantee that holds regardless of what an adversary knows or does when performing attacks on the data. In this context, an adversary is a person or entity that is trying to learn sensitive information about individuals from the output of a data analysis. The privacy guarantee holds even if the adversary has unlimited computing power and complete knowledge of the algorithm and system used to collect and analyze the data. }

% So even if the adversary were to develop new and sophisticated methods for trying to learn sensitive information from the data, or if new additional information becomes available, differential privacy would still provide the exact same privacy guarantee, making it future-proof.

A fundamental concept in differential privacy is the notion of a \textit{neighboring dataset}. Let $\mathcal{N}$ represent the set of all possible pairs of neighboring datasets, denoted as $(\datavec, \tilde{\datavec})$. We define the vector $\vec{d}$ as the difference between these neighboring datasets, expressed as $\vec{d} = \datavec - \tilde{\datavec}$. The \textit{diffset} of $\mathcal{N}$ is defined as the collection of all possible $\vec{d}$ values, which we denote as $\diffset = \{\vec{d}: \vec{d} = \datavec - \tilde{\datavec}, \forall (\datavec, \tilde{\datavec}) \in \adj \} $.
The most prevalent variant of differential privacy is
given below.
\begin{definition}[Differential Privacy~\cite{dwork06Calibrating}]\label{def:dp} Given privacy parameters $\epsilon > 0$ and $\delta \in (0, 1)$, and the set of all pairs of neighboring datasets $\adj$, a randomized algorithm $\mech$ satisfies $(\epsilon,\delta)$-differential privacy if for all $(\datavec, \tilde{\datavec}) \in \adj$ and all sets $S$, the following equations hold:
\begin{align*}
    P(\mech(\datavec)\in S) \leq e^{\epsilon} P(\mech(\tilde{\datavec})\in S) + \delta.
\end{align*}
When $\delta=0$, we say $\mech$ satisfies $\epsilon$-differential privacy.
\end{definition}

\noindent We introduce an alternative notion of differential privacy, zero-concentrated differential privacy for analysis.
\begin{definition}[Zero-Concentrated Differential Privacy (zCDP) \cite{zcdp}]\label{def:zcdp} Given a privacy parameter $\rho > 0$ and the set of all pairs of neighboring datasets $\adj$, a mechanism $\mech$ satisfies $\rho$-zCDP if for all $(\datavec, \tilde{\datavec}) \in \adj$ and for all $\alpha \in (1, \infty)$, it holds that $DIV_{\alpha}(\mech(\datavec) || \mech(\tilde{\datavec})) \leq \rho \alpha$, where $DIV_{\alpha}$ is the $\alpha$-Renyi divergence between distributions $\mech(\datavec) $ and $\mech(\tilde{\datavec})$.
\end{definition}

\noindent The following two theorems show the conversion between zCDP and $(\epsilon,\delta)$-DP,  $\epsilon$-DP.
\begin{theorem}[zCDP to DP Conversion \cite{canonne2020discrete}]\label{thm:convert}
If mechanism $\mech$ satisfies $\rho$-zCDP, then it satisfies $(\epsilon, \delta)$-DP for any $\epsilon > 0$ and 
\begin{align*}
    \delta = \min_{\alpha > 1} \frac{\exp ((\alpha - 1)(\alpha \rho - \epsilon)) }{\alpha- 1} (1-\frac{1}{\alpha})^{\alpha}.
\end{align*}
\end{theorem}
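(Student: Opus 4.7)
The plan is to reduce the theorem to a general Renyi-to-DP conversion inequality and then substitute the zCDP bound. Fix a neighbouring pair $(\datavec, \tilde{\datavec}) \in \adj$, write $P = \mech(\datavec)$ and $Q = \mech(\tilde{\datavec})$, and let $L = dP/dQ$. The $\rho$-zCDP hypothesis, combined with the moment-generating form of Renyi divergence, yields $E_Q[L^\alpha] \leq e^{(\alpha-1)\alpha\rho} =: M$ for every $\alpha > 1$. For any measurable set $S$,
\[
P(S) - e^\epsilon Q(S) \;=\; E_Q\!\bigl[(L - e^\epsilon)\mathbf{1}_S\bigr] \;\leq\; E_Q\!\bigl[(L - e^\epsilon)^+\bigr],
\]
so it suffices to upper-bound $E_Q[(L - e^\epsilon)^+]$ subject only to the moment constraint, and then take the infimum of the resulting bound over $\alpha > 1$.

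The main step is the variational problem
\[
\sup\Bigl\{\,E_Q\!\bigl[(L - e^\epsilon)^+\bigr] \;:\; E_Q[L^\alpha] \leq M,\; L \geq 0\,\Bigr\}.
\]
A Lagrangian / KKT argument shows that the extremal law of $L$ under $Q$ concentrates on two atoms, namely $L = 0$ with mass $1-p$ and $L = L^\star$ with mass $p$, subject to $p (L^\star)^{\alpha} = M$. Substituting into the objective gives $M^{1/\alpha}\, p^{\,1-1/\alpha} - p\, e^\epsilon$, a concave function of $p$. Differentiating in $p$ and solving yields the optimiser $p^\star = M(1 - 1/\alpha)^{\alpha} e^{-\alpha\epsilon}$, and plugging back produces
\[
\frac{M}{\alpha - 1}\Bigl(1 - \tfrac{1}{\alpha}\Bigr)^{\alpha} e^{-(\alpha-1)\epsilon} \;=\; \frac{\exp\!\bigl((\alpha-1)(\alpha\rho - \epsilon)\bigr)}{\alpha - 1}\Bigl(1 - \tfrac{1}{\alpha}\Bigr)^{\alpha}.
\]
Since this inequality holds for every $\alpha > 1$, taking the infimum over $\alpha$ delivers exactly the claimed $\delta$.

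The hard part is justifying the two-point reduction rigorously. I would handle it by a Caratheodory / KKT argument: $L \mapsto (L - e^\epsilon)^+$ is convex, so extremising a linear functional in the law of $L$ over the moment-bounded, non-negative class forces support on at most two points, and direct substitution shows that one of them must be $0$. One also has to note that relaxing the probability-measure constraint $E_Q[L] = 1$ only enlarges the feasible set, so any bound obtained from the relaxed problem remains a valid upper bound on $P(S) - e^\epsilon Q(S)$; in the degenerate regime where the unconstrained optimiser would force $p^\star > 1$ or $L^\star \leq e^\epsilon$, the true supremum is $0$ and the stated $\delta$ is vacuously an upper bound. The remaining algebraic simplification from $M = e^{(\alpha-1)\alpha\rho}$ into the form displayed in the theorem is routine.
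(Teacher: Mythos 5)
The paper never proves this statement: Theorem~\ref{thm:convert} is imported verbatim from \cite{canonne2020discrete} and used as a black box, so there is no in-paper proof to compare against. Judged on its own merits, your argument is correct and recovers the known bound. Your route --- maximising $E_Q[(L-e^\epsilon)^+]$ over all laws of $L\ge 0$ subject to the single moment constraint $E_Q[L^\alpha]\le M=e^{(\alpha-1)\alpha\rho}$, reducing to a two-atom extremal law (one atom at $0$) by a Carath\'eodory/KKT argument, and then optimising over the mass $p$ --- is the dual of the argument in the cited source, which instead applies the change of measure $E_Q[(L-e^\epsilon)^+]=E_P[(1-e^\epsilon/L)^+]$ followed by the pointwise inequality $(1-c/x)^+\le x^{\alpha-1}\sup_{y>0}(1-c/y)^+y^{1-\alpha}$; that pointwise inequality is tight exactly on your two-point law, so the two derivations are equivalent and produce the same constant. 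Your algebra checks out: $p^\star=M(1-1/\alpha)^\alpha e^{-\alpha\epsilon}$, the resulting value $\frac{M}{\alpha-1}(1-1/\alpha)^\alpha e^{-(\alpha-1)\epsilon}$, and the observation that dropping $E_Q[L]=1$ only relaxes the problem are all correct; at the optimiser one also has $L^\star=e^\epsilon/(1-1/\alpha)>e^\epsilon$, so the positive-part issue does not arise there. One small correction: in the degenerate regime $p^\star>1$ your claim that ``the true supremum is $0$'' is wrong --- the supremum over feasible $p\in(0,1]$ is attained at $p=1$ and can be positive --- but the stated $\delta$ still dominates it because $h(p^\star)=\sup_{p>0}h(p)\ge h(p)$ for every feasible $p$ by concavity of the objective $h$, so the conclusion survives with a one-line fix to the justification.
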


\begin{theorem}[Pure DP to zCDP Conversion \cite{DPorg-pdp-to-zcdp}]\label{thm:puredp}
If mechanism $\mech$ satisfies $\epsilon$-DP, then it satisfies $\rho$-zCDP with $\rho = \frac{e^{\epsilon} - 1}{e^{\epsilon} + 1} \epsilon$.
Furthermore, the bound is tight.
\end{theorem}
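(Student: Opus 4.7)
The plan is to derive a worst-case envelope for the R\'enyi divergence from a pointwise log-likelihood-ratio bound, verify that the envelope lies below $\alpha\rho$ via a second-derivative comparison, and exhibit the extremal mechanism to prove tightness. First, $\epsilon$-DP translates into a pointwise statement: for any neighboring pair $(\datavec,\tilde{\datavec})$, setting $P=\mech(\datavec)$ and $Q=\mech(\tilde{\datavec})$, one has $|\log(dP/dQ)|\le\epsilon$ holding $Q$-almost surely. Writing
\begin{equation*}
D_\alpha(P\|Q)=\tfrac{1}{\alpha-1}\log\mathbb{E}_Q\!\left[\left(\tfrac{dP}{dQ}\right)^{\!\alpha}\right]
\end{equation*}
and letting $Y=dP/dQ\in[e^{-\epsilon},e^\epsilon]$ with $\mathbb{E}_Q[Y]=1$, the problem becomes to upper-bound $\mathbb{E}[Y^\alpha]$ over distributions on this interval with unit mean. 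Since $y\mapsto y^\alpha$ is convex for $\alpha>1$, the maximum is attained at a two-point distribution on the endpoints; the mean constraint forces masses $\tfrac{1}{e^\epsilon+1}$ on $e^\epsilon$ and $\tfrac{e^\epsilon}{e^\epsilon+1}$ on $e^{-\epsilon}$, giving the closed-form envelope
\begin{equation*}
D_\alpha(P\|Q)\le\frac{1}{\alpha-1}\log\frac{e^{\alpha\epsilon}+e^{(1-\alpha)\epsilon}}{e^\epsilon+1}.
\end{equation*}

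The main analytic step compares $L(\alpha)=\log\frac{e^{\alpha\epsilon}+e^{(1-\alpha)\epsilon}}{e^\epsilon+1}$ with $R(\alpha)=\alpha(\alpha-1)\rho$, with the goal of showing $L\le R$ on $[1,\infty)$ (which is exactly $D_\alpha\le\alpha\rho$). Direct differentiation gives $L(1)=R(1)=0$ and $L'(1)=R'(1)=\rho$; the latter identity recovers $\rho$ as the KL divergence of the extremal two-point distribution. For the second derivatives, $L''(\alpha)=\epsilon^2\operatorname{sech}^2((2\alpha-1)\epsilon/2)$ and $R''(\alpha)=2\rho$. Since $\operatorname{sech}$ is decreasing on $[0,\infty)$, for $\alpha\ge 1$ we have $L''(\alpha)\le L''(1)=\epsilon^2\operatorname{sech}^2(\epsilon/2)$, and the inequality $\epsilon^2\operatorname{sech}^2(\epsilon/2)\le 2\rho$ reduces algebraically to $\epsilon\le\sinh(\epsilon)$, which holds for all $\epsilon\ge 0$. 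Integrating $L''\le R''$ twice from $\alpha=1$, using matched boundary values and first derivatives, yields $L\le R$ on $[1,\infty)$ and hence $D_\alpha(P\|Q)\le\alpha\rho$ for all $\alpha>1$.

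For tightness, I would exhibit the binary randomized response mechanism that, on input $b\in\{0,1\}$, outputs $b$ with probability $\tfrac{e^\epsilon}{e^\epsilon+1}$ and $1-b$ otherwise. This mechanism is $\epsilon$-DP and its output distributions on neighboring inputs are exactly the extremal Bernoulli pair above, so their KL divergence equals $\rho$; since $D_\alpha\to\mathrm{KL}$ as $\alpha\to 1^+$, any $\rho'<\rho$ would violate $D_\alpha\le\alpha\rho'$ for $\alpha$ sufficiently close to $1$. The principal obstacle I anticipate is the global second-derivative comparison: a priori $L$ could drift above $R$ for large $\alpha$ and controlling a Taylor remainder uniformly looks painful, but the $\operatorname{sech}^2$ form of $L''$ makes it monotonically decreasing on $[1,\infty)$, collapsing the global inequality to the single elementary estimate $\epsilon\le\sinh\epsilon$ at $\alpha=1$.
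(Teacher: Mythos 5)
The paper does not prove this theorem itself: it is imported by citation from the DifferentialPrivacy.org note and no proof for it appears in the appendix, so there is no in-paper argument to compare against. Your proof is correct and is essentially the standard argument behind that cited result: convexity of $y\mapsto y^{\alpha}$ under the unit-mean constraint reduces the worst case to the two-point (randomized-response) pair, the resulting envelope $\frac{1}{\alpha-1}\log\frac{e^{\alpha\epsilon}+e^{(1-\alpha)\epsilon}}{e^{\epsilon}+1}$ is shown to lie below $\alpha\rho$ (your second-derivative comparison, which collapses the global inequality to $\epsilon\le\sinh\epsilon$ via the monotonicity of $\operatorname{sech}$, is a clean and valid way to do this), and tightness follows from $D_{\alpha}\to\mathrm{KL}$ as $\alpha\to 1^{+}$ applied to randomized response. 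I verified the endpoint masses, the identities $L(1)=R(1)=0$ and $L'(1)=R'(1)=\rho$, and the algebra reducing $\epsilon^{2}\operatorname{sech}^{2}(\epsilon/2)\le 2\rho$ to $\epsilon\le\sinh\epsilon$; all check out.
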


% The parameters $(\epsilon, \delta)$ represent the "privacy loss budget" in the context of approximate differential privacy, while the privacy parameter $\rho$ corresponds to the "privacy loss budget" in the realm of zero-concentrated differential privacy (zCDP). It's worth noting that understanding the privacy parameter in zCDP can be challenging, as discussed at length in \cite{semantics}, but it is computationally more straightforward to obtain than in the case of approximate differential privacy. Consequently, the typical approach involves initially determining the zCDP privacy loss parameter $\rho$ for a given mechanism $\mech$ and subsequently converting it into $\epsilon$ and $\delta$ to facilitate interpretation, as described in \cite{zcdp,discgauss}. 

In this paper, our primary emphasis is on zCDP due to its implementation simplicity. Nevertheless, it's important to highlight that our proposed method seamlessly extends to other variants of differential privacy.
To analyze the privacy of a given mechanism, an important quantity is sensitivity, which measures the maximum change in function value on neighboring datasets. 

% \todo[inline]{Add $L_1$ sensitivity.}

\begin{definition}
\label{def:l2sen}
The $L_p$ ($p=1,2$) sensitivity of a function $f: \mathbb{R}^{n} \rightarrow \mathbb{R}^{n}$ is given by $\Delta_p (f)  = \max_{(\datavec, \tilde{\datavec}) \in \adj} \|f(\datavec) - f(\tilde{\datavec}) \|_p$,
    % \begin{align*}
    %     \Delta_p (f)  = \max_{(\datavec, \tilde{\datavec}) \in \adj} \|f(\datavec) - f(\tilde{\datavec}) \|_p,
    % \end{align*}
where $\| \mat{v}\|_p$ is the $L_p$ norm of the vector $\mat{v}$. 
\end{definition}

\begin{theorem}
    [Gaussian Mechanism of zCDP \cite{zcdp}]\label{thm:gaussian} Let $f: \mathbb{R}^{n} \rightarrow \mathbb{R}^{n}$ be a function of the dataset $\datavec$, the Gaussian mechanism $\mech$ adds i.i.d. Gaussian noise to $f(\datavec)$ with scale parameter $\sigma^2$, $\mech(\datavec) = f(\datavec) +   N(0, \sigma^2 \mat{I}_n),$
    % \begin{align*}
    %     \mech(\datavec) = f(\datavec) +   N(0, \sigma^2 \mat{I}_n),
    % \end{align*}
    where $\mat{I}_n$ is the $n\times n$ identity matrix. The Gaussian mechanism satisfies $\frac{\Delta_2(f)^2}{2\sigma^2}$-zCDP.
\end{theorem}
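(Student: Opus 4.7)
The plan is to bound the $\alpha$-R\'enyi divergence $DIV_{\alpha}(\mech(\datavec) \| \mech(\tilde{\datavec}))$ for every pair $(\datavec, \tilde{\datavec}) \in \adj$ and every $\alpha \in (1, \infty)$, and to show this bound is at most $\frac{\Delta_2(f)^2}{2\sigma^2} \cdot \alpha$. By Definition~\ref{def:zcdp} this is exactly what is required for $\frac{\Delta_2(f)^2}{2\sigma^2}$-zCDP, so the theorem will follow.

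First I would fix an arbitrary neighboring pair $(\datavec, \tilde{\datavec}) \in \adj$ and observe that, under the Gaussian mechanism, $\mech(\datavec) \sim N(f(\datavec), \sigma^2 \mat{I}_n)$ and $\mech(\tilde{\datavec}) \sim N(f(\tilde{\datavec}), \sigma^2 \mat{I}_n)$. Setting $\vec{\mu} = f(\datavec) - f(\tilde{\datavec})$, the two output distributions differ only in their means, with $\|\vec{\mu}\|_2 \leq \Delta_2(f)$ by Definition~\ref{def:l2sen}. Because the covariance is the isotropic $\sigma^2 \mat{I}_n$, the multivariate density factorizes as a product of one-dimensional Gaussians, and so does the integrand $p(\mat{y})^{\alpha} q(\mat{y})^{1-\alpha}$ appearing in the R\'enyi divergence. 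This reduces the problem to the one-dimensional case.

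Next I would carry out the standard Gaussian R\'enyi computation: plug the densities of $N(\mu_1, \sigma^2)$ and $N(\mu_2, \sigma^2)$ into $DIV_{\alpha}(P \| Q) = \frac{1}{\alpha-1} \log \int p(y)^\alpha q(y)^{1-\alpha}\, dy$, combine the two quadratics inside the exponent into a single quadratic in $y$ by completing the square, and evaluate the resulting Gaussian integral. The leftover exponent is exactly $\frac{\alpha(\alpha-1)(\mu_1-\mu_2)^2}{2\sigma^2}$, which after dividing by $\alpha-1$ yields $DIV_{\alpha}\bigl(N(\mu_1, \sigma^2) \,\|\, N(\mu_2, \sigma^2)\bigr) = \frac{\alpha (\mu_1-\mu_2)^2}{2\sigma^2}$. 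Summing this across the $n$ independent coordinates gives $DIV_{\alpha}(\mech(\datavec) \| \mech(\tilde{\datavec})) = \frac{\alpha\, \|\vec{\mu}\|_2^2}{2\sigma^2}$.

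Finally I would apply the sensitivity bound $\|\vec{\mu}\|_2 \leq \Delta_2(f)$, which holds uniformly over $\adj$, to conclude $DIV_{\alpha}(\mech(\datavec) \| \mech(\tilde{\datavec})) \leq \frac{\Delta_2(f)^2}{2\sigma^2} \cdot \alpha$ for all $\alpha > 1$ and all neighboring pairs, matching the zCDP condition. The main technical step is the one-dimensional Gaussian integral in the middle paragraph: one has to expand $\alpha(y-\mu_1)^2 + (1-\alpha)(y-\mu_2)^2$, recognize the residual cross term $\alpha(\alpha-1)(\mu_1-\mu_2)^2$ after completing the square in $y$, and carefully track the $\sqrt{2\pi\sigma^2}$ normalizers so that the remaining Gaussian integral is exactly $1$. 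Everything else is bookkeeping: the isotropy of the noise supplies the factorization across coordinates, and the $L_2$ sensitivity definition supplies the uniform bound needed to quantify over $\adj$.
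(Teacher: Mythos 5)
Your proof is correct and is exactly the standard argument from the cited reference \cite{zcdp}: the paper itself does not reprove this imported theorem, but your computation of the R\'enyi divergence between two isotropic Gaussians (factorization across coordinates, completing the square to extract the residual $\alpha(\alpha-1)(\mu_1-\mu_2)^2/(2\sigma^2)$ term, then bounding $\|f(\datavec)-f(\tilde{\datavec})\|_2$ by $\Delta_2(f)$) is precisely how Bun and Steinke establish it. No gaps.
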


% When the function is linear, we can use the following Linear Gaussian Mechanism \cite{xiao2024optimal},
% \begin{definition}[Linear Gaussian Mechanism \cite{xiao2024optimal}]\label{def:lgm}
% Given a $\numquery\times \dimsize$ matrix $\bmat$ and $\numquery\times\numquery$ covariance matrix $\covar$, the (correlated) linear Gaussian mechanism $\mech$ is defined as $\mech(\datavec)=\bmat \vec{x} + N(\vec{0},\covar)$. The privacy cost matrix of $\mech$ is defined as $\bmat^T\covar^{-1}\bmat$. The privacy cost of $\mech$, denoted by $\pcost(\mech)$, is the largest diagonal of the privacy cost matrix and is used to compute the privacy parameters: $\mech$ satisfies $\rho$-zCDP with $\rho=\pcost(\mech)/2$ \cite{xiao2020optimizing}.  
% \end{definition}

A classical approach is to apply this Gaussian mechanism to the vector of measurement results. However, we later demonstrate that such an i.i.d. distribution is not optimal.
\subsubsection{Neighboring Dataset}
\label{sec:neighbor}
% \addcolor{The most commonly used definition, denoted as $\areventadj$, defines a pair of datasets $(\datavec, \tilde{\datavec})$ as belonging to $\areventadj$ if $\tilde{\datavec}$ can be obtained from $\datavec$ by adding or removing a single record. With this definition in mind, the diffset of $\areventadj$ can be expressed as $\areventdiff = \{\vec{d}: \|\vec{d}\|_1 \leq 1\}$.}

We first introduce the event-level adjacency $\subeventadj$. $\datavec$ and $\tilde{\datavec}$ are called event-level neighboring datasets, denoted as $(\datavec, \tilde{\datavec}) \in \subeventadj$, if $\tilde{\datavec}$ can be obtained from $\datavec$ by \textbf{substituting a single record} in $\datavec$. \addcolor{We assume a fixed number of users, and therefore use the replacement notion for defining neighboring datasets. However, with minor adjustments to the sensitivity calculation, our method can also be adapted to the add-or-remove notion.} In this context, the diffset of $\subeventadj$ is defined as $\subeventdiff = \{\vec{d}: \|\vec{d}\|_1 \leq 2\}$. This is because when a single record is substituted, $\tilde{\datavec}$ can deviate from $\datavec$ by 2 entries, with each entry differing by 1, resulting in $\|\tilde{\datavec} - \datavec \|_1 \leq 2$.

However, it is important to recognize that users may not be content with protecting only individual actions taken on the internet. Instead, they often seek comprehensive protection for all of their personal information. Consequently, there arises a need for user-level neighboring datasets.
To move from event-level neighboring datasets to user-level neighboring datasets, we replace the concept of ``the change of one record'' with ``the change of all records associated with one user.'' However, it's important to note that since a user can potentially contribute an infinite number of records, and differential privacy aims to provide worst-case guarantees; introducing excessive noise can significantly impact utility performance.
To address this challenge, a common practice is to constrain user contributions, as discussed in prior works such as \cite{dong2023continual, googlesql, liu2023algorithms}. This constraint limits each user to having at most $k$ records in the dataset. In cases where a user exceeds this limit and contributes more than $k$ records, only the first $k$ records are retained, while the surplus records are discarded.

The user-level neighboring dataset with a global contribution constraint of at most $GS$, denoted as $\subglobaladj$, defines a pair of datasets $(\datavec, \tilde{\datavec})$ as belonging to $\subglobaladj$ if $\tilde{\datavec}$ can be obtained from $\datavec$ by substituting all records (up to a maximum of $GS$ records) of one user from $\datavec$. Because each user can contribute a maximum of $GS$ records, the diffset of $\subglobaladj$ can be characterized as $\subglobaldiff = \{\vec{d}: \|\vec{d}\|_1 \leq 2GS\}$. This is because when $GS$ records are substituted, in the extreme case, $\tilde{\datavec}$ can deviate from $\datavec$ by 2 entries, with each entry differing by $GS$, resulting in $\|\tilde{\datavec} - \datavec \|_1 \leq 2GS$.

\subsubsection{Differential Privacy Properties}

\begin{theorem}[Post-processing \cite{zcdp}] \label{thm:post}
Let $\mech$ be a mechanism, $f$ be an arbitrary function that takes the output of $\mech$ as the input. If the mechanism $\mech$ satisfies $\rho$-zCDP, then the post-processing $f(\mech)$ satisfies $\rho$-zCDP.
\end{theorem}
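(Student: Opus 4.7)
The plan is to reduce the claim to the data-processing inequality for R\'enyi divergence. The zCDP condition is purely a statement about R\'enyi divergences of output distributions on neighboring inputs, so I only need to argue that post-processing can never increase such a divergence. Concretely, for any pair $(\datavec, \tilde{\datavec}) \in \adj$ and any $\alpha \in (1, \infty)$, I want to show
\begin{align*}
DIV_{\alpha}\bigl(f(\mech(\datavec)) \,\|\, f(\mech(\tilde{\datavec}))\bigr) \leq DIV_{\alpha}\bigl(\mech(\datavec) \,\|\, \mech(\tilde{\datavec})\bigr) \leq \rho\alpha,
\end{align*}
where the second inequality is exactly the assumption that $\mech$ satisfies $\rho$-zCDP (Definition~\ref{def:zcdp}). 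The resulting chain then gives $\rho$-zCDP for $f \circ \mech$.

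First I would handle the case where $f$ is deterministic. Here the data-processing inequality follows directly from the joint convexity of the map $(P,Q) \mapsto \int (dP/dQ)^{\alpha}\,dQ$ that underlies R\'enyi divergence, or, equivalently, from the fact that R\'enyi divergence is a monotone transform of an $f$-divergence, all of which contract under the application of a (measurable) channel. The push-forward of $\mech(\datavec)$ under $f$ can only merge outcomes, never separate them, so the likelihood ratios on the post-processed space are averages of likelihood ratios on the original space, and applying Jensen's inequality to the convex function $t \mapsto t^{\alpha}$ gives the bound.

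Next I would extend to a randomized $f$ by the standard trick of writing $f(y) = g(y, R)$, where $R$ is an auxiliary random seed drawn independently of $\datavec$ and of $\mech$'s internal randomness, and $g$ is deterministic. Then $f(\mech(\datavec))$ is the push-forward of the product distribution $\mech(\datavec) \otimes R$ under $g$, and similarly for $\tilde{\datavec}$. By independence, the R\'enyi divergence of the product distributions equals $DIV_{\alpha}(\mech(\datavec) \,\|\, \mech(\tilde{\datavec}))$, and the deterministic case already established gives the inequality after applying $g$. Combining with the zCDP bound of $\mech$ completes the argument.

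The only real subtlety, and the step I expect to be the main obstacle to present rigorously, is the data-processing inequality for R\'enyi divergence in full generality (arbitrary measurable channels, including continuous outputs such as the Gaussian noise in Theorem~\ref{thm:gaussian}). I would either cite it as a standard fact or give a short proof via Jensen's inequality applied to the conditional expectation of the likelihood ratio under the post-processed measure. Everything else is bookkeeping: the quantifier on $\alpha$ matches Definition~\ref{def:zcdp}, and the inequality holds uniformly over $(\datavec, \tilde{\datavec}) \in \adj$, so $f(\mech)$ inherits the $\rho$-zCDP guarantee with the same $\rho$.
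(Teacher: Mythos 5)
The paper states this as a known property of zCDP and cites it without providing a proof, so there is no in-paper argument to compare against. Your proposal is correct and is the standard proof from the cited reference: since zCDP is a uniform R\'enyi-divergence bound, the data-processing inequality for R\'enyi divergence of each order $\alpha>1$ (deterministic case via Jensen's inequality applied to the convex map $t\mapsto t^{\alpha}$, randomized case via the independent-seed reduction) transfers the bound to $f\circ\mech$ with the same $\rho$.
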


\begin{theorem}[Sequential composition \cite{zcdp}]
\label{thm:seq}
Let $\mech_1$, $\cdots$, $\mech_k$ be a sequence of mechanisms. If the mechanism $\mech_i$ in the sequence satisfies $\rho_i$-zCDP, then the composition of the sequence $\mech = (\mech_1, \cdots, \mech_k)$ satisfies $(\sum_{i=1}^{k}\rho_i)$-zCDP.
\end{theorem}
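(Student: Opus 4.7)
The plan is to work directly from Definition~\ref{def:zcdp} and exploit the additivity of Rényi divergence under product (and more generally conditional) distributions. Fix arbitrary neighbors $(\datavec, \tilde{\datavec}) \in \adj$ and an arbitrary order $\alpha \in (1, \infty)$; the goal reduces to showing
\[
DIV_\alpha\bigl(\mech(\datavec)\,\|\,\mech(\tilde{\datavec})\bigr) \;\le\; \Bigl(\sum_{i=1}^k \rho_i\Bigr)\alpha,
\]
uniformly in $(\datavec,\tilde{\datavec})$ and $\alpha$, which is exactly the zCDP criterion for parameter $\sum_i \rho_i$.

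For the non-adaptive case (independent internal randomness across the $\mech_i$), I would write the joint output distributions as product measures $P = P_1 \otimes \cdots \otimes P_k$ and $Q = Q_1 \otimes \cdots \otimes Q_k$, where $P_i$ and $Q_i$ denote the laws of $\mech_i(\datavec)$ and $\mech_i(\tilde{\datavec})$, respectively. The standard tensorization identity for Rényi divergence,
\[
DIV_\alpha(P\,\|\,Q) \;=\; \sum_{i=1}^k DIV_\alpha(P_i\,\|\,Q_i),
\]
then reduces the problem to invoking the $\rho_i$-zCDP hypothesis on each coordinate and summing the resulting bounds $\rho_i \alpha$.

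For the adaptive case, in which $\mech_i$ may depend on the outputs $\omega_1, \ldots, \omega_{i-1}$ of earlier mechanisms, I would proceed by induction on $k$ using the chain rule for Rényi divergence. Conditioned on any fixed prefix $\omega_{<i}$, the mechanism $\mech_i(\,\cdot\,; \omega_{<i})$ is itself $\rho_i$-zCDP in the original sense, since conditioning merely fixes external randomness and does not weaken $\mech_i$'s intrinsic guarantee on any neighboring pair. The chain rule accumulates these conditional divergences additively, yielding the same bound $\bigl(\sum_i \rho_i\bigr)\alpha$.

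The main obstacle is the adaptive case: tensorization alone is insufficient, and one must verify carefully that the $\rho_i$-zCDP guarantee survives conditioning on all prior outputs, so that the conditional Rényi divergence is bounded uniformly in $\omega_{<i}$. Once this uniform bound is in place, the chain rule handles the accumulation and the remaining steps—summing over $i$ and taking the supremum over $(\datavec, \tilde{\datavec}) \in \adj$ and $\alpha \in (1,\infty)$—are immediate from Definition~\ref{def:zcdp}.
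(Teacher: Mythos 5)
The paper does not prove this statement itself; it is imported verbatim from the zCDP paper \cite{zcdp}, and your proposal follows essentially the same argument as that reference (factor the joint output density over the prefix, bound the conditional R\'enyi divergence of each $\mech_i$ uniformly over the realized prefix $\omega_{<i}$, and accumulate). One small precision worth making if you write this out: R\'enyi divergence for $\alpha\neq 1$ does not satisfy an exact additive chain rule, so the adaptive step is not an identity but an inequality --- you write $e^{(\alpha-1)DIV_\alpha}$ as an iterated integral, bound the inner conditional factor by $e^{(\alpha-1)\alpha\rho_i}$ uniformly in $\omega_{<i}$ using each mechanism's zCDP guarantee, and pull it out of the outer integral; since you already identify that uniform bound as the key obstacle, the proposal is sound.
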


% \begin{theorem}[Parallel Composition \cite{mcsherry2009privacy, pan2023differentially}] 
% Let $\mech_1$, $\cdots$, $\mech_k$ be a sequence of mechanisms computed on \textbf{disjoint} subsets of the private database. If the mechanism $\mech_i$ in the sequence satisfies $\rho_i$-zCDP, then the composition of the sequence $\mech = (\mech_1, \cdots, \mech_k)$ satisfies $\max_{i=1}^k \rho_i$-zCDP.
% \end{theorem}

% \begin{tcolorbox}[colback=red!10!white,  colframe=red!50!black]
% \begin{itemize}
%     \item Can't find the reference for the parallel composition for $\rho$-zCDP, probably need to prove it.
%     \item  Does the LTA create \textbf{disjoint} dataset?
%     \item We need an example to illustrate the adjacency and disjoint dataset here.
% \end{itemize}
% \end{tcolorbox}

\begin{theorem}[Group privacy for zCDP \cite{zcdp}]
\label{thm:group}
Let $\mech$ be a mechanism that satisfies $\rho$-zCDP for any event-level neighboring datasets $(\datavec, \tilde{\datavec}) \in \subeventadj$. Then for any user-level neighboring datasets with a global contribution constraint of at most $k$, $(\datavec, \tilde{\datavec}) \in \subglobaladj$, $\mech$ satisfies $k^2 \rho$-zCDP.
\end{theorem}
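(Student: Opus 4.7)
The plan is to reduce user-level adjacency to a chain of event-level neighbors and then invoke the Rényi-divergence group-privacy property underlying zCDP.

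First, I would exploit the structure of $\subglobaladj$. If $(\datavec, \tilde{\datavec}) \in \subglobaladj$ then $\tilde{\datavec}$ is obtained from $\datavec$ by replacing all (at most $k$) records of one user. I decompose this multi-record replacement into at most $k$ consecutive single-record substitutions, yielding a chain $\datavec = \datavec_0, \datavec_1, \ldots, \datavec_k = \tilde{\datavec}$ in which each adjacent pair $(\datavec_i, \datavec_{i+1})$ lies in $\subeventadj$ (padding with repetitions if fewer than $k$ substitutions are required).

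Second, by hypothesis $\mech$ is $\rho$-zCDP with respect to $\subeventadj$, so for every $i \in \{0,\ldots,k-1\}$ and every $\alpha > 1$,
\begin{equation*}
DIV_\alpha\bigl(\mech(\datavec_i) \,\|\, \mech(\datavec_{i+1})\bigr) \leq \rho \alpha .
\end{equation*}

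Third, I would apply the chain-extension lemma for Rényi divergence from the original zCDP paper~\cite{zcdp}: inductively on $k$, a length-$k$ chain of $\rho$-zCDP event-neighbors satisfies
\begin{equation*}
DIV_\alpha\bigl(\mech(\datavec_0) \,\|\, \mech(\datavec_k)\bigr) \leq k^2 \rho \alpha
\end{equation*}
for every $\alpha > 1$. Since the pair $(\datavec, \tilde{\datavec}) \in \subglobaladj$ was arbitrary, this matches Definition~\ref{def:zcdp} with parameter $k^2 \rho$.

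The main obstacle is the third step. Unlike KL divergence, Rényi divergence does not admit a clean triangle inequality, so the $k$-step extension is not simply $k\rho$; one must instead use a weighted Hölder-type inequality that raises the Rényi order at intermediate points, which is precisely what produces the quadratic $k^2$ factor. This quadratic blow-up (compared with the linear group privacy of pure DP established via Theorem~\ref{thm:puredp}) is intrinsic to the Rényi-divergence formulation of zCDP, and it is what motivates tightly bounding per-user contributions in the rest of the paper.
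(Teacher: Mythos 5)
The paper does not prove this theorem; it imports it directly from the zCDP literature \cite{zcdp}, so there is no in-paper argument to compare against. Your reconstruction is the standard one and is correct: the decomposition of a $\subglobaladj$-pair into a chain of at most $k$ single-record substitutions in $\subeventadj$ is valid for the substitution-based adjacency used here, and the quadratic factor does come from the H\"older-type ``weak triangle inequality'' for R\'enyi divergence (Proposition~27 and its supporting lemma in the cited work), which raises the order $\alpha$ at intermediate points. The only caveat is that your third step essentially \emph{is} the cited group-privacy result rather than an independent derivation of it, but since the paper itself treats that result as a black box, this is an acceptable level of detail.
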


Group privacy facilitates the transition from event-level differential privacy to user-level differential privacy. Nevertheless, in the context of online advertising measurement systems, applying group privacy overlooks the likelihood that a user's contributions are more evenly distributed across each day, rather than being concentrated in a single day. By harnessing the patterns of user contributions across days, it becomes feasible to enhance utility. This is the motivation behind our proposal for a user-level differentially private advertising measurement system with bounded per-day contributions, affording greater control on a day-to-day basis.

\section{Related Work}\label{sec:related}

 \noindent
\textbf{A lack of formal privacy guarantee in industry practices.} To address growing consumer concerns over privacy, many advertising networks and platforms have introduced a range of privacy-preserving advertising measurement tools. Notable among these initiatives are Apple's Private Click Measurement (PCM) framework and Google’s Attribution Reporting API, which is part of the Privacy Sandbox. PCM mostly relies on reducing data granularity—for example, by allowing only limited user identifiers or implementing delayed reporting—to decrease the amount of information websites can gather about user activities on other sites. Similarly, the Attribution Reporting API delivers \textit{event-level} reports that share many of PCM's techniques, such as delayed reporting and coarse data granularity. In addition, the Attribution Reporting API also generates summary reports, or aggregated reports, by utilizing encryption and a trusted execution environment (TEE). Neither PCM nor the Attribution Reporting API can provide any \textit{formal privacy guarantee} for the advertising measurement output with a mathematically rigorous definition. It is unclear whether such approaches are indeed useful in safeguarding user privacy since any aggregated report could potentially be subject to privacy risks in a way similar to the US census data attack in a series of works by Dwork and others~\cite{dick2023confidence, keller2023database}. Recently, Google \cite{delaney2024differentially} discussed releasing differentially private conversions in a one-shot setting, whereas our focus is on releasing data in an online setting. Alistair \cite{tholoniat2024alistair} scheduled the on-device privacy budget efficiently to achieve individual differential privacy (IDP), with a special focus on secure multiparty computation (MPC) and TEE.

% Additionally,\cite{denison2022private,smith2024scalable} used differentially private stochastic gradient descent to train machine learning models.}

 \noindent
\textbf{Secure computation is not sufficient.} The advertising industry has recently proposed secure multi-party computation (MPC) based solutions. These solutions ensure the input data is encrypted during advertising measurement computations. Examples include PSI-Sum~\cite{ion2020deploying} by Google and Private Join and Compute~\cite{meta} by Meta, as well as the Interoperable Private Attribution (IPA)~\cite{case2023interoperable}, a collaborative effort by Mozilla and Meta. Despite the encryption of input data, the reports—output from the MPC computation—are plaintext results without formal privacy guarantees. 
Recent studies have shown that in the case of advertising measurement, membership inference attacks can exploit this vulnerability, potentially risking user privacy and business confidentiality~\cite{guo2022birds,Bo}.

\noindent
\textbf{Ensuring privacy guarantees for online reporting poses challenges.} An advertising campaign can extend  even months. Applying differential privacy to the cumulative ad measurement results  ensure the privacy of the final data. However, it is also crucial to report daily measurement results online. Simply applying a differential privacy mechanism by adding independent noise for each day severely degrades measurement accuracy and usability, as demonstrated by the experiments in Section~\ref{sec:experiments}.
One possible research direction to improve the accuracy without sacrificing privacy is to
% Several prior studies have addressed non-private conversion measurement \cite{agarwal2007estimating, lee2012estimating, menon2011response, rosales2012post}. Additionally, it is worth mentioning that some recent advertising conversion measurement systems have employed ad-hoc privacy concepts \cite{ayala2022show}. However, these approaches do not adhere to the rigorous standards of differential privacy.
use tree-based mechanisms to address the online release problem under \textit{event-level} differential privacy. This direction was initially introduced by Chan and Dwork \cite{chan2011private, dwork2010differential}. Subsequent studies by Denisov et al. \cite{denisov2022improved} leveraged the matrix mechanism \cite{LMHMR15,xiao2020optimizing,xiao2023answering} to enhance the accuracy of these tree-based approaches by improving the constant factors in their error rates. \addcolor{However, these methods have limited flexibility, as they primarily focus on minimizing the root mean squared error for prefix-sum queries. In contrast, Ads-BPC can optimize a variety of queries with different objective functions.}

\noindent
\textbf{User-level privacy guarantee is preferred.} Two definitions of differential privacy exist in the advertising setting: event-level and user-level. The former safeguards individual events, while the latter conceals all events associated with a particular user across the entire ads measurement campaign. For instance, event-level privacy protects a single user's click or purchase activity (i.e., conversion), whereas user-level privacy shields all user clicks or conversions. The chosen privacy level impacts the degree of perturbation applied.
Choquette-Choo et al. introduced the matrix mechanism as a solution for user-level differential privacy in their work \cite{choquette2022multi, choquette2024amplified}. Their primary application is differentially private stochastic gradient descent. Nevertheless,these methods necessitate a minimum time gap of at least $b$ days between two records belonging to the same user. This requirement presents a challenge in the context of online advertising measurement systems, where a user can potentially have multiple conversions within a single day. Investigating user-level differential privacy in the context of continual observation, \cite{dong2023continual, zhang2023differentially} used tree-based mechanisms to tackle the problem. Although effective for addressing the general streaming problem, these approaches do not leverage the specific structures inherent in advertising measurement systems. {For example, in a general streaming process, a single user may contribute most of the records, leading to a large sensitivity (cropping the data related to this user will lead to a high bias). However, in an advertising system involving millions of users, most users generate only a small number of conversions in practice . The removal of data associated with a few users is unlikely to have a significant impact on the overall dataset.} In this paper, we focus on the challenge of achieving the more stringent user-level differential privacy while satisfying the constraints and properties of an online ad measurement system.

% \cite{choquette2022multi} provides a general framework for using the matrix mechanism under user-level differential privacy, however, it only works when the number of days $n$ is small. We leverage this framework to identify patterns in the solution and subsequently devise a scalable and efficient algorithm.

\section{Differentially Private Measurement Design}\label{sec:problemdef}

% \todo[inline]{Step 1: Setup a metric, calculate the scales for each day. Step 2: Dynamically track the user contributions every day, change the variance accordingly. Step 3: Collect data, add noise, puiblish result.}

% \todo[inline]{We focus on counts first. The problem of sum (like money spent) can be solved using similar techniques.}

We design our system to satisfies the following properties. \textbf{Flexibility}. The goals vary across different types of ad campaigns, and we want to avoid redesigning the system for each specific case. Ideally, the system should accommodate various objectives with only minor modifications. \textbf{Simplicity}. The system should be easy to deploy and should introduce minimal changes to the existing non-private version. \textbf{Adaptability}. The system should automatically adjust to different datasets, ensuring the algorithm performs well across various types of data. Given these requirements, we propose the following mechanism design: 
\begin{enumerate} 
\item Each day $i$, we add random Gaussian noise $z_i$ to obtain the noisy result $\tilde{x}_i = x_i + z_i$, where $z_i \sim N(0, \sigma_i^2)$ is a Gaussian random variable with mean zero and variance $\sigma_i^2$. While simple, adding independent noise—combined with per-day contribution limits—proves more effective in our experiments than more complex methods like the matrix mechanism. 
\item Depending on the specific objective function, we initialize the scale parameters $\bar{\sigma}_i^2$, providing flexibility for different ad campaign goals. 
\item Each day, we privately estimate the user contribution limit $r_i$. A user can contribute up to $r_i$ conversions on day $i$, with any additional conversions removed. 
\item We then adjust the actual noise scale as $\sigma_i = r_i * \bar{\sigma}_i$, based on the contribution limit $r_i$ for day $i$. This data-dependent noise scaling makes the system more adaptive. 
\end{enumerate}

% In the following sections, we will describe the above components one by one. Algorithm \ref{alg:adsbpc} gives the description of the \adsmech. First we initialize the scale parameters based on the given query workload and objective function. Section \ref{sec:init} will introduce the details of initialization. For the first $k$ days, we use the differentially private quantile algorithm to estimate the user contribution bound $r_i$. Section \ref{sec:quantile} will discuss the details of the private quantile algorithm. For the rest of days, we use the Sparse Vector Technique (SVT) to track whether we need to change the user contribution bound or not. Section \ref{sec:svt} will give implementation of the SVT algorithm.

\begin{algorithm}
\caption{Ads Measurement with Bounded Per Day Contributions (Ads-BPC)}
\label{alg:adsbpc}
\begin{algorithmic}[1]
\STATE \textbf{Input:} Target workload query matrix $\mat{Q}$; Objective function OBJ.
\STATE \textbf{Output:} Noisy query answer.
\label{line:scale}
\STATE $\bar{r}_1, \cdots, \bar{r}_n $ $=1$.
\STATE $\bar{\sigma}_1, \cdots, \bar{\sigma}_n$ = InitScales($\mat{Q}$, OBJ, $\{\bar{r}_i \}_{i=1}^{n}$).

\STATE bound\_list = [].

\FOR {$i$ in range($n$)}
    \IF{$i < l$} 
        % \STATE \COMMENT{Use private quantile for the first $k$ days.}
        \STATE $r_i$= PrivateQuantile($D_i$, percent).
    \ELSE
        \STATE $r_i$ = UpdateBoundSVT($D_i$).
    \ENDIF
    \STATE Store the bound: bound\_list.append($r_i$).
    \STATE Clip data: $\datavec[i]$ = Clip($D_i$, $r_i$).
    \STATE Update scale: $\sigma_i= \bar{\sigma}_i / \bar{r}_i *r_i$.
    \label{line:update}
    \STATE Add noise to data: $\tilde{\datavec}[i] = \datavec[i] + N(0, \sigma_i^2)$.
    \STATE Get query answer: $y_j = \query_j^T \tilde{\datavec}$.
\ENDFOR

\STATE Return.
\end{algorithmic}
\end{algorithm}

In the following sections, we describe each of the above components in detail. Algorithm \ref{alg:adsbpc} provides an overview of the \adsmech mechanism. We begin by initializing the scale parameters based on the given query workload and objective function, with Section \ref{sec:init} covering the initialization process in detail. For the first $l$ days, we estimate the user contribution limit $r_i$ using a differentially private quantile algorithm, as explained in Section \ref{sec:quantile}. For the remaining days, we apply the Sparse Vector Technique (SVT) to determine whether an adjustment to the user contribution limit is needed. Section \ref{sec:svt} provides the implementation details for the SVT algorithm. We will begin by addressing the scenario involving a single publisher. In cases where multiple publishers are involved, we show how to extend the mechanism in Section \ref{sec:compose}. Before we introduce details of our algorithm, we show how to measure the utility of the proposed mechanism in Section \ref{sec:utility}, and give privacy properties of the mechanism in Section \ref{sec:budget}

\subsection{Mechanism Utility}
\label{sec:utility}
Recall that there are $n$ days in total, and we represent the data vector as $\vec{x} \in \mathbb{R}^{ n}$, where $\vec{x}[i]$ represents the sum of attribution weights for a specific publisher on day $i$ ($1 \leq i \leq n$). In order to analyze our mechanism using Theorem \ref{thm:gaussian}, we rewrite the noise addition step as $\tilde{\datavec}[i] = \datavec[i] + N(0, \sigma_i^2) = \sigma_i ( \frac{1}{\sigma_i} \datavec[i] + N(0, 1))$. That is, we add a Gaussian random variable with zero mean and variance of one to a scaled version of $\datavec[i]$. Let $\mat{B} = \text{diag} (\frac{1}{\sigma_1}, \cdots, \frac{1}{\sigma_n})$, the private mechanism is defined as,
\begin{align}
    \mech(\datavec) = \mat{B} \datavec + N(0, \mat{I}).
    \label{eq:mechanism}
\end{align}

We represent the $j$-th query vector as $\query_j = (\query_{j}[1], \query_{j}[2], \cdots, \query_{j}[n])$, the true answer to this query is $\query_j^T \datavec$ = $ \query_{j}[1] \datavec[1] + \query_{j}[2] \datavec[2] + \cdots + \query_{j}[n] \datavec[n]$. Because we perturb the ground truth data with independent Gaussian noise, $\tilde{\datavec}[i] = \datavec[i] + N(0, \sigma_i^2)$, the noisy answer to the query $\query_j$ is $\query_j^T \tilde{\datavec} = \sum_{i=1}^{n} \query_j[i] (\datavec[i] + N(0, \sigma^2_i))$ $= \sum_{i=1}^n \query_j[i] \datavec[i] + \sum_{i=1}^{n} \query_j[i] N(0, \sigma^2_i)$ $=\query_j^T \datavec + N(0, \sum_{i=1}^n \query_j[i]^2 \sigma^2_i).$ 
We can see that the variance of the query $\query_j$ is
\begin{align}
\label{mech:var}
    \varfun(\query_j;\mech) := \sum_{i=1}^n \query_j[i]^2 \sigma^2_i.
\end{align}

% \begin{align*}
%     \query_j^T \tilde{\datavec} &= \sum_{i=1}^{n} \query_j[i] (\datavec[i] + N(0, \sigma^2_i)) \\
%     &= \sum_{i=1}^n \query_j[i] \datavec[i] + \sum_{i=1}^{n} \query_j[i] N(0, \sigma^2_i) \\
%     &=\query_j^T \datavec + N(0, \sum_{i=1}^n \query_j[i]^2 \sigma^2_i).
% \end{align*}

The variance function of the mechanism $\mech$ is defined as $\loss(\varfun(\query_1;\mech),\dots, \varfun(\query_m;\mech))$, here $\loss$ is a function of $\varfun(\query_j;\mech)$. For example, $\loss$ can be the sum function, then it represents the sum of variances for all queries.

\subsection{Mechanism Privacy Budget}
\label{sec:budget}
% Recalling from Section \ref{sec:neighbor}, the definition of $\subglobaladj$ imposes a global limit on user contributions, overlooking the structures inherent in advertising measurement systems. Within these systems, users with particular interests typically exhibit consistent interactions with ads over time. For instance, sudden bursts of interest followed by complete disengagement are uncommon and may even signal malicious behavior. Instead, it is more typical for a user's interest in a product to persist, resulting in conversions across multiple days. To capture this behavior effectively, we introduce the concept of bounded per-day contributions.

As stated in Theorem \ref{thm:gaussian}, the privacy budget is proportional to the $L_2$ sensitivity of the mechanism. Since we are adding independent noise to the data $\datavec_i$ each day, a natural way to reduce sensitivity is to limit the number of contributions a user can make daily. To achieve this, we introduce the concept of bounded per-day contributions.

\begin{definition}[User-Level Neighboring Dataset with Bounded Per-day Contributions]
    \label{def:bpc}
Given the constraint that each user can contribute at most $r_i$ conversions on day $i$ ($1 \leq i \leq n$), the user-level neighboring dataset with bounded per-day contributions, denoted as $\subperdayadj$, defines a pair of datasets $(\datavec, \datavec')$ as belonging to $\subperdayadj$ if $\datavec'$ can be obtained from $\datavec$ by substituting all records (with a maximum of $r_i$ records on day $i$) of one user from $\datavec$. The diffset of $\subperdayadj$ can be expressed as $\subperdaydiff = \{\vec{d} \in \mathbb{R}^{n}: -r_i \leq \vec{d}_i \leq r_i, 1\leq i\leq n\}$.
\end{definition}

\noindent The privacy budget is determined by the sensitivity of the function $f(\datavec) = \mat{B} \datavec$, or more precisely, how much the outputs of this mapping can deviate (in terms of $L_2$ norm) when transitioning from the input data vector $\datavec$ to a neighboring data vector $\datavec'$, under the definition of $\subperdayadj$. The following theorem shows how to calculate the sensitivity of the function $f$.

\begin{theorem}[Sensitivity of Mechanism \ref{eq:mechanism}]
\label{thm:sen}
The \textbf{sensitivity} of the mechanism \ref{eq:mechanism} under the definition of $\subperdayadj$ is
\begin{align}
\nonumber
\Delta_2(f) = & \max_{(\datavec, \datavec') \in \subperdayadj} \|\mat{B} \datavec - \mat{B} \datavec' \|_2 = \max_{\vec{d} \in \subperdaydiff} \| \mat{B} \vec{d}\|_2\\
&= \sqrt{\frac{r_1^2}{\sigma_1^2} + \cdots + \frac{r_n^2}{\sigma_n^2} }.
\end{align}
\end{theorem}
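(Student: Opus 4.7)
The plan is to compute the $L_2$ sensitivity directly from Definition \ref{def:l2sen}, exploiting the fact that the mapping $f(\datavec) = \mat{B}\datavec$ is linear, so the difference $f(\datavec) - f(\datavec') = \mat{B}(\datavec - \datavec') = \mat{B}\vec{d}$ depends only on $\vec{d} \in \subperdaydiff$. This reduces the sensitivity problem to maximizing $\|\mat{B}\vec{d}\|_2$ over the diffset $\subperdaydiff = \{\vec{d} \in \mathbb{R}^n : -r_i \leq d_i \leq r_i,\ 1 \leq i \leq n\}$ described in Definition \ref{def:bpc}.

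Next, I would expand the objective coordinate-wise. Since $\mat{B} = \mathrm{diag}(1/\sigma_1, \ldots, 1/\sigma_n)$ is diagonal, we have
\begin{equation*}
\|\mat{B}\vec{d}\|_2^2 \;=\; \sum_{i=1}^{n} \frac{d_i^2}{\sigma_i^2}.
\end{equation*}
The crucial observation is that $\subperdaydiff$ is a \emph{product of intervals}: the constraint $-r_i \leq d_i \leq r_i$ on each coordinate is independent of the constraints on the other coordinates. Consequently the maximization of the above sum decomposes into $n$ independent one-dimensional maximizations of $d_i^2/\sigma_i^2$ over $d_i \in [-r_i, r_i]$, each of which is attained at $d_i = \pm r_i$ with value $r_i^2/\sigma_i^2$.

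Combining the coordinate maxima gives $\max_{\vec{d}\in\subperdaydiff} \|\mat{B}\vec{d}\|_2^2 = \sum_{i=1}^n r_i^2/\sigma_i^2$, and taking square roots yields the claimed formula. Attainability of the bound is witnessed by any $\vec{d}^*$ with $d_i^* = r_i$ (or $-r_i$) for all $i$, which lies in $\subperdaydiff$ and corresponds to a user whose daily contribution difference saturates the per-day cap; this is consistent with the adversarial worst case considered in the standard replacement model of neighboring datasets.

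Essentially none of the steps present a genuine obstacle once the problem is posed correctly; the entire argument is a separable box-constrained maximization of a weighted sum of squares. The only subtlety worth underlining is the justification that the set $\subperdaydiff$ is indeed a product of intervals rather than a joint constraint (e.g.\ an $\ell_1$ ball), which follows directly from Definition \ref{def:bpc} where the per-day bounds $r_i$ are imposed independently. Once this observation is made, the proof reduces to a line of algebra.
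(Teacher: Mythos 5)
Your proof is correct, and it reaches the result by a slightly different route than the paper. Both arguments first use linearity of $f(\datavec)=\mat{B}\datavec$ to reduce the sensitivity to $\max_{\vec{d}\in\subperdaydiff}\|\mat{B}\vec{d}\|_2$ over the box $\subperdaydiff=\{\vec{d}: -r_i\le d_i\le r_i\}$, but they differ in how they solve that maximization. You exploit the fact that $\mat{B}$ is diagonal, so $\|\mat{B}\vec{d}\|_2^2=\sum_i d_i^2/\sigma_i^2$ is separable over the product of intervals, and the problem decomposes into $n$ independent one-dimensional maximizations, each attained at $d_i=\pm r_i$. The paper instead invokes Bauer's maximum principle: $\|\mat{B}\vec{d}\|_2$ is convex and continuous and $\subperdaydiff$ is convex and compact, so the maximum is attained at an extreme point of the box, i.e.\ at some $\vec{d}$ with $d_i\in\{-r_i,r_i\}$ for all $i$, after which the value is read off. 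Your separability argument is more elementary and self-contained (no appeal to an external convexity theorem), while the paper's extreme-point argument is the one that would survive if $\mat{B}$ were not diagonal --- the reduction to the $2^n$ vertices of the box holds for any convex objective, whereas coordinate-wise decomposition does not. For the diagonal $\mat{B}$ actually used in Mechanism \ref{eq:mechanism}, both collapse to the same one-line computation, and your identification of the attaining point $d_i^*=\pm r_i$ matches the paper's extreme-point set $\diffex$.
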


\noindent We define the privacy budget of the mechanism as
\begin{align}
\label{mech:budget}
    \pcost(\mech):= \frac{\Delta^2_2(f)}{2} =\rho. 
\end{align}

\subsection{Initialize the Scales}
\label{sec:init}

With the expressions for utility and budget of our mechanism $\mech$ established in previous sections, we can now optimize the parameters $\sigma_i$ for various queries and objective functions. Before running the ads measurement system, advertisers need to provide: 
1) The specific queries of interest, such as the number of conversions each day or the number of conversions from day 10 to day 20. 
2) The objective function of interest, such as minimizing the root mean squared error or minimizing the maximum variance of the queries.

Depending on the provided queries and objective function, the advertiser may aim to minimize the objective function within the constraints of a given privacy budget. 
\begin{align}
\nonumber
&\arg\min_{\mech} \loss(\varfun(\query_1;\mech),\dots, \varfun(\query_m;\mech)) \\
&\textbf{ s.t. }\quad\pcost(\mech)\leq \rho  \quad \text{(Privacy constrained)}\label{eqn:privacyconstrained}
\end{align}

Alternatively, the advertiser might have accuracy constraints for the queries and seek to minimize the total privacy budget. We use a constant vector $\vec{v}$ to specify variance constraints for queries.
\begin{align}
\nonumber
&\arg\min_{\mech} \pcost(\mech) \quad \quad \text{(Utility constrained)} \\
&\textbf{ s.t. }\quad \loss(\varfun(\query_1;\mech),\dots, \varfun(\query_m;\mech))\leq \vec{v}
\label{eqn:accuracyconstrained}
\end{align}

\noindent Then we can solve for $\sigma_1, \sigma_2, \cdots, \sigma_n$ as the initial solution. The following theorem shows that using initial scales derived above, Mechanism \ref{eq:mechanism} satisfies zCDP.

\begin{theorem}
\label{thm:dp} Let $\mat{Q} =[\query_1;\query_2;\cdots;\query_m] \in \mathbb{R}^{m\times n}$ be the input Query Workload matrix. For Mechanism \ref{eq:mechanism}, let $\rho_1=\pcost(\mech)$, 
$\sigma_1, \sigma_2,$ $ \cdots \sigma_n$ be the noise scales, $ \mat{B} = diag(\frac{1}{\sigma_1}, \frac{1}{\sigma_2} \cdots, \frac{1}{\sigma_n}) $, $\mat{L} = \mat{Q}\mat{B}^{-1}$ and $\mat{z} \sim N(0, \mat{I}_n)$, then $\mat{L} \mech(\datavec)$ = $\mat{L} (\mat{B} \datavec + \mat{z}) = \mat{Q} \datavec+ \mat{L}\mat{z} $ satisfies $\rho_1$-zCDP.
\end{theorem}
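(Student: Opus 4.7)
The plan is to argue in two moves: first show that the raw mechanism $\mech(\datavec)=\mat{B}\datavec+\mat{z}$ satisfies $\rho_1$-zCDP, and then invoke post-processing (Theorem \ref{thm:post}) to conclude the same for the linear image $\mat{L}\mech(\datavec)$. Concretely, I would view $\mech$ as the Gaussian mechanism of Theorem \ref{thm:gaussian} applied to the vector-valued function $f(\datavec)=\mat{B}\datavec$ with noise scale $\sigma^{2}=1$ on each coordinate, since $\mat{z}\sim N(0,\mat{I}_n)$.

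Next I would compute the $L_2$ sensitivity of $f$ under the per-day-bounded user-level adjacency $\subperdayadj$. This is exactly the content of Theorem \ref{thm:sen}, which gives
\begin{equation*}
\Delta_2(f)=\max_{\vec{d}\in\subperdaydiff}\|\mat{B}\vec{d}\|_2=\sqrt{\tfrac{r_1^{2}}{\sigma_1^{2}}+\cdots+\tfrac{r_n^{2}}{\sigma_n^{2}}}.
\end{equation*}
Plugging this into Theorem \ref{thm:gaussian} with unit Gaussian noise yields that $\mech$ satisfies $\tfrac{\Delta_2^{2}(f)}{2}$-zCDP, which by the definition in Equation \eqref{mech:budget} is precisely $\pcost(\mech)=\rho_1$-zCDP.

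Finally, the map $\vec{y}\mapsto \mat{L}\vec{y}$ is a deterministic function of the mechanism output and does not revisit the sensitive input $\datavec$. Hence Theorem \ref{thm:post} (post-processing for zCDP) immediately implies that $\mat{L}\mech(\datavec)$ also satisfies $\rho_1$-zCDP. The identity $\mat{L}\mech(\datavec)=\mat{L}(\mat{B}\datavec+\mat{z})=(\mat{Q}\mat{B}^{-1})\mat{B}\datavec+\mat{L}\mat{z}=\mat{Q}\datavec+\mat{L}\mat{z}$ is then just algebra and records that the released object is an unbiased linear query answer with correlated Gaussian noise $\mat{L}\mat{z}$.

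I do not anticipate a real obstacle: the only subtlety is confirming that Theorem \ref{thm:sen}'s sensitivity bound is computed against the same adjacency notion $\subperdayadj$ for which we want the privacy guarantee to hold, and that the Gaussian mechanism is being invoked with $\sigma^{2}=1$ (the scaling by $1/\sigma_i$ has been absorbed into $\mat{B}$ and therefore into $\Delta_2(f)$, not into the noise covariance). Once those bookkeeping points are clear, the Gaussian mechanism plus post-processing chain closes the argument in a single line each.
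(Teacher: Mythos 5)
Your proposal is correct and follows essentially the same route as the paper's proof: compute $\Delta_2(f)$ for $f(\datavec)=\mat{B}\datavec$ via Theorem \ref{thm:sen}, invoke the Gaussian mechanism (Theorem \ref{thm:gaussian}) with unit noise to get $\rho_1$-zCDP for $\mat{B}\datavec+\mat{z}$, and close with post-processing (Theorem \ref{thm:post}). Your added remark that the $1/\sigma_i$ scaling lives in $\mat{B}$ (hence in the sensitivity) rather than in the noise covariance is a correct and useful clarification of the same argument.
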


We next show an example of calculating the initial scales. Suppose on each day, the advertiser wants to get the total number of conversions up to today, the query $\query_i$ on day $i$ indicates a prefix sum function on $\datavec$, $\query_i[t]=1$, when $1\leq t \leq i$, otherwise $\query_i[t]=0$.

% \begin{align*}
%     \query_i[t] = \begin{cases}
%         1, & 1\leq t \leq i\\
%         0, & \text{otherwise}
%     \end{cases}.
% \end{align*}

\noindent The advertiser wants to minimize the weighted sum of variances given a fixed privacy budget. The loss function is the weighted sum of variances, $\gamma_i$ is the weight for $\query_i$.

\begin{align*}
    &\loss(\varfun(\query_1;\mech),\dots, \varfun(\query_n;\mech)) 
    = \sum_{i=1}^n \gamma_i^2 \varfun(\query_i;\mech)  \\
    =&\sum_{i=1}^n \left(\gamma_i^2 \sum_{t=1}^{i} \sigma_t^2 \right)
    =\sum_{i=1}^n  \left(\sum_{j=i}^{n} \gamma_j^2 \right) \sigma_i^2.
\end{align*} 

% We aim to minimize the total variance of noise added, for the mechanism defined in Eq. \ref{eq:matrixmech}. Because $\mech(\datavec) = \mat{W} \datavec + \mat{L} \mat{z}$, $\mat{z} \sim N(0, \mat{I}_n)$, the total variance of the noise is $\operatorname{trace}(\mat{L} \mat{L}^T)$. Recall that we calculate the matrix $\mat{L}$ as $\mat{L} = \mat{W} \mat{B}^{-1}$, where $\mat{B}$ is a diagonal matrix. Therefore, the total variance can be represented as,
% \begin{align*}
%     \operatorname{trace}(\mat{L} \mat{L}^T) &= \operatorname{trace}(\mat{W} \mat{B}^{-1} (\mat{W} \mat{B}^{-1})^T  ) \\ 
%     &= \operatorname{trace} (\mat{W}^T \mat{W} (\mat{B}^T \mat{B})^{-1})\\
%     &= n \sigma_1^2 + (n-1) \sigma_2^2 + \cdots + \sigma_n^2.
% \end{align*}
% The last equation holds because $(\mat{W}^T \mat{W})[i, i] = n-i+1, 1\leq i \leq n$. 

\noindent The constraint on privacy budget is $\pcost(\mech) \leq \rho$, or equivalently, $\Delta_2^2(f)\leq 2\rho$. 
Because we want to achieve $\rho$-zCDP while minimizing the total variance, the problem becomes,
\begin{align}
\label{eq:cauchy}
\nonumber
    \min_{\sigma_1^2, \cdots, \sigma_n^2} \quad & \left(\sum_{j=1}^{n} \gamma_j^2 \right) \sigma_1^2 + \left(\sum_{j=2}^{n} \gamma_j^2 \right)\sigma_2^2 + \cdots + \gamma_n^2\sigma_n^2 \\
    s.t. \quad &\frac{r_1^2}{\sigma_1^2} + \frac{r_2^2}{\sigma_2^2} + \cdots + \frac{r_n^2}{\sigma_n^2}  \leq 2\rho. 
\end{align}
Problem (\ref{eq:cauchy}) can be solved analytically using Cauchy inequality \cite{xiao2024optimal}. 
% In Appendix~\ref{app:closed-form}, we obtain the following results:

We give another example in the form of Problem \ref{eqn:accuracyconstrained}. In this example, the advertiser has requirements for the accuracy of each query $\query_j$, the goal is to find a mechanism that has minimal privacy cost.

\begin{align}
\label{eq:max}
    \min_{\sigma_1^2, \cdots, \sigma_n^2} \quad & \frac{r_1^2}{\sigma_1^2} + \frac{r_2^2}{\sigma_2^2} + \cdots + \frac{r_n^2}{\sigma_n^2}  \\
    \nonumber
    s.t. \quad & \varfun(\query_j;\mech) \leq v_j \quad j = 1, \cdots, m. 
\end{align}
As shown in Equation \ref{mech:var}, the variance function is a linear combination of $\sigma_i^2$. Consequently, Problem \ref{eq:max} is a convex optimization problem that can be efficiently solved using any standard solver.

By selecting the initial scales, we maintain flexibility for different types of queries and objective functions. The key decision lies in determining these initial scales. We initialize the user contribution bound as $\bar{r}_i=1$ to compute the initial noise scale $\bar{\sigma}_i$. Once the actual user contribution bound $r_i$ is determined, we update the noise scales using $\sigma_i = \bar{\sigma}_i / \bar{r}_i * r_i$, as shown in Line \ref{line:update} of Algorithm \ref{alg:adsbpc}. This adjustment ensures that the privacy budget of the mechanism $\pcost(\mech)$ remains unchanged. In the following sections, we describe how to adjust the noise scales based on the actual dataset to achieve more accurate measurements.

\subsection{Estimate User Contribution Bound}
\label{sec:quantile}
Selecting an appropriate user contribution bound $r$ involves a trade-off between bias and variance. A small $r$ leads to the removal of numerous records, introducing substantial bias. Conversely, a large $r$ retains most records in the dataset but introduces significant noise into the results, resulting in high variance. To achieve a balance between bias and variance, we employ the differentially private quantile algorithm \cite{gillenwater2021differentially, smith2011privacy} to estimate the contribution bound. Each day, we count the number of contributions per individual and select an appropriate private quantile as the user contribution bound for that day. The differentially private quantile is based on the exponential mechanism \cite{exponentialMechanism}.

\begin{theorem}[Exponential Mechanism \cite{exponentialMechanism,DPorg-exponential-mechanism-bounded-range}] \label{thm:exp} Given utility function $u$ mapping (dataset, output) pairs to real valued scores with $L_1$ sensitivity $\Delta_1(u) = \max_{X \sim X'} |u(X, o) - u(X', o)|$, the Exponential Mechanism $M$ has output distribution $ \mathcal{P}_{M} [M(X) = o] \propto \exp \left( \frac{\epsilon u(X, o)}{2 \Delta_1(u)} \right)$. Here $\propto$ ignores the normalization factor. Then the mechanism $M$ is $\epsilon$-DP, and it satisfies $\frac{1}{8} \epsilon^2$-zCDP.
\end{theorem}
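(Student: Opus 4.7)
The plan is to split the claim into two parts: first the pure $\epsilon$-DP guarantee, and second the $\tfrac{1}{8}\epsilon^2$-zCDP refinement, which will require going through the bounded-range property of the exponential mechanism.

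For the $\epsilon$-DP part, I would fix any neighboring pair $X \sim X'$ and an output $o$, and write
\begin{align*}
\frac{\mathcal{P}_M[M(X)=o]}{\mathcal{P}_M[M(X')=o]} = \frac{\exp\bigl(\epsilon u(X,o)/(2\Delta_1(u))\bigr)}{\exp\bigl(\epsilon u(X',o)/(2\Delta_1(u))\bigr)} \cdot \frac{Z(X')}{Z(X)},
\end{align*}
where $Z(X)=\sum_{o'} \exp(\epsilon u(X,o')/(2\Delta_1(u)))$ is the normalizer. The first factor is bounded by $\exp(\epsilon/2)$ because $|u(X,o)-u(X',o)| \le \Delta_1(u)$. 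Applying the same pointwise bound termwise inside the sum gives $Z(X')/Z(X) \le \exp(\epsilon/2)$. Multiplying the two factors yields the $\exp(\epsilon)$ bound on the density ratio for every $o$, which is exactly $\epsilon$-DP.

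For the $\tfrac{1}{8}\epsilon^2$-zCDP part, I would first show that the exponential mechanism enjoys the bounded range property: for any two outputs $o_1,o_2$,
\begin{align*}
\ln\frac{\mathcal{P}_M[M(X)=o_1]}{\mathcal{P}_M[M(X')=o_1]} - \ln\frac{\mathcal{P}_M[M(X)=o_2]}{\mathcal{P}_M[M(X')=o_2]} \le \epsilon,
\end{align*}
since in the difference the normalizers $Z(X), Z(X')$ cancel and what remains is $\tfrac{\epsilon}{2\Delta_1(u)}\bigl((u(X,o_1)-u(X',o_1)) - (u(X,o_2)-u(X',o_2))\bigr)$, each summand bounded by $\Delta_1(u)$. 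Viewing the privacy loss random variable $Z=\ln(\mathcal{P}_M[M(X)=\cdot]/\mathcal{P}_M[M(X')=\cdot])$ under $M(X)$, the pure DP bound forces $Z \le \epsilon$ almost surely and the bounded-range property forces $\sup Z - \inf Z \le \epsilon$. A standard Hoeffding-type calculation on such a random variable then yields $\mathbb{E}[e^{(\alpha-1)Z}] \le \exp((\alpha-1)\alpha\epsilon^2/8)$ for every $\alpha>1$, which is exactly the definition of $\tfrac{1}{8}\epsilon^2$-zCDP via Rényi divergence.

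The main obstacle is the last step: one needs the sharp constant $\tfrac{1}{8}$ rather than the looser constant one gets from the naive pure-DP-to-zCDP conversion in Theorem \ref{thm:puredp}. The key technical lemma is Hoeffding's inequality, which says that a random variable supported on an interval of length $\ell$ is sub-Gaussian with parameter $\ell^2/4$ about its mean, combined with the observation that the privacy loss $Z$ has mean at most $\epsilon/2$ under $M(X)$ because $\mathbb{E}_{M(X)}[Z] = \mathrm{KL}(M(X)\|M(X'))$ and the symmetric argument bounds this quantity. Carrying out the Rényi divergence computation explicitly with these two inputs is the technical heart of the proof, and it recovers the tight $\tfrac{1}{8}\epsilon^2$ bound.
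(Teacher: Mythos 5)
This theorem is not proved in the paper at all: it is imported verbatim from the cited references (McSherry--Talwar for the $\epsilon$-DP claim and the differentialprivacy.org ``bounded range'' note for the $\tfrac{1}{8}\epsilon^2$-zCDP claim), so there is no in-paper proof to compare against. Judged on its own terms, your reconstruction of the $\epsilon$-DP half is the standard argument and is correct: the unnormalized ratio contributes $e^{\epsilon/2}$, the termwise bound on the normalizers contributes another $e^{\epsilon/2}$, and the product gives $e^{\epsilon}$. Your derivation of the bounded-range property is also correct --- the normalizers cancel in the difference of log-ratios, and each term $u(X,o_j)-u(X',o_j)$ lies in $[-\Delta_1(u),\Delta_1(u)]$, so the privacy loss $Z(o)=\ln\bigl(\mathcal{P}_M[M(X)=o]/\mathcal{P}_M[M(X')=o]\bigr)$ ranges over an interval of length at most $\epsilon$. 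This is exactly the route the cited reference takes.

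The gap is in the last step of the zCDP half. Hoeffding's lemma gives $\mathbb{E}_{M(X)}[e^{(\alpha-1)Z}]\le\exp\bigl((\alpha-1)\mathbb{E}[Z]+(\alpha-1)^2\epsilon^2/8\bigr)$, hence $D_\alpha\le \mathbb{E}[Z]+(\alpha-1)\epsilon^2/8$. To conclude $D_\alpha\le\alpha\epsilon^2/8$ you therefore need $\mathbb{E}_{M(X)}[Z]=\mathrm{KL}(M(X)\,\|\,M(X'))\le\epsilon^2/8$. The bound $\mathbb{E}[Z]\le\epsilon/2$ that you invoke is both unjustified as stated (``the symmetric argument'' is not spelled out) and, more importantly, too weak: plugging it in yields $D_\alpha\le\epsilon/2+(\alpha-1)\epsilon^2/8$, which exceeds $\alpha\epsilon^2/8$ whenever $\epsilon<4$, i.e.\ in essentially every regime of interest. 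The correct way to close this is a second application of the same Hoeffding lemma to the identity $\mathbb{E}_{M(X)}[e^{-Z}]=\sum_o \mathcal{P}_M[M(X')=o]=1$: since $-Z$ is also supported on an interval of length $\epsilon$, this gives $1\le\exp\bigl(-\mathbb{E}[Z]+\epsilon^2/8\bigr)$, i.e.\ $\mathrm{KL}(M(X)\,\|\,M(X'))\le\epsilon^2/8$, which is exactly what the final computation needs. With that substitution your argument is complete and matches the cited source.
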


% \begin{align*}
%     \mathcal{P}_{M} [M(X) = o] \propto \exp \left( \frac{\epsilon u(X, o)}{2 \Delta_1(u)} \right)
% \end{align*}

\noindent The utility function for the quantile is defined as 
\begin{align}
    u(X, o) = -||\{ x \in X | x\leq o \}| - p * k|.
\end{align}
For a given output $o$, $|\{ x \in X | x\leq o \}|$ is the count of records with values less than or equal to $o$. Here, $p$ denotes the quantile percentage, and $k$ is the total number of records. As demonstrated in \cite{gillenwater2021differentially, smith2011privacy}, the sensitivity of the utility function described above is 1. Algorithm \ref{alg:quantile} gives a detailed description of of PrivateQuantile. We can prove that the algorithm satisfies zCDP.

% \begin{algorithm}
% \caption{PrivateQuantile}
% \label{alg:quantile}
% \begin{algorithmic}[1]
% \STATE \textbf{Input:} Dataset $D$; Quantile percentage $p$; privacy parameter $\epsilon$, bounding parameter $\Lambda$.
% \STATE \textbf{Output:} The selected user contribution bound based on $p$ quantile.

% \STATE $X$ = [].

% \FOR {Each user $u_i$ in $D$}
% \STATE $c_i \gets $ the number of records from $u_i$.
% \STATE $X$.append($c_i$).
% \ENDFOR

% \STATE Sort $X$ in ascending order. $X=[c_1, c_2, \cdots, c_k]$.
% \STATE Replace $c_i > \Lambda$ with $\Lambda$. Define $c_0 = 0$ and $c_{k+1}$.

% \FOR { $i=1$ to $k$ }
% \STATE Set $y_i = (c_{i+1} - c_i) \exp(-\epsilon |i - p *k|/2)$.
% \ENDFOR 
% \STATE Sample an integer $i \in \{ 0, \cdots, k \}$ with probability $y_i / (\sum_{i=0}^k y_i)$.
% \STATE $c \gets$ uniformly sampled from $[c_i, c_{i+1}]$.
% \STATE Return $c$.
% \end{algorithmic}
% \end{algorithm}

\begin{theorem}\label{thm:quantile}
Algorithm PrivateQuantile \ref{alg:quantile} satisfies $\rho_2$-zCDP with privacy paramter $\rho_2 = \min(\frac{1}{8} \epsilon^2, \frac{e^{\epsilon} - 1}{e^{\epsilon} + 1} \epsilon)$.
\end{theorem}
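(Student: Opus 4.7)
The plan is to prove the result by recognizing that PrivateQuantile is simply an instance of the Exponential Mechanism with the utility function $u(X,o) = -||\{x \in X \mid x \leq o\}| - p\cdot k|$ and a specific privacy parameter $\epsilon$. Since this utility function has $L_1$ sensitivity $1$ (as cited from \cite{gillenwater2021differentially, smith2011privacy}), I can directly invoke Theorem \ref{thm:exp}, which simultaneously gives two guarantees for the same underlying mechanism: it satisfies $\epsilon$-DP and also $\tfrac{1}{8}\epsilon^2$-zCDP.

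First, I would apply Theorem \ref{thm:exp} in its zCDP form to immediately obtain that PrivateQuantile satisfies $\tfrac{1}{8}\epsilon^2$-zCDP. This handles one side of the minimum. Second, I would invoke the pure-DP part of the same theorem to conclude that PrivateQuantile satisfies $\epsilon$-DP, and then feed this into Theorem \ref{thm:puredp} (Pure DP to zCDP Conversion) to translate the pure DP guarantee into $\tfrac{e^{\epsilon}-1}{e^{\epsilon}+1}\epsilon$-zCDP. Because both guarantees apply to the exact same mechanism and zCDP is monotone (a smaller parameter is a stronger guarantee), the mechanism automatically satisfies $\rho_2$-zCDP with
\[
\rho_2 \;=\; \min\!\left(\tfrac{1}{8}\epsilon^2,\; \tfrac{e^{\epsilon}-1}{e^{\epsilon}+1}\epsilon\right),
\]
which is the claimed bound.

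There is no serious obstacle here; the proof is essentially bookkeeping over previously stated results. The only subtlety worth a sentence is to justify why the two bounds are not redundant: for small $\epsilon$, the quadratic bound $\tfrac{1}{8}\epsilon^2$ dominates, while for larger $\epsilon$ the bound $\tfrac{e^{\epsilon}-1}{e^{\epsilon}+1}\epsilon$ becomes tighter (approaching $\epsilon$ rather than growing like $\epsilon^2/8$), so taking the minimum yields a strictly better guarantee across the full range of $\epsilon$. I would also briefly note that the sensitivity used throughout is the $L_1$ sensitivity $\Delta_1(u) = 1$, so the exponent in the Exponential Mechanism is $\tfrac{\epsilon \cdot u(X,o)}{2}$, matching the setting of Theorem \ref{thm:exp} exactly and making the application direct.
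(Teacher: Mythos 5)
Your proposal is correct and follows essentially the same route as the paper's proof: invoke Theorem \ref{thm:exp} for the $\tfrac{1}{8}\epsilon^2$-zCDP and $\epsilon$-DP guarantees, convert the latter via Theorem \ref{thm:puredp}, and take the minimum of the two zCDP parameters. The extra remarks on sensitivity and on when each bound dominates are sound but not needed beyond what the paper states.
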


% \begin{proof}
%     \cite{DPorg-exponential-mechanism-bounded-range}
% \end{proof}

\subsection{Track the Change of Contribution Bound}
\label{sec:svt}

If we run the private quantile algorithm every day, the privacy cost will be significant. We observe that the user contribution bound is stable most of the time, which means that the everyday estimate for the quantile is not necessary. If we keep the bound as a constant and only update it when there's a significant change, it's possible to save a large amount of privacy budget. In case when the contribution bound changes a lot in one day, we use Sparse Vector Technique (SVT) \cite{lyu2017understanding,zhu2020improving,ding2023free} to track such a change. We only need to report the user contribution bound when such a change is detected.

\begin{algorithm}
\caption{CheckUpdate}
\begin{algorithmic}[1]
\label{alg:check}
\STATE \textbf{Input:} Query $q_i$; Privacy budget $\epsilon$; Number of reports $\text{count}$; Maximum number of reports $k$; Noisy threshold $\tilde{T}$; True threshold $T$.
\STATE \textbf{Output:} 

\STATE update = False. 
\IF{$\text{count} < k$} 
   \STATE  \COMMENT{Stop if maximum reports reached.}
\STATE Get noisy query: $\tilde{q}_i = q_i + \text{Lap}(4k/\epsilon)$.
\IF{$\tilde{q}_i > \tilde{T}$}
    \STATE update = True.
    \STATE $\text{count} = \text{count}  + 1$.
    \STATE $\tilde{T} = T+ \text{Lap}(2/\epsilon)$.
\ENDIF
\ENDIF

\STATE \textbf{Return} (update, count, $\widetilde{T}$)
\end{algorithmic}
\end{algorithm}

\begin{algorithm}
\caption{UpdateBoundSVT}
\begin{algorithmic}[1]
\label{alg:svt}
\STATE \textbf{Input:}  Dataset $D_i$.
\STATE \textbf{Parameter:} Privacy budget $\epsilon$; Maximum number of reports $k$; Threshold $T^{\uparrow}, T^{\downarrow}$; Scales $\gap^{\uparrow}, \gap^{\downarrow}$
\STATE \textbf{Output:} The user contribution bound $r_i$.
\STATE \textbf{Global Variable:} Noisy threshold $\widetilde{T}^{\uparrow}, \widetilde{T}^{\downarrow}$;
The number of reports $\text{count}^{\uparrow}, \text{count}^{\downarrow}$; Previous bounds: bound\_list;

\STATE Split $\epsilon$ into $\epsilon^{\uparrow} = \epsilon^{\downarrow}= \frac{\epsilon}{2}$.

\STATE $\tau_i$ = mean(bound\_list[-$l$:]).
% \STATE \COMMENT{Check if the bound increases.}
\STATE $q^{\uparrow}_i = \abovet(D_i, \tau_i) $.
\STATE (is\_up, $\text{count}^{\uparrow}$, $\widetilde{T}^{\uparrow}$) \\= CheckUpdate($q^{\uparrow}_i$, $\epsilon^{\uparrow}$, $\text{count}^{\uparrow}$, $k$, $\widetilde{T}^{\uparrow}$, $T^{\uparrow}$).

% \STATE \COMMENT{Check if the bound decreases.}
\STATE $q^{\downarrow}_i =  \abovet(D_i, \tau_i) - \abovet(D_i, \tau_i * \gap^{\downarrow})$.
\STATE (is\_down, $\text{count}^{\downarrow}$, $-\widetilde{T}^{\downarrow}$) \\= CheckUpdate($q^{\downarrow}_i$, $\epsilon^{\downarrow}$, $\text{count}^{\downarrow}$, $k$, $-\widetilde{T}^{\downarrow}$, $-T^{\downarrow}$).

\IF{is\_up == is\_down}
\STATE $r_i = \tau_i$. 
\ELSIF{is\_up} \STATE $r_i = \tau_i * \gap^{\uparrow}$.
\ELSIF{is\_down} \STATE $r_i = \tau_i * \gap^{\downarrow}$.
\ENDIF

\STATE bound\_list.append($r_i$).
\STATE Return $r_i$.
\end{algorithmic}
\end{algorithm}

Define $\abovet(D_i, \tau_i)$ as the number of people who has more than $\tau_i$ contributions on day $i$. We can run two parallel SVT to check if there's a need for updating the contribution bound. One SVT checks if it's necessary to increase the bound, while the other one checks if it's better to decrease the bound. Algorithm \ref{alg:check} and Algorithm \ref{alg:svt} outline the details of the SVT. The three primary components are discussed below.

\noindent \textbf{Check if the Bound Increases} In the first SVT, we use the query function $q^{\uparrow}_i = \abovet(D_i, \tau_i) $, here $\tau_i$ is the default contribution bound. We want to check if $\tau_i * \gap^{\uparrow}$ is a better contribution bound, here $\gap^{\uparrow} > 1$ is a scaling parameter. If $q^{\uparrow}_i$ is larger than a predefined threshold $T^{\uparrow}$, it means that updating $\tau * \gap^{\uparrow}$ as the contribution bound can reduce the bias. Therefore we use $\tau_i * s^{\uparrow}$ as the new contribution bound on day $i$. Otherwise, we use $\tau_i$ as the contribution bound.

\noindent \textbf{Check if the Bound Decreases} In the second SVT, we use the query function $q^{\downarrow}_i =  \abovet(D_i, \tau_i) - \abovet(D_i, \tau_i * \gap^{\downarrow})$, here $\tau_i$ is the default contribution bound, and $\gap^{\downarrow} < 1$. We want to check if $\tau_i * \gap^{\downarrow}$ is a better contribution bound. If $q_i$ is greater than the threshold $-T^{\downarrow}$ (or equivalent, $-q_i \leq T^{\downarrow}$), it means that updating $\tau * \gap^{\downarrow}$ as the contribution bound can reduce the variance, without introducing too much bias. Therefore we use $\tau_i * s^{\downarrow}$ as the new contribution bound on day $i$. Otherwise, we use $\tau_i$ as the contribution bound.

\noindent \textbf{Update the Bound} For the first $l$ days, we use differentially private quantile algorithm to calculate the private contribution bound as $\tau_i$. On $i > l$ we update the default bound as the average of previous $l$ bounds $\tau_i = \frac{1}{l}\sum_{j=i-l}^{i-1} \tau_{j}$. If both SVTs attempt to update the bound simultaneously, with one trying to increase it and the other trying to decrease it, we consider this an invalid attempt. In that case, we will use the default bound. 

The detailed description of UpdatedBoundSVT is shown in Algorithm \ref{alg:svt}, we can prove that it satisfies zCDP. 

\begin{theorem}
    \label{thm:svt}
    Algorithm UpdateBoundSVT \ref{alg:svt} satisfies $\rho_3$-zCDP with $\rho_3 = \frac{e^{\epsilon} - 1}{e^{\epsilon} + 1} \epsilon$.
\end{theorem}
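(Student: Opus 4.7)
The plan is to reduce the zCDP guarantee to pure $\epsilon$-DP for the full Algorithm~\ref{alg:svt}, then invoke Theorem~\ref{thm:puredp} to convert $\epsilon$-DP to $\rho_3$-zCDP with $\rho_3 = \tfrac{e^{\epsilon}-1}{e^{\epsilon}+1}\epsilon$. Thus I would split the proof into (i) a sensitivity analysis of the two above-threshold queries used inside \textsc{CheckUpdate}, (ii) a privacy analysis of each of the two SVT streams individually with their respective budgets $\epsilon^{\uparrow}=\epsilon^{\downarrow}=\epsilon/2$, and (iii) a composition and post-processing argument that collapses everything down to $\epsilon$-DP.

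For step (i), I would verify that both queries fed into \textsc{CheckUpdate} have $L_1$ sensitivity at most $1$ under the user-level neighboring notion. For the upward query $q_i^{\uparrow}=\abovet(D_i,\tau_i)$ this is immediate: substituting all records of a single user can change whether exactly one user is counted as being above $\tau_i$. For the downward query $q_i^{\downarrow}=\abovet(D_i,\tau_i)-\abovet(D_i,\tau_i\cdot s^{\downarrow})$ with $s^{\downarrow}<1$, note that for any fixed user the two counts change in the same direction (if the user drops out of the first count, they also drop out of the second, since $\tau_i\cdot s^{\downarrow}\le \tau_i$), so the difference shifts by at most $1$. In both cases the default bound $\tau_i$ itself is the empirical mean of previously released bounds stored in \texttt{bound\_list}, so it is a post-processing of earlier outputs and does not itself cost privacy.

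For step (ii), I would recognize that \textsc{CheckUpdate} is exactly the standard Sparse Vector Technique instantiated with threshold noise $\mathrm{Lap}(2/\epsilon')$ and per-query noise $\mathrm{Lap}(4k/\epsilon')$, allowing up to $k$ above-threshold reports. By the classical SVT analysis (e.g.~\cite{lyu2017understanding}), this satisfies $\epsilon'$-DP when each query has sensitivity at most $1$. Applied with $\epsilon' = \epsilon^{\uparrow}=\epsilon/2$ and $\epsilon' = \epsilon^{\downarrow}=\epsilon/2$ to the streams of $q^{\uparrow}_i$'s and $q^{\downarrow}_i$'s respectively, each SVT stream over the full run of Algorithm~\ref{alg:svt} is $(\epsilon/2)$-DP on its own.

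For step (iii), I would compose the two SVT streams sequentially: since $\epsilon$-DP is closed under sequential composition (or equivalently using Theorem~\ref{thm:seq} after converting to zCDP and back), the joint release of all flags \texttt{is\_up}, \texttt{is\_down}, and the refreshed noisy thresholds is $(\epsilon/2+\epsilon/2)=\epsilon$-DP. The final returned bound $r_i \in \{\tau_i, \tau_i s^{\uparrow}, \tau_i s^{\downarrow}\}$ and its appending to \texttt{bound\_list} are deterministic functions of these releases together with prior outputs, hence a pure post-processing step that preserves $\epsilon$-DP by Theorem~\ref{thm:post}. Finally, applying Theorem~\ref{thm:puredp} to the whole mechanism yields $\rho_3$-zCDP with $\rho_3=\tfrac{e^{\epsilon}-1}{e^{\epsilon}+1}\epsilon$, as claimed. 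The one place I expect to need care is the sensitivity verification for $q^{\downarrow}_i$, since it is a difference of two counts; the key observation that both counts move monotonically in the same direction per user is what keeps the sensitivity at $1$ rather than $2$, and this is the only nontrivial step — everything else is a direct appeal to the standard SVT guarantee, sequential composition, post-processing, and Theorem~\ref{thm:puredp}.
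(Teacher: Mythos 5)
Your proposal is correct and follows essentially the same route as the paper: establish $L_1$ sensitivity $1$ for both $q^{\uparrow}_i$ and $q^{\downarrow}_i$, invoke the standard SVT guarantee to get $(\epsilon/2)$-DP per stream, compose to $\epsilon$-DP, and convert via Theorem~\ref{thm:puredp}. The only cosmetic difference is that you argue the sensitivity of $q^{\downarrow}_i$ via the per-user monotonicity of the two indicator counts, whereas the paper's Lemma~\ref{lemma:l1sen} does an explicit case analysis over where the swapped users' contribution counts fall; both yield the same bound.
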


\begin{table}
\centering
\begin{tabular}{|c|c|c|c|c|c|c|c|}
    \hline
    $\tau$ & 1 & 2 & 5 & 8 & 10 & 13 & 20 \\
    \hline
    Hist[$D, \tau$] & 1512 & 208 & 53 & 50 & 25 & 12 & 3 \\ 
    \hline
    $\abovet(D, \tau)$ & 351 & 143 & 90 & 40 & 15 & 3 & 0\\ 
    \hline  
\end{tabular}
\caption{Statistics of an example dataset $D$. $\tau$ is the contribution bound. Hist$[D, \tau]$ counts the number of people with $\tau$ contributions. $\abovet(D, \tau)$ counts the number of people with more than $\tau$ contributions}
\label{tab:svt}
\end{table}

\noindent \textbf{An Example for SVT} To illustrate how SVT can detect changes in user contribution bounds, we use an example dataset $D$. For simplicity and clarity, noise is not added to the queries or thresholds in this demonstration. Table \ref{tab:svt} shows the statistics of the example dataset $D$. Suppose the default contribution bound is $\tau = 10$, the gap parameter is $\gap^{\uparrow} = 1.5$ and the threshold for SVT is $T^{\uparrow} = 100$. We want to check if $\tau * \gap^{\uparrow} = 15$ is a better bound. The query value is $q^{\uparrow} = \abovet(D, \tau) = \abovet(D, 10)  = 15$. Because the number of people with contributions above 10 is small (15 is smaller than the threshold $T^{\uparrow}=100$), increasing the default contribution bound can't reduce enough bias, we don't want to increase the bound. 

Now let's check if decreasing the default bound is better. The default contribution bound is $\tau = 10$, suppose the gap parameter is $\gap^{\downarrow} = 0.5$ and the threshold for SVT is $T^{\downarrow} = 100$. We want to check if $\tau * \gap^{\downarrow} = 5$ is a better bound. The query value is $q^{\downarrow} = \abovet(D, \tau) - \abovet(D, \tau *\gap^{\downarrow}) = \abovet(D, 10) - \abovet(D, 5) = -75$. Because the number of people with contributions between $[5, 10]$ is small (75 is smaller than the threshold $T^{\downarrow}=100$), decreasing the default contribution bound can reduce enough variance, we can decrease the bound. 

% \begin{proof}
%     Sequential composition for PrivateQuantile \ref{alg:quantile} and UpdateBoundSVT \ref{alg:svt}. Adaptive composition for adding noise \cite{denisov2022improved}.
% \end{proof}
\subsection{Extension for Multiple Publishers}\label{sec:compose}
The previous sections assume that only one publisher is involved. In practice, multiple publishers often run the ad campaign. In this section, we demonstrate that it is straightforward to extend the previous algorithm to accommodate the multi-publisher case.

Suppose there are $k$ publishers, the data vector for publisher $i$ is $\datavec_i \in \mathbb{R}^{n\times1}$. Let $\mat{X} \in \mathbb{R}^{n\times k}$ be the concatenation of these $k$ data vectors $\datavec_i$. Similar to Section \ref{sec:init}, the goal is to find a private answer to $\mat{Q} \mat{X}$. We factorize the workload matrix $\mat{Q} = \mat{L} \mat{B}$, where $\mat{B} = diag(\frac{1}{\sigma_1}, \frac{1}{\sigma_2} \cdots, \frac{1}{\sigma_n})$ and $\mat{L} = \mat{Q} \mat{B}^{-1}$. Let $\mat{Z} \sim MVG_{n, k}(0, \mat{I}_{ n}, \mat{I}_k)$, where $MVG_{n, k}(0, \mat{I}_{ n}, \mat{I}_k)$ is a Matrix Gaussian Distribution \cite{chanyaswad2018mvg, yang2021improved} (which generates $N(0, 1)$ noise in each entry of $\mat{Z}$), then the mechanism is 

\begin{align}
\label{eq:mat-pub}
    \mat{L}\mech(\mat{X}) = \mat{L} (\mat{B} \mat{X} + \mat{Z}) = \mat{Q} \mat{X}+ \mat{L}\mat{Z}.
\end{align}

Given the constraint that each user can contribute at most $r_i$ records on day $i$ ($1 \leq i \leq n$), the user-level neighboring dataset with bounded per-day contributions, denoted as $\subperdayadj$, defines a pair of datasets $(\mat{X}, \tilde{\mat{X}})$ as belonging to $\subperdayadj$ if $\tilde{\mat{X}}$ can be obtained from $\mat{X}$ by substituting all records (with a maximum of $r_i$ records on day $i$) of one user from $\mat{X}$. The diffset of $\subperdayadj$ can be expressed as $\subperdaydiff = \{\vec{D} \in \mathbb{R}^{n\times k}: 0 \leq \|\vec{D}[i, :]\|_1 \leq r_i, 1\leq i\leq n\}$.

The sensitivity calculation is similar, given that the user contribution for each day is at most $r_i$ (across all publishers), 
\begin{theorem}[Sensitivity of Mechanism \ref{eq:mat-pub}]
\label{thm:sen-pub}
The \textbf{sensitivity} of the mechanism \ref{eq:mat-pub} under the definition of $\subperdayadj$ is
\begin{align}
% \nonumber
\Delta_2(f) = \max_{(\mat{X}, \tilde{\mat{X}}) \in \subperdayadj} \|\mat{B} \mat{X} - \mat{B} \tilde{\mat{X}} \|_F 
% = \max_{\vec{d} \in \subperdaydiff} \| \mat{B} \vec{d}\|_2\\
= \sqrt{\frac{r_1^2}{\sigma_1^2} + \cdots + \frac{r_n^2}{\sigma_n^2} }.
\end{align}
\end{theorem}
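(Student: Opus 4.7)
The plan is to reduce the Frobenius-norm sensitivity to a per-row optimization that decouples across days, and then apply a standard $L_1$ vs $L_2$ inequality to obtain a tight bound. First I would let $\mat{D} = \mat{X} - \tilde{\mat{X}}$ and observe that because $(\mat{X},\tilde{\mat{X}}) \in \subperdayadj$ corresponds to substituting all records of a single user, $\mat{D}$ lies in the diffset $\subperdaydiff$, so its $i$-th row satisfies $\|\mat{D}[i,:]\|_1 \leq r_i$. The sensitivity is then $\max_{\mat{D} \in \subperdaydiff} \|\mat{B}\mat{D}\|_F$.

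Next I would exploit the diagonal structure of $\mat{B}$. Since $(\mat{B}\mat{D})[i,j] = \mat{D}[i,j]/\sigma_i$, the squared Frobenius norm decomposes as
\begin{equation*}
\|\mat{B}\mat{D}\|_F^2 \;=\; \sum_{i=1}^n \frac{1}{\sigma_i^2} \sum_{j=1}^k \mat{D}[i,j]^2 \;=\; \sum_{i=1}^n \frac{\|\mat{D}[i,:]\|_2^2}{\sigma_i^2}.
\end{equation*}
Because the constraints in $\subperdaydiff$ are separable across rows, the maximization reduces to independently maximizing $\|\mat{D}[i,:]\|_2^2$ subject to $\|\mat{D}[i,:]\|_1 \leq r_i$ for each $i$.

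Now I would apply the elementary inequality $\|v\|_2 \leq \|v\|_1$, valid for any real vector $v$, which yields $\|\mat{D}[i,:]\|_2^2 \leq \|\mat{D}[i,:]\|_1^2 \leq r_i^2$. Summing over $i$ gives $\|\mat{B}\mat{D}\|_F^2 \leq \sum_{i=1}^n r_i^2/\sigma_i^2$, so $\Delta_2(f) \leq \sqrt{\sum_i r_i^2/\sigma_i^2}$.

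For tightness I would exhibit a concrete neighboring pair attaining the bound: take $\mat{D}$ such that $\mat{D}[i,1] = r_i$ and $\mat{D}[i,j] = 0$ for $j \neq 1$. This satisfies $\|\mat{D}[i,:]\|_1 = r_i$ and $\|\mat{D}[i,:]\|_2 = r_i$, so equality holds throughout. Concretely, this corresponds to a user who contributed exactly $r_i$ records on day $i$ all attributed to a single publisher; such a user is allowed under the per-day contribution bound. The main conceptual step is recognizing that the worst case concentrates all of a user's allowed daily contribution on one publisher rather than spreading it across publishers, which is precisely the content of the $\ell_1 \to \ell_2$ inequality being tight at 1-sparse vectors; no nontrivial obstacle arises, and the proof should mirror the single-publisher argument of Theorem~\ref{thm:sen} almost verbatim, with the extra column index absorbed by the row-wise $L_1$ constraint.
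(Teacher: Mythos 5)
Your proof follows essentially the same route as the paper's: both decompose $\|\mat{B}\mat{D}\|_F^2$ row-wise into $\sum_i \|\mat{D}[i,:]\|_2^2/\sigma_i^2$ and then apply $\|v\|_2 \le \|v\|_1 \le r_i$ to each row. The one difference is that you additionally exhibit the $1$-sparse witness $\mat{D}[i,1]=r_i$ achieving equality, which the paper's proof omits (it only establishes the upper bound even though the theorem asserts equality), so your version is in fact slightly more complete.
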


Then similar to Theorem \ref{thm:dp}, we can show that the mechanism \ref{eq:mat-pub} also satisfies zCDP,

\begin{theorem}
\label{thm:dp-pub}
Let $\mat{Q} \in \mathbb{R}^{m\times n}$ be the input Query Workload matrix. For Mechanism defined in \ref{eq:mat-pub}, let $\rho_1=\pcost(\mech)$, 
$\sigma_1, \sigma_2,$ $ \cdots \sigma_n$ be the noise scales, $ \mat{B} = diag(\frac{1}{\sigma_1}, \frac{1}{\sigma_2} \cdots, \frac{1}{\sigma_n}) $, $\mat{L} = \mat{Q}\mat{B}^{-1}$ and $\mat{Z} \sim MVG_{n, k}(0, \mat{I}_{n}, \mat{I}_k)$, then $\mat{L}\mech(\mat{X})$ = $\mat{L} (\mat{B} \mat{X} + \mat{Z}) = \mat{Q} \mat{X}+ \mat{L}\mat{Z} $ satisfies $\rho_1$-zCDP.
\end{theorem}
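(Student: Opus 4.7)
The plan is to mirror the single-publisher argument of Theorem \ref{thm:dp}, replacing vector norms with Frobenius norms where needed, and then finish with post-processing. The key insight is that although $\mat{X}$ is a matrix rather than a vector, the entries of $\mat{Z}$ are i.i.d.\ standard Gaussians, so the mechanism $\mech(\mat{X}) = \mat{B}\mat{X} + \mat{Z}$ is, after vectorization, exactly an instance of the Gaussian mechanism from Theorem \ref{thm:gaussian}.

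First I would invoke Theorem \ref{thm:sen-pub}, which gives the sensitivity of the map $f(\mat{X}) = \mat{B}\mat{X}$ under the diffset $\subperdaydiff$ as
\begin{align*}
\Delta_2(f) \;=\; \sqrt{\frac{r_1^2}{\sigma_1^2} + \cdots + \frac{r_n^2}{\sigma_n^2}}.
\end{align*}
Next, I would observe that writing $\mech(\mat{X}) = f(\mat{X}) + \mat{Z}$ with $\mat{Z}$ having independent $N(0,1)$ entries is, after stacking the columns of $\mat{X}$ and $\mat{Z}$ into vectors in $\mathbb{R}^{nk}$, the Gaussian mechanism applied to a vector-valued function of sensitivity $\Delta_2(f)$ (the Frobenius norm of a matrix equals the $L_2$ norm of its vectorization). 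Hence by Theorem \ref{thm:gaussian} with $\sigma^2 = 1$, the mechanism $\mech$ satisfies $\tfrac{\Delta_2(f)^2}{2}$-zCDP, which by Equation \ref{mech:budget} equals $\pcost(\mech) = \rho_1$.

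Finally, the released quantity is $\mat{L}\mech(\mat{X})$, i.e.\ a deterministic linear transformation of $\mech(\mat{X})$. By the post-processing property of zCDP (Theorem \ref{thm:post}), applying $\mat{L}$ to the output preserves the $\rho_1$-zCDP guarantee, which completes the proof.

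The main (mild) obstacle is justifying that the matrix Gaussian noise $\mat{Z} \sim MVG_{n,k}(0, \mat{I}_n, \mat{I}_k)$ behaves correctly under the sensitivity bound; this reduces to noting that with both Kronecker-factor covariances equal to the identity, the entries of $\mat{Z}$ are i.i.d.\ $N(0,1)$, so vectorization yields a standard isotropic Gaussian of the right dimension and the Frobenius-norm sensitivity from Theorem \ref{thm:sen-pub} plugs directly into Theorem \ref{thm:gaussian}. Everything else is a direct port of the single-publisher proof.
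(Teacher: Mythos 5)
Your proposal is correct and follows essentially the same route as the paper's own proof: invoke Theorem \ref{thm:sen-pub} for the Frobenius-norm sensitivity, flatten $\mat{B}\mat{X}$ and $\mat{Z}$ into vectors of length $nk$ so that Theorem \ref{thm:gaussian} applies directly, and finish by post-processing with $\mat{L}$. Your explicit remark that $MVG_{n,k}(0,\mat{I}_n,\mat{I}_k)$ with identity Kronecker factors has i.i.d.\ $N(0,1)$ entries simply makes precise the vectorization step the paper states more tersely.
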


\subsection{Combining Together}

For Algorithm \ref{alg:quantile} and Algorithm $\ref{alg:svt}$, we convert them from $\epsilon$-DP to $\rho$-zCDP using Theorem \ref{thm:puredp}. We prove in the Theorem \ref{thm:adsbpc} that \adsmech satisfies $\rho$-zCDP. 

\begin{theorem}
\label{thm:adsbpc}
    Algorithm AdsBPC \ref{alg:adsbpc} satisfies $\rho$-zCDP, with parameter $\rho = \rho_1 + l*\rho_2 + \rho_3$.
\end{theorem}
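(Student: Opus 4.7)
The plan is to decompose Algorithm \ref{alg:adsbpc} into its three privacy-consuming subroutines and combine their budgets via sequential (adaptive) composition of zCDP (Theorem \ref{thm:seq}). The three components are: (i) the $l$ invocations of PrivateQuantile on days $1,\dots,l$, each of which is $\rho_2$-zCDP by Theorem \ref{thm:quantile} and therefore contributes a total of $l\rho_2$; (ii) the single execution of UpdateBoundSVT across the remaining days $l+1,\dots,n$, which is $\rho_3$-zCDP by Theorem \ref{thm:svt}; and (iii) the correlated Gaussian release $\tilde{\datavec}[i] = \datavec[i] + N(0,\sigma_i^2)$ for $i=1,\dots,n$, which together with the deterministic linear transform $\mat{L}$ satisfies $\rho_1$-zCDP by Theorem \ref{thm:dp}. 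Summing gives the claimed $\rho = \rho_1 + l\rho_2 + \rho_3$.

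The main obstacle I anticipate is justifying that the Gaussian step really costs only $\rho_1$ even though its noise scales $\sigma_i = \bar{\sigma}_i r_i/\bar{r}_i$ are data-dependent through the privately estimated bounds $r_i$. The key observation is that the rescaling in Line \ref{line:update} is chosen precisely so that the sensitivity in Theorem \ref{thm:sen}, namely $\sqrt{\sum_i r_i^2/\sigma_i^2}$, telescopes to $\sqrt{\sum_i \bar{r}_i^2/\bar{\sigma}_i^2}$, which depends only on the a priori quantities $\bar{r}_i,\bar{\sigma}_i$ fixed before seeing the data. Consequently, conditioned on the transcript of the PrivateQuantile and UpdateBoundSVT subroutines (which fixes the $r_i$'s and hence the $\sigma_i$'s), the Gaussian release satisfies $\rho_1$-zCDP with exactly the same constant as in Theorem \ref{thm:dp}. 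Clipping $\datavec[i] = \text{Clip}(D_i,r_i)$ is then a deterministic function of $D_i$ given $r_i$, so it enters the analysis only through the per-day bound that already appears in the definition of $\subperdayadj$.

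With this invariance in hand, I would invoke adaptive composition: since each subroutine's zCDP parameter is set in advance and does not depend on the realized data, the sequential (possibly adaptive) composition of the quantile releases, the SVT release, and the Gaussian release is $(\rho_1 + l\rho_2 + \rho_3)$-zCDP. Finally, returning each query answer $y_j = \query_j^\top \tilde{\datavec}$, as well as the stored bound list, is a deterministic post-processing of the released quantities and incurs no additional privacy cost by Theorem \ref{thm:post}, completing the argument.
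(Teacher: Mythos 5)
Your proposal is correct and follows essentially the same route as the paper's proof: decompose into the $l$ PrivateQuantile calls, the single UpdateBoundSVT run, and the Gaussian release, observe that the rescaling $\sigma_i = \bar{\sigma}_i r_i/\bar{r}_i$ keeps the sensitivity $\sqrt{\sum_i r_i^2/\sigma_i^2}$ fixed at its data-independent initial value so the noise-addition step still costs only $\rho_1$, and sum the budgets by sequential composition. Your write-up is in fact somewhat more careful than the paper's, since you explicitly condition on the transcript that determines the $r_i$'s to justify adaptive composition and invoke post-processing for the query answers, both of which the paper leaves implicit.
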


% \section{Algorithm}\label{sec:method}
% \input{method}

\section{Discussion}\label{sec:discuss}

In AdsBPC, we add independent noise to the data vector with different noise scales. A natural question arises: why not use the matrix mechanism to optimize these noise scales? Our reasons are three-fold. \textbf{Complexity}: Solving the exact matrix mechanism under user-level differential privacy is NP-hard \cite{choquette2022multi}. We compare AdsBPC with two variants of the matrix mechanism that approximate the user-level constraint in our experiments. AdsBPC consistently performs better, while also being simpler—eliminating the need for matrix optimization. \textbf{Optimality for Small $n$}: When the number of days $n$ is small and the objective function is the root mean squared error, we calculated the optimal matrix factorization using the code in \cite{choquette2022multi}. Surprisingly, the resulting strategy matrix is diagonal, yielding the same objective value as AdsBPC. See the Appendix \ref{app:opt} for detailed numerical results. Whether AdsBPC achieves optimal noise scales in general remains a question for future research. \textbf{Flexibility}: The matrix mechanism is designed to minimize the sum of variances, limiting it to a single objective function. AdsBPC, on the other hand, offers greater flexibility to tailor to various objectives.

% \begin{enumerate}
%     \item \textbf{Complexity}: Solving the exact matrix mechanism under user-level differential privacy is NP-hard \cite{choquette2022multi}. We compare AdsBPC with two variants of the matrix mechanism that approximate the user-level constraint in our experiments. AdsBPC consistently performs better, while also being simpler—eliminating the need for matrix optimization.
%     \item \textbf{Optimality for Small $n$}: When the number of days $n$ is small and the objective function is the root mean squared error, we calculated the optimal matrix factorization using the code in \cite{choquette2022multi}. Surprisingly, the resulting strategy matrix is diagonal, yielding the same objective value as AdsBPC. See the Appendix for detailed numerical results. Whether AdsBPC achieves optimal noise scales in general remains a question for future research.
%     \item \textbf{Flexibility}: The matrix mechanism is designed to minimize the sum of variances, limiting it to a single objective function. AdsBPC, on the other hand, offers greater flexibility, allowing us to tailor it to various objectives.
% \end{enumerate}

{

Next, we discuss the differences in DP-quantile usage between DP-SGD \cite{abadi2016deep} and advertising systems. While DP-quantile has been applied to ML training \cite{andrew2021differentially}, the advertising scenario presents distinct challenges. In ML training, DP-quantile primarily determines the gradient norm, and \cite{andrew2021differentially} shows that even a minimal privacy budget can yield benefits. However, in advertising, conversion measurement accuracy is highly sensitive to the user contribution bound. A low privacy budget for estimating this bound introduces significant noise. If the bound is too small, excessive data cutoff causes high bias; if too large, increased noise leads to higher variance. Thus, allocating a sufficient privacy budget for DP-quantile is crucial. However, daily application depletes the budget. To mitigate this, we exploit the stability of user contribution quantiles, updating them only when significant changes occur. Our novel SVT-based approach tracks users exceeding a threshold, reducing budget consumption compared to daily DP-quantile updates.

Lastly, we distinguish prior work on DP-SGD using matrix mechanisms \cite{choquette2024amplified,choquette2022multi,denisov2022improved}, emphasizing the fundamental differences between ML training and ads measurement, particularly in how user-level privacy and neighboring datasets are defined. In \cite{choquette2024amplified}, a neighboring dataset is defined such that two samples from the same user must be separated by at least b steps. This constraint is valid in ML training, where sample order can be controlled. However, in ads measurement, this constraint does not hold, as users may have consecutive conversions, and the order of their data cannot be controlled. As noted in \cite{choquette2022multi}, solving a user-level DP problem without specific ordering constraints—such as in our case—is NP-hard. Both \cite{choquette2024amplified} and \cite{choquette2022multi} introduce ordering constraints to make the matrix mechanism feasible. It is also important to note that \cite{denisov2022improved} does not implement a user-level DP mechanism. The key intuition behind our approach comes from the unique characteristics of ads measurement. Unlike ML training, where updates depend on previous iterations, user contributions on different days can be independent—i.e., for a random user, the probability of contributing each day is equal. Furthermore, the sensitivity in our setting differs from that in ML training due to the distinct definitions of neighboring datasets. In our setting, a neighboring dataset is defined by replacing a user's conversions, with a maximum of r conversions allowed per day. This difference in sensitivity makes our approach specifically tailored to ads measurement.

}

\section{Experiments}\label{sec:experiments}

Throughout the experiments, we denote the length of the data vector as $n = len(\datavec)$. We repeat our experiments 10 times and report the average results. All experiments are conducted on a desktop with an 8-core 16-thread 3.6GHz Intel i9-9900K CPU and 32 GB memory.

\subsection{Query and Objective Function}
\label{sec:scene}
We would like to test the performance of each method under two scenario.

\noindent \textbf{Scenario 1.} The advertiser aims to track the trend of conversions over time. Each day, they seek a privacy-preserving estimate of the cumulative conversions up to that day, which corresponds to a prefix-sum query. Additionally, they prioritize accuracy in the final day’s estimate, so the error metric used is the weighted root mean square error (WRMSE), with a higher weight assigned to the last day's query. In this setting, the initialization of AdsBPC follows Problem \ref{eq:cauchy}, we set $\gamma_1 = \cdots =\gamma_{n-1} = 1$ and $\gamma_n = 7$. 

\noindent \textbf{Scenario 2.} The advertiser is interested in the total number of conversions over the past $K$ days, represented by a sliding window query. Each query is treated as equally important, and the error metric used is the maximum variance. To initialize AdsBPC in this context, we follow Problem \ref{eq:max} and set $v_1 = \cdots = v_{n} = 1$. For comparison with other methods, we scale the noise parameters to ensure that the total privacy budget $\rho$ is consistent across all methods. We then compare the maximum variances across all queries for each method.

\subsection{Baseline Methods}

We compare our algorithm (named \textbf{\matmech}) with other baselines. Unless otherwise specified, we allocate a total privacy budget of $\rho_{total} = 1$. Some baseline methods require bounding the global sensitivity. For each dataset, we assume the global sensitivity is given as $GS$. If the global sensitivity is unknown, we can set a reasonably large $GS$ as a threshold. For any user with more than $GS$ records in the dataset, only the first $GS$ records are retained, and any additional records are removed. It is important to note that our algorithm \matmech does not require knowledge of $GS$.

\noindent\textbf{\idglobal} \cite{case2023interoperable}: 
The Interoperable Private Attribution (IPA) protocol leverages a combination of multi-party computation and differential privacy to facilitate the processing of individuals' data while ensuring that only aggregated measurements are disclosed. In this paper, our primary focus is on comparing how IPA achieves differential privacy, assuming the presence of a {trusted third party} that provides the joined table. IPA adds independent and identically distributed (IID) noise to the data vector $\datavec$, we use the group privacy property \ref{thm:group} to make sure it satisfies user-level neighboring datasets with a global contribution constraint of at most $GS$.

\noindent\textbf{\bintree}: \bintree adds IID noise to a binary tree structure, we use the group privacy property \ref{thm:group} to make sure it satisfies user-level neighboring datasets with a global contribution constraint of at most $GS$.

\noindent\textbf{\bincrop} \cite{dong2023continual}: Similar to \bintree, \bincrop adds IID noise to a binary tree structure. Rather than using a noise scale proportional to the global sensitivity $GS$, \bincrop employs the Sparse Vector Technique (SVT) to privately estimate the maximum number of contributions $\tau$ that a user can make in the dataset. Specifically, if a user exceeds $\tau$ contributions, \bincrop doubles the bound, updating $\tau \gets 2 \times \tau$.

\noindent\textbf{\matglobal} (user-level matrix mechanism): It's adapted from the event-level matrix mechanism \cite{henzinger2022constant}, we use the group privacy property \ref{thm:group} to make sure it satisfies user-level neighboring datasets with a global contribution constraint of at most $GS$.

\noindent\textbf{\mmbpc} (matrix mechanism with bounded per-day contribution): Similar to \adsmech, we set a maximum contribution limit $r_i$ for each user on day $i$. If any user contributes more than $r_i$ conversions on a given day, only the first $r_i$ conversions are retained in the dataset. We first compute the event-level matrix mechanism and then scale the columns of the matrix based on the contribution limit $r_i$, ensuring that the sensitivity remains unchanged.

\subsection{Datasets}
\label{sec:dataset}
We use three synthetic datasets and two real world datasets to test the performance of our algorithm. For synthetic datasets, we assume last-touch attribution (LTA) is used. For real datasets, last-touch attribution (LTA), first-touch attribution (FTA) and uniform atribution (UNI) are used to create different data vectors $\datavec$. Table \ref{tab:synthetic-dataset} shows an example of the dataset. At each time (we assume each conversion happens in a non-overlapped timestamp), there is one conversion, that's attributed to one of the publishers.

\begin{table}
    \centering
    \begin{tabular}{|c |c| c| c| c|c}
    \hline
         Time&  1 & 2 & 3 & 4 &  $\cdots$\\
         \hline
        User&  Alice & Bob & Alice & Bob & $\cdots$\\
        \hline
        Publisher 1 & 0 & 1 & 0 & 0 & $\cdots$\\
        \hline
        Publisher 2  & 1 & 0 & 0 & 0 & $\cdots$\\
        % \hline
        % Publisher 3  & 0  & 0 & 0 & 1  & $\cdots$\\
        \hline
        $\cdots$  & $\cdots$  & $\cdots$ & $\cdots$ & $\cdots$  & $\cdots$\\
        \hline
        Publisher 1000  & 0  & 0 & 0 & 1  & $\cdots$\\
        \hline
    \end{tabular}
    \caption{Attribution weight for each publisher}
    \label{tab:synthetic-dataset}
\end{table}

\begin{table*}
\centering
\begin{tabular}{|c|r|r|r|r|r|r|r|}
\hline
Dataset & IPA & \bintree & \bincrop & UMM & MMBPC &AdsBPC & Improvement \\
\hline
Zipf     & 42.54 & 43.79 & 129.69 & 37.89 & 45.63 & \textbf{21.09} & 44.33\% \\  
Normal      & 189.73 & 142.53 & 427.82 & 123.41 & 116.12 & \textbf{77.28} & 33.45\% \\ 
Uniform     & 294.17 & 153.13 & 760.19 & 316.65 & 161.71 & \textbf{56.69}& 62.98\%\\ 
\hline
Criteo (LTA)    & 42.86 & 36.70 & 68.97 & 38.29 & 30.84 & \textbf{18.23} & 40.88\% \\ 
Criteo (FTA)    & 81.14 & 37.47 & 68.02 & 43.08 & 50.01 & \textbf{21.78} & 41.87\% \\  
Criteo (UNI)    & 58.27 & 35.27 & 63.98 & 34.92 & 41.85 & \textbf{17.02} & 51.26\% \\ 
Facebook     & 46.93 & 34.30 & 74.25 & 37.52 & 20.99 & \textbf{9.78} & 53.41\% \\ 
\hline
\end{tabular}
\caption{The query is the prefix sum, the objective (error metric) is the weighted root mean squared error. The total privacy budget is $\rho=1$. The number of days is $n=31$.}
\label{exp:prefix}
\end{table*}

\begin{table*}
\centering
\begin{tabular}{|c|r|r|r|r|r|r|r|}
\hline
Dataset & IPA & \bintree & \bincrop & UMM & MMBPC &AdsBPC & Improvement \\
\hline
Zipf      & 0.89 & 1.13 & 31.41 & 0.72 & 6.62 & \textbf{0.24} & 66.67\%\\ 
Normal      & 29.72 & 19.19 & 266.40 & 16.65 & 111.83 & \textbf{1.35} & 91.89\% \\ 
Uniform      & 32.66 & 171.18 & 1634.14 & 22.00 & 533.23 & \textbf{1.16} & 94.72\% \\ 
\hline
Criteo (LTA)          & 0.90 & 2.86 & 15.01 & 1.44 & 23.95 & \textbf{0.13} & 85.56\% \\  
Criteo (FTA)        & 1.41 & 2.12 & 13.95 & 0.76 & 25.39 & \textbf{0.11} &  85.53\% \\ 
Criteo (UNI)      & 1.00 & 2.32 & 7.02 & 0.69 & 7.52 & \textbf{0.12} &  82.61\% \\ 
Facebook      & 0.89 & 4.40 & 10.86 & 1.12 & 0.26 & \textbf{0.06} & 76.92\% \\ 
\hline
\end{tabular}
\caption{The query is the sliding window sum, the objective (error metric) is the max variance (divided by $10^4$). The total privacy budget is $\rho=1$. The number of days is $n=31$.}
\label{exp:sliding}
\end{table*}

\begin{figure*}[]
\centering
    \includegraphics[width=\linewidth]{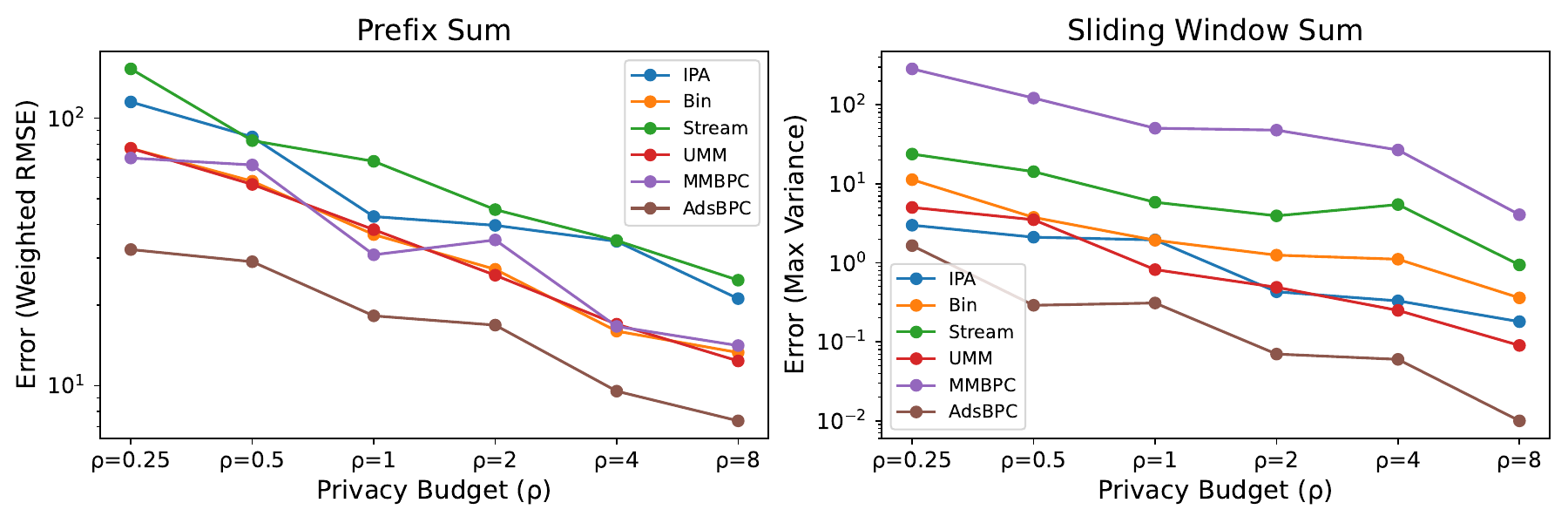}
  \caption{Comparison of different privacy budgets $\rho$ on Criteo Dataset. The number of days is $n=31$.}
  \label{fig:criteo-rho}
\end{figure*}

\begin{figure*}[]
\centering
    \includegraphics[width=\linewidth]{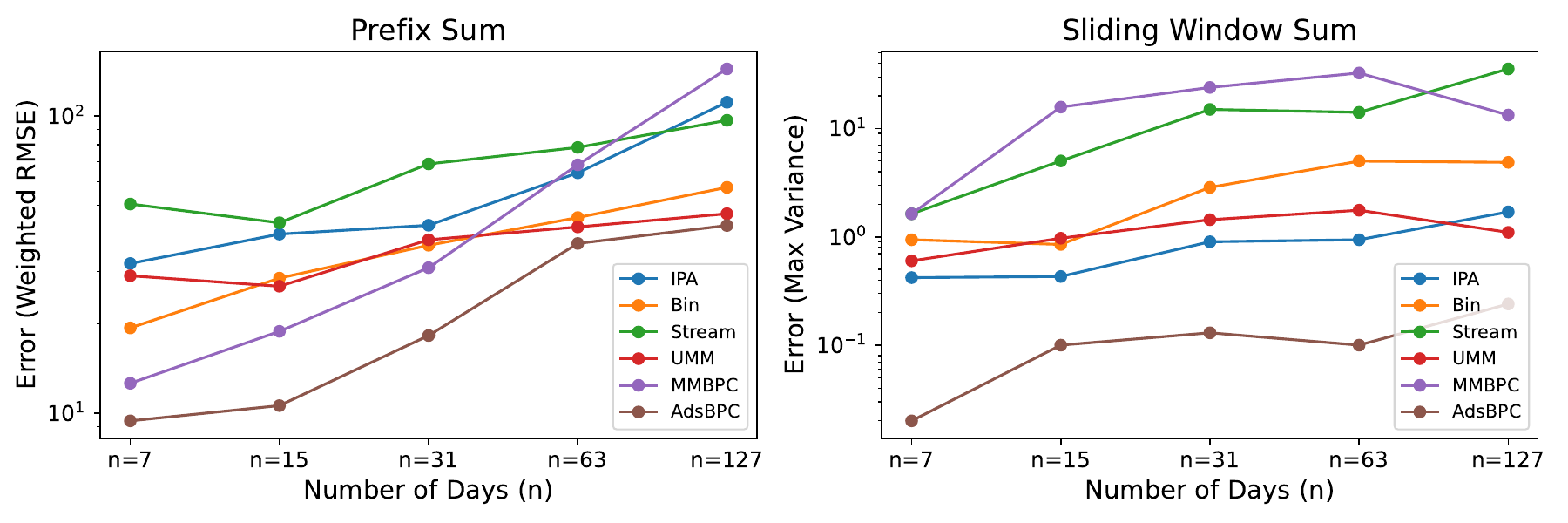}
  \caption{Comparison of different number of days $n$ on Criteo Dataset. The total privacy budget is $\rho=1$. }
  \label{fig:criteo-day}
\end{figure*}

\noindent \textbf{Synthetic Datasets} For synthetic datasets, we assume there are 1 million users, $|U| = 10^{6}$. There are 1000 publishers. For each user, we use 3 different distributions to generate the number of total conversions in the dataset (the distributions are similar to the settings in \cite{dong2023continual}). For example, if the distribution generates a number 5 for the user Alice, it means that Alice has 5 conversions in the dataset. We use the following three syntehtic datasets in the experiment. \underline{\textit{Zipf Dataset}} In the Zipf dataset, the number of conversions for each user is generated from the distribution: Zipf(3)+10. Here the global sensitivity is GS = 50 (A user has at most 50 conversions for each publisher). The total number of conversions in the dataset is $T=11,361,742$. \underline{\textit{Normal Dataset}} The number of conversions for each user is generated from the distribution: Normal(50, 30). Here the global sensitivity is GS = 150. The total number of conversions in the dataset is $T=50,193,976$. \underline{\textit{Uniform Dataset}} The number of conversions for each user is generated from the distribution: Uniform(1, 256). Here the global sensitivity is $GS = 256$, The total number of conversions in the dataset is $T=51,180,247$.

\noindent \textbf{Criteo Dataset}
The Criteo Sponsored Search Conversion Log Dataset \cite{tallis2018reacting} contains logs obtained from Criteo Predictive Search (CPS). Each row in the dataset represents a user action, such as a click, performed on a product-related advertisement, it includes information about product characteristics (age, brand, gender, price), the timing of the click, user characteristics, and device information.

\noindent For simplicity, we assume that the dataset pertains to a single product. The columns of interest are as follows: $\langle \text{sale} \rangle$: This column takes the value 1 if a conversion occurred and 0 if it did not. We aim to attribute the conversion to a specific publisher. $\langle \text{partner\_id} \rangle$: This is a unique identifier associated with the publisher. There are a total of 287 publishers. $\langle \text{user\_id} \rangle$: This unique identifier is associated with each user and is used to track the number of user contributions in the dataset. Key dataset statistics include $\text{GS} = 44$, $T = 1,732,721$, and $|U| = 1,608,081$.

\noindent \textbf{Facebook Sales Conversion Dataset}
Facebook Sales Conversion Dataset \cite{fbdata} is a dataset used for Facebook ad campaign analysis and sales prediction. Each row of the raw dataset contains information about age, gender, interest, and number of conversions from a unique user group. We treat each unique user group as a single user and distribute the conversions across $n$ days. Because Facebook is the only publisher, both FTA, LTA and UNI will give the same attribution weight. Key dataset statistics include $\text{GS} = 60$, $T = 3264$, and $|U| = 1143$.

\subsection{Experiments on Different Queries and Objective Functions}
In this section, we compare \adsmech with baseline methods across various datasets. For the Zipf, Normal, and Uniform datasets, we use LTA to generate the data vector $\datavec$. For the Criteo and Facebook datasets, we generate three versions of the data vector $\datavec$ using different attribution methods: LTA, FTA, and UNI. We set the number of days to $n=31$ and the privacy budget to $\rho = 1$. 

In \adsmech and \mmbpc,  we use the differentially private quantile algorithm for the first $l=7$ days, the quantile percentage is $p=99\%$. In the SVT algorithm, the scale parameters are $\gap^{\uparrow} = 1.3$ and $\gap^{\downarrow}=0.8$, the thresholds are $T^{\uparrow} = T^{\downarrow} = 50$, the maximum number of reports is $k=7$.
For the allocation of privacy budget, we set $\rho_1 = 0.7 \rho$, $\rho_2 = 0.15 \rho / l$ and $\rho_3 = 0.15 \rho$. {The experiment for time complexity is in Appendix \ref{app:time}. The experiment for parameter tuning is in Appendix \ref{app:tune}. The ablation study is in Appendix \ref{app:ablation}}

% Similar to \bincrop, we add a correction term $\text{const}(\epsilon) = -6/\epsilon \log(20)$ to queries $\query_i^{\uparrow}, \query_i^{\downarrow}$, which improves accuracy when $\epsilon$ is small.

\noindent \textbf{Prefix Sum Query with Weighted Root Mean Squared Error.}
Under Scenario 1 in Section \ref{sec:scene}, table \ref{exp:prefix} presents the weighted root mean squared error (WRMSE) of various methods across different datasets. \adsmech consistently outperforms other baseline methods. The final column displays the relative improvement over the second-best method; \adsmech achieves 33\% to 63\% lower error.

\noindent \textbf{Sliding Window Sum Query with Max Var Error.}
Under Scenario 2 in Section \ref{sec:scene}, table \ref{exp:sliding} presents the max variance error of various methods across different datasets. \adsmech consistently outperforms other baseline methods. The final column displays the relative improvement over the second-best method; \adsmech achieves 66\% to 95\% lower error.

We can see that \adsmech outperforms the global sensitivity-based methods \idglobal, \bintree, and \matglobal. The reason for this is that global sensitivity relies on the maximum number of contributions a user makes within the dataset, which is often significantly higher than the average contribution per user. As a result, these methods introduce an excessive amount of noise, leading to higher errors. In contrast, \adsmech clips user contributions on a daily basis, effectively reducing the sensitivity and minimizing the added noise. \bincrop performs poorly as it is primarily designed for continuous releases exceeding $10^5$ events. In contrast, \adsmech is better suited for advertising measurement, where the number of online releases is comparatively small. While \mmbpc employs the same approach to calculate the per-day user contribution bound and utilizes a complex matrix factorization technique, its performance still falls short of \adsmech. This highlights that \adsmech is effective despite its simple design.

\subsection{Experiments on Different Parameters}
In this section, we demonstrate how different parameters influence the errors. We use the Criteo dataset with LTA for illustration.

\noindent \textbf{Experiment on Different Privacy Budgets.}
In this section, we study the performance of different algorithms with respect to the privacy budget $\rho$, $\rho$ is chosen from [0.25, 0.5, 1, 2, 4, 8]. The dataset we use is the Criteo dataset with LTA as described in section \ref{sec:dataset}. The number of days is $n=31$. As shown in Figure \ref{fig:criteo-rho}, the errors associated with various algorithms decrease as $\rho$ increases. This is because the amount of noise added to the dataset decreases with a larger privacy budget, leading to lower errors.

\noindent \textbf{Experiments on Different Ad Campaign Durations.}
In this section, we study the performance of different algorithms with respect to the number of days $n$, $n$ is chosen from [7, 15, 31, 63, 127]. The dataset we use is the Criteo dataset with LTA. The total privacy budget is set as $\rho=1$. Figure \ref{fig:criteo-day} illustrates the performance of various algorithms. \matmech outperforms others consistently. {The experiment for $n=365$ is in Appendix \ref{app:365}.}

\section{Limitations and Future Work}\label{sec:limit}
{

Our approach currently assumes the presence of a trusted third party, similar to existing multi-touch attribution (MTA) systems where entities like advertising service providers (e.g., Criteo) or mobile measurement partners fulfill this role. However, reliance on a centralized third party introduces privacy risks, as some have faced regulatory scrutiny for mishandling user data. While our proposed algorithm with differential privacy (DP) mitigates inference attacks and cross-tracking risks, it does not fully eliminate concerns if a third party acts maliciously. As future work, we plan to explore decentralized alternatives to reduce or eliminate this dependency. Techniques such as Trusted Execution Environments (TEE) can shift trust from third parties to secure hardware providers, ensuring that only approved computations are executed. Alternatively, Multi-Party Computation (MPC) offers mathematically proven security without requiring a trusted intermediary. However, practical challenges remain in designing efficient protocols and ensuring DP guarantees within an MPC framework. Addressing these challenges will be a key direction for future research.

}

\section{Conclusions}

Our work is the first to formulate the real-world problem of MTA advertising measurement systems with real-time reporting of streaming data in advertising campaigns.
We introduced AdsBPC, a novel differentially private solution for streaming advertising measurement results. 
Extensive experiments confirm that AdsBPC significantly enhances measurement accuracy by \addcolor{33\% to 95\%} compared to existing streaming differential privacy mechanisms applied in advertising measurement.
This improvement underscores the effectiveness of our approach in achieving superior accuracy while maintaining a formal privacy guarantee, thus advancing the state-of-the-art in privacy-preserving advertising measurement.

% \section*{Acknowledgments}
% \input{ack}

% \section*{Availability}

% Now we're going to cite somebody.  Watch for the cite tag.
% Here it comes \cite{bishop2006pattern}.  

% \clearpage
% \balance
% {\footnotesize \bibliographystyle{acm}
% \bibliography{ref.bib}}

% \bibliographystyle{ACM-Reference-Format}
% \clearpage
\balance
\bibliographystyle{acm}
\bibliography{ref.bib}

\begin{thebibliography}{10}

\bibitem{fbdata}
Facebook sales conversion dataset.
\newblock \url{https://www.kaggle.com/code/mansimeena/facebook-ad-campaigns-analysis-sales-prediction}.

\bibitem{meta}
Private matching for compute.
\newblock \url{https://engineering.fb.com/2020/07/10/open-source/private-matching/}.
\newblock Accessed: 2024-02-07.

\bibitem{abadi2016deep}
{\sc Abadi, M., Chu, A., Goodfellow, I., McMahan, H.~B., Mironov, I., Talwar, K., and Zhang, L.}
\newblock Deep learning with differential privacy.
\newblock In {\em Proceedings of the 2016 ACM SIGSAC conference on computer and communications security\/} (2016), pp.~308--318.

\bibitem{abowd2018uscensus}
{\sc Abowd, J.~M.}
\newblock The us census bureau adopts differential privacy.
\newblock In {\em Proceedings of the 24th ACM SIGKDD international conference on knowledge discovery \& data mining\/} (2018), pp.~2867--2867.

\bibitem{andrew2021differentially}
{\sc Andrew, G., Thakkar, O., McMahan, B., and Ramaswamy, S.}
\newblock Differentially private learning with adaptive clipping.
\newblock {\em Advances in Neural Information Processing Systems 34\/} (2021), 17455--17466.

\bibitem{AppleATT}
{\sc Apple}.
\newblock App tracking transparency.
\newblock \url{https://developer.apple.com/documentation/apptrackingtransparency}.

\bibitem{ayala2022show}
{\sc Ayala-G{\'o}mez, F., Horppu, I., G{\"u}lbenko{\u{g}}lu, E., Siivola, V., and Pej{\'o}, B.}
\newblock Show me the money: Measuring marketing performance in f2p games using apple’s app tracking transparency framework.

\bibitem{briggs1997advertising}
{\sc Briggs, R., and Hollis, N.}
\newblock Advertising on the web: Is there response before click-through?
\newblock {\em Journal of Advertising research 37}, 2 (1997), 33--46.

\bibitem{zcdp}
{\sc Bun, M., and Steinke, T.}
\newblock Concentrated differential privacy: Simplifications, extensions, and lower bounds.
\newblock In {\em Proceedings, Part I, of the 14th International Conference on Theory of Cryptography - Volume 9985\/} (2016).

\bibitem{canonne2020discrete}
{\sc Canonne, C.~L., Kamath, G., and Steinke, T.}
\newblock The discrete gaussian for differential privacy.
\newblock {\em Advances in Neural Information Processing Systems 33\/} (2020), 15676--15688.

\bibitem{case2023interoperable}
{\sc Case, B., Jain, R., Koshelev, A., Leiserson, A., Masny, D., Sandberg, T., Savage, B., Taubeneck, E., Thomson, M., and Yamaguchi, T.}
\newblock Interoperable private attribution: A distributed attribution and aggregation protocol.
\newblock {\em Cryptology ePrint Archive\/} (2023).

\bibitem{chan2011private}
{\sc Chan, T.-H.~H., Shi, E., and Song, D.}
\newblock Private and continual release of statistics.
\newblock {\em ACM Transactions on Information and System Security (TISSEC) 14}, 3 (2011), 1--24.

\bibitem{chanyaswad2018mvg}
{\sc Chanyaswad, T., Dytso, A., Poor, H.~V., and Mittal, P.}
\newblock Mvg mechanism: Differential privacy under matrix-valued query.
\newblock In {\em Proceedings of the 2018 ACM SIGSAC Conference on Computer and Communications Security\/} (2018), pp.~230--246.

\bibitem{choquette2024amplified}
{\sc Choquette-Choo, C.~A., Ganesh, A., McKenna, R., McMahan, H.~B., Rush, J., Guha~Thakurta, A., and Xu, Z.}
\newblock (amplified) banded matrix factorization: A unified approach to private training.
\newblock {\em Advances in Neural Information Processing Systems 36\/} (2024).

\bibitem{choquette2022multi}
{\sc Choquette-Choo, C.~A., McMahan, H.~B., Rush, K., and Thakurta, A.}
\newblock Multi-epoch matrix factorization mechanisms for private machine learning.
\newblock {\em arXiv preprint arXiv:2211.06530\/} (2022).

\bibitem{delaney2024differentially}
{\sc Delaney, J., Ghazi, B., Harrison, C., Ilvento, C., Kumar, R., Manurangsi, P., P{\'a}l, M., Prabhakar, K., and Raykova, M.}
\newblock Differentially private ad conversion measurement.
\newblock {\em Proceedings on Privacy Enhancing Technologies\/} (2024).

\bibitem{denisov2022improved}
{\sc Denisov, S., McMahan, H.~B., Rush, J., Smith, A., and Guha~Thakurta, A.}
\newblock Improved differential privacy for sgd via optimal private linear operators on adaptive streams.
\newblock {\em Advances in Neural Information Processing Systems 35\/} (2022), 5910--5924.

\bibitem{dick2023confidence}
{\sc Dick, T., Dwork, C., Kearns, M., Liu, T., Roth, A., Vietri, G., and Wu, Z.~S.}
\newblock Confidence-ranked reconstruction of census microdata from published statistics.
\newblock {\em Proceedings of the National Academy of Sciences 120}, 8 (2023), e2218605120.

\bibitem{ding2023free}
{\sc Ding, Z., Wang, Y., Xiao, Y., Wang, G., Zhang, D., and Kifer, D.}
\newblock Free gap estimates from the exponential mechanism, sparse vector, noisy max and related algorithms.
\newblock {\em The VLDB Journal 32}, 1 (2023), 23--48.

\bibitem{dong2023continual}
{\sc Dong, W., Luo, Q., and Yi, K.}
\newblock Continual observation under user-level differential privacy.
\newblock In {\em 2023 IEEE Symposium on Security and Privacy (SP)\/} (2023), IEEE Computer Society, pp.~2190--2207.

\bibitem{durfee2023unbounded}
{\sc Durfee, D.}
\newblock Unbounded differentially private quantile and maximum estimation.
\newblock {\em Advances in Neural Information Processing Systems 36\/} (2023), 77691--77712.

\bibitem{dwork06Calibrating}
{\sc Dwork, C., McSherry, F., Nissim, K., and Smith, A.}
\newblock Calibrating noise to sensitivity in private data analysis.
\newblock In {\em TCC\/} (2006).

\bibitem{dwork2010differential}
{\sc Dwork, C., Naor, M., Pitassi, T., and Rothblum, G.~N.}
\newblock Differential privacy under continual observation.
\newblock In {\em Proceedings of the forty-second ACM symposium on Theory of computing\/} (2010), pp.~715--724.

\bibitem{gillenwater2021differentially}
{\sc Gillenwater, J., Joseph, M., and Kulesza, A.}
\newblock Differentially private quantiles.
\newblock In {\em International Conference on Machine Learning\/} (2021), PMLR, pp.~3713--3722.

\bibitem{guha2010challenges}
{\sc Guha, S., Cheng, B., and Francis, P.}
\newblock Challenges in measuring online advertising systems.
\newblock In {\em Proceedings of the 10th ACM SIGCOMM conference on Internet measurement\/} (2010), pp.~81--87.

\bibitem{guo2022birds}
{\sc Guo, X., Han, Y., Liu, Z., Wang, D., Jia, Y., and Li, J.}
\newblock Birds of a feather flock together: How set bias helps to deanonymize you via revealed intersection sizes.
\newblock In {\em 31st USENIX Security Symposium (USENIX Security 22)\/} (2022), pp.~1487--1504.

\bibitem{haddadi2011targeted}
{\sc Haddadi, H., Hui, P., Henderson, T., and Brown, I.}
\newblock Targeted advertising on the handset: Privacy and security challenges.
\newblock {\em Pervasive advertising\/} (2011), 119--137.

\bibitem{googleDDA}
{\sc Help, G.~A.}
\newblock Advertising and attribution.
\newblock \url{https://support.google.com/analytics/answer/10596866?hl=en}.

\bibitem{henzinger2022constant}
{\sc Henzinger, M., and Upadhyay, J.}
\newblock Constant matters: Fine-grained complexity of differentially private continual observation using completely bounded norms.
\newblock {\em Cryptology ePrint Archive\/} (2022).

\bibitem{ion2020deploying}
{\sc Ion, M., Kreuter, B., Nergiz, A.~E., Patel, S., Saxena, S., Seth, K., Raykova, M., Shanahan, D., and Yung, M.}
\newblock On deploying secure computing: Private intersection-sum-with-cardinality.
\newblock In {\em 2020 IEEE European Symposium on Security and Privacy (EuroS\&P)\/} (2020), IEEE, pp.~370--389.

\bibitem{Bo}
{\sc Jiang, B., Du, J., and Yan, Q.}
\newblock {AnonPSI}: An anonymity assessment framework for {PSI}.
\newblock In {\em The Network and Distributed System Security Symposium (NDSS) 2024\/} (Feb. 2024).

\bibitem{keller2023database}
{\sc Keller, S.~A., and Abowd, J.~M.}
\newblock Database reconstruction does compromise confidentiality.
\newblock {\em Proceedings of the National Academy of Sciences 120}, 12 (2023), e2300976120.

\bibitem{kruvzik2000bauer}
{\sc Kru{\v{z}}{\'\i}k, M.}
\newblock Bauer's maximum principle and hulls of sets.
\newblock {\em Calculus of Variations and Partial Differential Equations 11}, 3 (2000), 321--332.

\bibitem{LMHMR15}
{\sc Li, C., Miklau, G., Hay, M., Mcgregor, A., and Rastogi, V.}
\newblock The matrix mechanism: Optimizing linear counting queries under differential privacy.
\newblock {\em The VLDB Journal 24}, 6 (Dec. 2015), 757–781.

\bibitem{liu2023algorithms}
{\sc Liu, Y., Suresh, A.~T., Zhu, W., Kairouz, P., and Gruteser, M.}
\newblock Algorithms for bounding contribution for histogram estimation under user-level privacy.
\newblock In {\em International Conference on Machine Learning\/} (2023), PMLR, pp.~21969--21996.

\bibitem{lyu2017understanding}
{\sc Lyu, M., Su, D., and Li, N.}
\newblock Understanding the sparse vector technique for differential privacy.
\newblock {\em PVLDB 10}, 6 (2017), 637--648.

\bibitem{exponentialMechanism}
{\sc McSherry, F., and Talwar, K.}
\newblock Mechanism design via differential privacy.
\newblock In {\em Proceedings of the 48th Annual IEEE Symposium on Foundations of Computer Science\/} (Washington, DC, USA, 2007), FOCS '07, IEEE Computer Society, pp.~94--103.

\bibitem{near2021differential}
{\sc Near, J.~P., He, X., et~al.}
\newblock Differential privacy for databases.
\newblock {\em Foundations and Trends{\textregistered} in Databases 11}, 2 (2021), 109--225.

\bibitem{pfeiffer2021masked}
{\sc Pfeiffer~III, J.~J., Charles, D., Gilton, D., Jung, Y.~H., Parsana, M., and Anderson, E.}
\newblock Masked lark: Masked learning, aggregation and reporting workflow.
\newblock {\em arXiv preprint arXiv:2110.14794\/} (2021).

\bibitem{DPorg-exponential-mechanism-bounded-range}
{\sc Rogers, R., and Steinke, T.}
\newblock A better privacy analysis of the exponential mechanism.
\newblock DifferentialPrivacy.org, 07 2021.
\newblock \url{https://differentialprivacy.org/exponential-mechanism-bounded-range/}.

\bibitem{rosales2012post}
{\sc Rosales, R., Cheng, H., and Manavoglu, E.}
\newblock Post-click conversion modeling and analysis for non-guaranteed delivery display advertising.
\newblock In {\em Proceedings of the fifth ACM international conference on Web search and data mining\/} (2012), pp.~293--302.

\bibitem{shao2011data}
{\sc Shao, X., and Li, L.}
\newblock Data-driven multi-touch attribution models.
\newblock In {\em Proceedings of the 17th ACM SIGKDD international conference on Knowledge discovery and data mining\/} (2011), pp.~258--264.

\bibitem{singal2019shapley}
{\sc Singal, R., Besbes, O., Desir, A., Goyal, V., and Iyengar, G.}
\newblock Shapley meets uniform: An axiomatic framework for attribution in online advertising.
\newblock In {\em The world wide web conference\/} (2019), pp.~1713--1723.

\bibitem{smith2011privacy}
{\sc Smith, A.}
\newblock Privacy-preserving statistical estimation with optimal convergence rates.
\newblock In {\em Proceedings of the forty-third annual ACM symposium on Theory of computing\/} (2011), pp.~813--822.

\bibitem{DPorg-pdp-to-zcdp}
{\sc Steinke, T.}
\newblock Tight rdp and zcdp bounds from pure dp.
\newblock DifferentialPrivacy.org, 05 2024.
\newblock \url{https://differentialprivacy.org/pdp-to-zcdp/}.

\bibitem{tallis2018reacting}
{\sc Tallis, M., and Yadav, P.}
\newblock Reacting to variations in product demand: An application for conversion rate (cr) prediction in sponsored search.
\newblock {\em arXiv preprint arXiv:1806.08211\/} (2018).

\bibitem{tholoniat2024alistair}
{\sc Tholoniat, P., Kostopoulou, K., McNeely, P., Sodhi, P.~S., Varanasi, A., Case, B., Cidon, A., Geambasu, R., and L{\'e}cuyer, M.}
\newblock Alistair: Efficient on-device budgeting for differentially-private ad-measurement systems.
\newblock {\em arXiv preprint arXiv:2405.16719\/} (2024).

\bibitem{toubiana2010adnostic}
{\sc Toubiana, V., Narayanan, A., Boneh, D., Nissenbaum, H., and Barocas, S.}
\newblock Adnostic: Privacy preserving targeted advertising.
\newblock In {\em Proceedings Network and Distributed System Symposium\/} (2010).

\bibitem{ullah2020privacy}
{\sc Ullah, I., Boreli, R., and Kanhere, S.~S.}
\newblock Privacy in targeted advertising: A survey.
\newblock {\em arXiv preprint arXiv:2009.06861\/} (2020).

\bibitem{voigt2017eu}
{\sc Voigt, P., and Von~dem Bussche, A.}
\newblock The eu general data protection regulation (gdpr).
\newblock {\em A Practical Guide, 1st Ed., Cham: Springer International Publishing 10}, 3152676 (2017), 10--5555.

\bibitem{googlesql}
{\sc Wilson, R., Zhang, C.~Y., Lam, W., Desfontaines, D., Simmons-Marengo, D., and Gipson, B.}
\newblock Differentially private sql with bounded user contribution.
\newblock In {\em Proceedings on Privacy Enhancing Technologies Symposium\/} (2020).

\bibitem{xiao2020optimizing}
{\sc Xiao, Y., Ding, Z., Wang, Y., Zhang, D., and Kifer, D.}
\newblock Optimizing fitness-for-use of differentially private linear queries.
\newblock In {\em VLDB\/} (2021).

\bibitem{xiao2024optimal}
{\sc Xiao, Y., He, G., Zhang, D., and Kifer, D.}
\newblock An optimal and scalable matrix mechanism for noisy marginals under convex loss functions.
\newblock {\em Advances in Neural Information Processing Systems 36\/} (2024).

\bibitem{xiao2023answering}
{\sc Xiao, Y., Wang, G., Zhang, D., and Kifer, D.}
\newblock Answering private linear queries adaptively using the common mechanism.
\newblock {\em Proceedings of the VLDB Endowment 16}, 8 (2023), 1883--1896.

\bibitem{yang2021improved}
{\sc Yang, J., Xiang, L., Li, W., Liu, W., and Wang, X.}
\newblock Improved matrix gaussian mechanism for differential privacy.
\newblock {\em arXiv preprint arXiv:2104.14808\/} (2021).

\bibitem{yuan2013real}
{\sc Yuan, S., Wang, J., and Zhao, X.}
\newblock Real-time bidding for online advertising: measurement and analysis.
\newblock In {\em Proceedings of the seventh international workshop on data mining for online advertising\/} (2013), pp.~1--8.

\bibitem{yun2020challenges}
{\sc Yun, J.~T., Segijn, C.~M., Pearson, S., Malthouse, E.~C., Konstan, J.~A., and Shankar, V.}
\newblock Challenges and future directions of computational advertising measurement systems.
\newblock {\em Journal of Advertising 49}, 4 (2020), 446--458.

\bibitem{zhang2023differentially}
{\sc Zhang, B., Doroshenko, V., Kairouz, P., Steinke, T., Thakurta, A., Ma, Z., Apte, H., and Spacek, J.}
\newblock Differentially private stream processing at scale.
\newblock {\em arXiv preprint arXiv:2303.18086\/} (2023).

\bibitem{zhao2018shapley}
{\sc Zhao, K., Mahboobi, S.~H., and Bagheri, S.~R.}
\newblock Shapley value methods for attribution modeling in online advertising.
\newblock {\em arXiv preprint arXiv:1804.05327\/} (2018).

\bibitem{zhu2020improving}
{\sc Zhu, Y., and Wang, Y.-X.}
\newblock Improving sparse vector technique with renyi differential privacy.
\newblock {\em Advances in neural information processing systems 33\/} (2020), 20249--20258.

\end{thebibliography}

% \theendnotes
% \newpage
% \clearpage

\appendices

% \clearpage

\section{Additional Experiments}

{

% \color{blue}

In this section, we present additional experiments to further demonstrate the efficiency and practicality of AdsBPC. All experiments follow the same settings as in Table \ref{exp:prefix}. We use Criteo (with LTA) and Facebook (FB) datasets.

\subsection{Scalability}\label{app:time}

AdsBPC can be applied as a post-processing step after the original data has been processed. Specifically, we require the accumulated data—namely, the total number of conversions for a given user, advertiser, and publisher on a particular day. This statistic can be efficiently retrieved using an SQL query in large databases with limited overhead. Regarding the computational cost of our algorithm, Table \ref{exp:time} shows the running time (in seconds) for different algorithm. The primary computational overhead of AdsBPC arises from tracking and truncating user contributions across different publishers. As future work, we will explore potential approaches to mitigate this overhead, as we believe an efficient implementation can significantly improve runtime performance. The running time on Facebook (FB) dataset is small because it contains only 1 publisher.
\begin{table}[h]
\centering
\begin{tabular}{|c|r|r|r|r|r|r|r|}
\hline
Dataset & IPA & \bintree & \bincrop & UMM & MMBPC &AdsBPC \\
\hline
Criteo    & 27 & 33 & 208 & 91 & 155 & 112  \\ 
FB    & 0.001 & 0.08 & 0.03 & 0.21 &0.43 & 0.004  \\ 
\hline
\end{tabular}
\caption{The running time (in seconds) comparison.}
\label{exp:time}
\end{table}

\subsection{n=365 Experiment}\label{app:365}

We evaluate the performance of AdsBPC for long-duration campaigns with n=365. The Criteo dataset is derived from real-world data collected over a 30-day window. If we simply divide it into 365 days, the data distribution changes significantly—for instance, the number of conversions per user per day would be diluted to approximately 1/10 of the original, leading to a lower signal-to-noise ratio and increased sparsity. To maintain a distribution similar to real-world data, we replicate the dataset five times to generate a new dataset with a higher signal strength that better reflects real-world conditions. We apply the same approach to the FB dataset. As shown in Table \ref{exp:n365}, AdsBPC consistently outperforms baseline methods, demonstrating its capability to handle long-duration campaigns effectively.

\begin{table}[h]
\centering
\begin{tabular}{|c|r|r|r|r|r|r|r|}
\hline
Dataset & IPA & \bintree & \bincrop & UMM & MMBPC &AdsBPC \\
\hline
Criteo   & 966 & 353 & 605 & 249 & 500 & \textbf{128} \\ 
FB      & 1206 & 548 & 784 & 278 & 257 & \textbf{98} \\ 
\hline
\end{tabular}
\caption{Error comparison when $n=365$.}
\label{exp:n365}
\end{table}

\subsection{Parameter Tuning}\label{app:tune}

% We will include an ablation study in the appendix to analyze the best practices for selecting parameters. Specifically, for different budget allocations ($\rho_1$, $\rho_2$, $\rho_3$), we aim to allocate sufficient privacy budget to Privacy Quantile and SVT to ensure they function effectively, as the insufficient budget would cause these algorithms to behave randomly. At the same time, we must avoid allocating too much budget to them, as this would limit the budget available for measurement, ultimately reducing accuracy. In WRMSE, The weight for the last day is set higher because in an ad campaign, advertisers primarily focus on the total number of conversions, which is measured on the last day. Therefore, we assign a larger weight to it.

In the experiment, we set the privacy budget allocation as $\rho_1 = 0.7 \rho$, $\rho_2 = 0.15 \rho / l$, and $\rho_3 = 0.15 \rho$. We examine how different privacy budget allocations affect the performance of AdsBPC. Specifically, we allocate $\rho_1 = (1-\lambda) \rho$ for measurement, $\rho_2 = \frac{\lambda}{2 l} \rho$ for DP-quantile, and $\rho_3 = \frac{\lambda}{2} \rho$ for SVT. Table \ref{exp:ratio} presents the performance of AdsBPC for different values of $\lambda$. From the results, we observe that when $\lambda=0.5$, an insufficient budget is allocated for measurement, leading to higher variance. Conversely, when $\lambda=0.1$, the bound selection becomes less accurate, resulting in increased bias.

\begin{table}[h]
\centering
\begin{tabular}{|c|r|r|r|r|r|r|}
\hline
$\lambda$  & 0.5 & 0.4 & 0.3 & 0.2 & 0.1  \\
\hline
Criteo   & 24.73 & 18.55 & 18.23 & 22.97 & 24.18 \\ 
FB      & 29.67 & 11.17 & 9.78 & 10.25& 18.40 \\ 
\hline
\end{tabular}
\caption{Different budget allocations.}
\label{exp:ratio}
\end{table}

Next, we examine the effect of the scaling factors $s^{\uparrow}, s^{\downarrow}$ used in SVT. We vary $s^{\uparrow}$ and set $s^{\downarrow} = 1/s^{\uparrow}$ accordingly. Table \ref{exp:scale} shows that the performance remains relatively robust against different selections of $s^{\uparrow}$.

\begin{table}[h]
\centering
\begin{tabular}{|c|r|r|r|r|r|r|}
\hline
$s^{\uparrow}$ & 1.1 & 1.2 & 1.3 & 1.4 & 1.5  \\
\hline
Criteo   & 21.89 & 22.39 & 18.23 &20.45 & 18.89 \\ 
FB      & 12.23 & 11.73 & 9.78 & 10.70& 12.05 \\ 
\hline
\end{tabular}
\caption{Different scaling factors.}
\label{exp:scale}
\end{table}

Lastly, we examine the impact of the thresholds $T^{\uparrow}, T^{\downarrow}$ used in SVT. We vary $T^{\uparrow}$ and set $T^{\downarrow} = T^{\uparrow}$. Table \ref{exp:svt} shows that a smaller threshold $T^{\uparrow}$ causes SVT to update more frequently, leading to faster budget consumption and leaving the contribution bound unchanged in later days. This results in higher errors due to inaccurate estimates of contribution bounds in later days.

\begin{table}[h]
\centering
\begin{tabular}{|c|r|r|r|r|r|r|}
\hline
$T^{\uparrow}$ & 10 & 25 & 50 & 75 & 100    \\
\hline
Criteo  & 23.37& 21.50 & 18.23 & 20.89 &  21.45  \\ 
FB    & 17.79 &  16.90 & 9.78 & 8.71 & 10.13  \\ 
\hline
\end{tabular}
\caption{Different SVT thresholds.}
\label{exp:svt}
\end{table}

\subsection{Ablation Study}\label{app:ablation}
We aim to investigate the following questions:

1. \textbf{Constant Bound}: What if we use a fixed bound instead of DP-quantile and SVT? How would this impact accuracy? We use the same setting as in Table \ref{exp:prefix} ($\rho=1$) but exclude dp-quantile and SVT, relying solely on a predefined user contribution bound for all 31 days. The results in Table \ref{tab:ab1} indicate that WRMSE is highly sensitive to this bound: a small bound ($r=1$) introduces high bias, while a larger bound ($r=6$) increases variance. Although choosing an intermediate bound ($r=3$) can mitigate error, the optimal bound for a given dataset is itself private information and unknown in advance. This highlights the necessity of a private estimation method for the user contribution bound.

\begin{table}[h]
    \centering
    \begin{tabular}{|c|c|c|c|c|c|c|c|}
        \hline
       Bound  & r=1 & r=2 & r=3 & r=4 & r= 5 & r = 6 \\ \hline
        Criteo & 1512.25 & 24.52 &  18.18& 22.29 & 23.15& 40.63\\ \hline
        FB  & 230.48 & 53.25 &  9.99& 16.35 & 26.16& 30.68\\ \hline
    \end{tabular}
    \caption{Ablation Study 1: Constant Contribution Bound}
    \label{tab:ab1}
\end{table}

2. \textbf{Daily DP-Quantile Updates}: What if we apply DP-quantile every day without SVT? How would this affect accuracy? We use the same setting as in Table \ref{exp:prefix} but with a different privacy budget. Instead of SVT, we use dp-quantile to estimate user contributions over all 31 days. Table \ref{tab:ab2} shows the errors. The performance deteriorates due to a lower privacy budget allocated per day, leading to a less accurate estimation of the user contribution bound. However, as the privacy budget increases, performance improves, highlighting the importance of accurately estimating the user contribution bound.

\begin{table}[h]
    \centering
    \begin{tabular}{|c|c|c|c|c|c|c|c|}
        \hline
       Budget  & $\rho=1$ & $\rho=2$ & $\rho=4$ & $\rho=6$ & $\rho=8$ & $\rho=10$  \\ \hline
        Criteo & 79.08 & 72.02 & 55.83 & 42.41& 33.45 & 32.68 \\ \hline
        FB &137.55  & 100.82 & 65.64 & 33.24 & 29.34 & 29.11 \\ \hline
    \end{tabular}
    \caption{Ablation Study 2: DP-Quantile Only, No SVT}
    \label{tab:ab2}
\end{table}

}

\section{Possibly Optimality of AdsBPC}
\label{app:opt}

To get the optimal strategy matrix under user level differential privacy with bounded per-day contribution, we use the numerical solver in \cite{choquette2022multi} under its default parameters. For each $n$, we randomly generate the per-day constraint $r_i \in [1, 10]$ and set $\rho=1$. Surprisingly, we find that the optimal strategy matrix $\mat{B}^{\star}$ tends to be a diagonal matrix. Table \ref{tab:diagonal} shows the absolute value of the off-diagonal elements in $\mat{P}^{\star}$, the maximum absolute value of $\mat{B}^{\star}$ is close to 0. We also compare the optimal objective function values of the solver and AdsBPC. Table \ref{tab:obj} shows that our method achieves optimality up to a difference of $10^{-2}$, which may come from numerical error.

\begin{table}[h]
\centering
\caption{Check the Absolute Value of the Off-Diagonal Elements.}
\begin{tabular}{|c|c|c|c|}
\hline
$n$ & $\max_{i \neq j} |\mat{B}^{\star}[i,j]|$ &  $\min_{i \neq j} |\mat{B}^{\star}[i,j]|$  &  $\operatorname{mean}_{i \neq j} |\mat{B}^{\star}[i,j]|$ \\
\hline
2 & $6.67 * 10^{-5}$ &  $6.67 * 10^{-5}$ & $6.67 * 10^{-5}$ \\
4 & $1.59*10^{-4}$ & $1.14*10^{-5}$ & $5.38*10^{-5}$ \\
6 &  $7.26 * 10^{-5}$ &$8.03 * 10^{-7}$ & $1.82 * 10^{-5}$ \\
8 & $4.81 * 10^{-5}$ & $4.70 * 10^{-8}$ & $5.58 * 10^{-6}$\\
10 & $4.39 * 10^{-5}$ & $1.69 * 10^{-8}$& $3.03 * 10^{-6}$\\
12 & $4.36 * 10^{-5}$ & $9.20 * 10^{-9}$& $2.14 * 10^{-6}$ \\
14 &$4.48 * 10^{-5}$  & $2.79 * 10^{-10}$ & $1.71 * 10^{-6}$\\
\hline
\end{tabular}
\label{tab:diagonal}
\end{table}

\begin{table}[h]
\centering
\caption{Comparison between the optimal objective function value of different methods.}
\begin{minipage}{0.23\textwidth}
\centering
\begin{tabular}{|c|c|c|}
\hline
$n$ & Solver & AdsBPC\\
\hline
2 &  108.46 &  108.46\\
4  & 950.55 &  950.56\\
6  & 2400.11 &  2400.12\\
8  & 5193.02  & 5193.03 \\
\hline
\end{tabular}
\end{minipage}
\hfill
\begin{minipage}{0.23\textwidth}
\centering
\begin{tabular}{|c|c|c|}
\hline
$n$ & Solver & AdsBPC\\
\hline
10  & 11365.32 & 11365.32 \\
12  &  19441.07&  19441.07\\
14  &  32607.31 &  32607.32\\
\hline
\end{tabular}
\end{minipage}

\label{tab:obj}
\end{table}

\section{Proofs}

\subsection{Proof for Theorem \ref{thm:sen}}

\label{app:sensitivity}

% \begingroup
% \setcounter{theorem}{0} % Adjust to match the number if needed
% \renewcommand{\thetheorem}{\ref{thm:sen}}
% \begin{theorem}[Sensitivity of Mechanism \ref{eq:mechanism}]
% The \textbf{sensitivity} of the mechanism \ref{eq:mechanism} under the definition of $\subperdayadj$ is
% \begin{align*}
% \nonumber
% \Delta_2(f) = & \max_{(\datavec, \datavec') \in \subperdayadj} \|\mat{B} \datavec - \mat{B} \datavec' \|_2 = \max_{\vec{d} \in \subperdaydiff} \| \mat{B} \vec{d}\|_2\\
% &= \sqrt{\frac{r_1^2}{\sigma_1^2} + \cdots + \frac{r_n^2}{\sigma_n^2} }.
% \end{align*}
% \end{theorem}
% \endgroup

\begin{proof}
    From the definition of sensitivity \ref{def:l2sen}, the function $f(\datavec) = \mat{B} \datavec$ has the sensitivity $ \max_{\vec{d} \in \subperdaydiff} \| \mat{B} \vec{d}\|_2 $. Notice that $\|\mat{B} \vec{d}\|_2$ is a convex and continuous function with respect to $\vec{d}$, and the constraint set $\subperdaydiff$ is convex and compact with respect to $\vec{d}$. So according to Bauer's maximum principle \cite{kruvzik2000bauer}, the function $\| \mat{B} \vec{d} \|_2$ attains its maximum at some extreme point of the set $\subperdaydiff$. Recall that $\subperdaydiff = \{\vec{d} \in \mathbb{R}^n: -r_i \leq \vec{d}_i \leq r_i, 1\leq i\leq n\}$, so the set of extreme points of $\subperdaydiff$ is 
\begin{align}
    \label{eq:diffex}
    \diffex = \{\vec{d} \in \mathbb{R}^n : d_i \in \{-r_i, r_i \}, 1\leq i\leq n \}.
\end{align}

Then the square of sensitivity of $\mat{B}$ is,
\begin{align*}
    \max _{\mathbf{d} \in \subperdaydiff} \mathbf{d}^{T} \mat{B}^T \mat{B} \mathbf{d} = \max_{\vec{d} \in \diffex} \mathbf{d}^{T} \mat{B}^T \mat{B} \mathbf{d} = \frac{r_1^2}{\sigma_1^2} + \cdots + \frac{r_n^2}{\sigma_n^2} .
\end{align*}
\end{proof}

\subsection{Proof for Theorem \ref{thm:dp}}
\label{app:zCDP}

% \begingroup
% \setcounter{theorem}{0} % Adjust to match the number if needed
% \renewcommand{\thetheorem}{\ref{thm:dp}}
% \begin{theorem}
% Let $\mat{Q} \in \mathbb{R}^{n\times n}$ be the input Query Workload matrix. For Mechanism defined in \ref{eq:mechanism}, let $\rho_1=\pcost(\mech)$, 
% $\sigma_1, \sigma_2,$ $ \cdots \sigma_n$ be the noise scales, $ \mat{B} = diag(\frac{1}{\sigma_1}, \frac{1}{\sigma_2} \cdots, \frac{1}{\sigma_n}) $, $\mat{L} = \mat{Q}\mat{B}^{-1}$ and $\mat{z} \sim N(0, \mat{I}_n)$, then $\mat{L} \mech(\datavec)$ = $\mat{L} (\mat{B} \datavec + \mat{z}) = \mat{Q} \datavec+ \mat{L}\mat{z} $ satisfies $\rho_1$-zCDP.
% \end{theorem}
% \endgroup

\begin{proof}
    Let $f(\datavec) = \mat{B} \datavec$, from theorem \ref{thm:sen} the sensitivity of $f$ is $\Delta_2(f) = \sqrt{\frac{r_1^2}{\sigma_1^2} + \cdots + \frac{r_n^2}{\sigma_n^2} }$. The constraint in problem \ref{eq:cauchy} makes sure that $\Delta_2^2(f) \leq 2\rho$. From theorem \ref{thm:gaussian} we get $f(\datavec) + N(0, \mat{I}_n) = \mat{B}\datavec + \vec{z}$ satisfies $\rho$-zCDP. Using the post-processing property \ref{thm:post}, we know that $\mech(\datavec)$ = $\mat{L} (\mat{B} \datavec + \mat{z}) = \mat{W} \datavec+ \mat{L}\mat{z} $ satisfies $\rho$-zCDP.
\end{proof}

\subsection{Proof for Theorem \ref{thm:quantile}}

% \begingroup
% \setcounter{theorem}{0} % Adjust to match the number if needed
% \renewcommand{\thetheorem}{\ref{thm:quantile}}
% \begin{theorem}
% Algorithm PrivateQuantile \ref{alg:quantile} satisfies $\rho_2$-zCDP with privacy paramter $\rho_2 = \min(\frac{1}{8} \epsilon^2, \frac{e^{\epsilon} - 1}{e^{\epsilon} + 1} \epsilon)$.
% \end{theorem}
% \endgroup

\begin{proof}
From Theorem \ref{thm:exp}, the Exponential Mechanism satisfies $\epsilon$-DP and $\frac{1}{8} \epsilon^2$-zCDP. From Theorem \ref{thm:puredp}, any $\epsilon$-DP mechanism is also $\frac{e^{\epsilon} - 1}{e^{\epsilon} + 1}\epsilon$-DP. As a special case of Exponential Mechanism, the PrivateQuantile algorithm satisfies $\rho_2$-zCDP, with $\rho_2=\min(\frac{1}{8} \epsilon^2, \frac{e^{\epsilon} - 1}{e^{\epsilon} + 1} \epsilon)$.
\end{proof}

\subsection{Proof for Theorem \ref{thm:svt}}

\begin{lemma}
\label{lemma:l1sen}
     The $L_1$ sensitivity of the following is 1.
     \begin{align*}
         q(D, \tau, \gap) = \abovet(D, \tau) - \abovet(D, \tau *\gap).
     \end{align*}
    Here $\gap <1$.
\end{lemma}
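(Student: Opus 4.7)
The plan is to rewrite $q(D,\tau,s)$ in a form that makes the user-level structure transparent, and then use a simple case analysis on how a single user's contribution count can migrate across the relevant thresholds. Concretely, because $s < 1$ implies $\tau s < \tau$, every user counted by $\text{NumAbove}(D,\tau)$ is also counted by $\text{NumAbove}(D,\tau s)$. So I would first rewrite
\begin{equation*}
q(D,\tau,s) \;=\; -\,\bigl|\{u : \tau s < c_u(D) \leq \tau\}\bigr|,
\end{equation*}
where $c_u(D)$ denotes the number of contributions that user $u$ has in $D$. In other words, $-q(D,\tau,s)$ counts the users whose contribution count lands in the half-open interval $(\tau s,\tau]$.

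Next, I would recall that under the neighboring relation in use (replacement of all records of a single user), passing from $D$ to a neighbor $D'$ changes the contribution count of exactly one user $u^{\star}$ and leaves every other user's count fixed. Hence the multiset $\{c_u(D)\}_{u \neq u^{\star}}$ is identical to $\{c_u(D')\}_{u \neq u^{\star}}$, and only the membership of $u^{\star}$ in the interval $(\tau s, \tau]$ can change. This contributes at most $\pm 1$ to the count $|\{u : \tau s < c_u \leq \tau\}|$, and therefore to $q$.

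To turn this into a clean argument I would do a short three-by-three case analysis: each of $c_{u^{\star}}(D)$ and $c_{u^{\star}}(D')$ lies in one of the regions $(-\infty,\tau s]$, $(\tau s, \tau]$, or $(\tau,\infty)$, giving nine cases, in each of which $|q(D)-q(D')| \in \{0,1\}$. This establishes $\Delta_1(q) \leq 1$. Tightness follows immediately by exhibiting a dataset $D$ containing one user with exactly $\tau$ contributions and a neighbor $D'$ in which that user's contributions are replaced by $0$; then $q(D)-q(D')=-1$.

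There is no real obstacle here: the only substantive choice is phrasing the lemma in terms of the interval count $|\{u : \tau s < c_u \leq \tau\}|$, which immediately reduces the claim to the fact that a single-user replacement can move at most one element in or out of a set. The minor subtlety to watch is making sure the inequality directions in $\text{NumAbove}$ (strict vs. non-strict) line up so that the rewrite as a half-open interval count is exact; this is a bookkeeping issue rather than a conceptual one.
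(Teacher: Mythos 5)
Your proposal is correct and takes essentially the same route as the paper's proof: both observe that $q(D,\tau,\gap)$ is (up to sign) the count of users whose contribution count lies in $(\tau\gap,\tau]$, and then case-analyze how replacing a single user's records can move that count by at most one. Your explicit tightness example is a small addition the paper omits, but the core argument is identical.
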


\begin{proof}
Let $D$ and $D'$ be two adjacent datasets. $D'$ can be obtained from $D$ by replacing one user $u$ in $D$ with another user $u'$. $q(D, \tau, \gap)$ counts the number of people in $D$ with contributions between $(\tau * \gap, \tau]$. If both $u$ and $u'$ have number of contributions smaller or equal to $\tau * \gap$, then $q(D, \tau, \gap) = q(D', \tau, \gap)$. If both $u$ and $u'$ have number of contributions between $(\tau + \gap, \tau]$, then $q(D, \tau, \gap) = q(D', \tau, \gap)$. If both $u$ and $u'$ have number of contributions larger than $\tau$, then $q(D, \tau, \gap) = q(D', \tau, \gap)$. If one of them (suppose it's $u$) has number of contributions smaller or equal to $\tau$, and another one (suppose it's $u'$) has number of contributions larger than $\tau$, then 
\begin{align*}
     &q(D', \tau, \gap) \\
     =& \abovet(D', \tau) - \abovet(D', \tau *\gap)\\
     =&\abovet(D, \tau) + 1 - (\abovet(D, \tau *\gap) +1 ) \\
     =& \abovet(D, \tau) - \abovet(D, \tau *\gap)
     = q(D, \tau, \gap).
\end{align*}

If one of them has number of contributions between $(\tau, \tau + \gap]$, and another one doesn't have number of contributions between $(\tau, \tau + \gap]$, then $\|q(D, \tau, \gap) - q(D', \tau, \gap) \|_1= 1$. We see that the $L_1$ difference is at most 1. This is true for the same reason when $\gap < 0$. Therefore, the $L_1$ sensitivity is 1.
    
\end{proof}

% \begingroup
% \setcounter{theorem}{0} % Adjust to match the number if needed
% \renewcommand{\thetheorem}{\ref{thm:svt}}
% \begin{theorem}
% Algorithm UpdateBoundSVT \ref{alg:svt} satisfies $\rho_3$-zCDP with $\rho_3 = \frac{e^{\epsilon} - 1}{e^{\epsilon} + 1} \epsilon$.
% \end{theorem}
% \endgroup

With Lemma \ref{lemma:l1sen}, we are ready to prove Theorem \ref{thm:svt}.
\begin{proof}
The sensitivity of $q^{\uparrow} = \abovet(D, \tau)$ is 1 because if we replace a user in $D$, the total number of users who have contributions greater than $\tau$ can change by at most 1. From Lemma \ref{lemma:l1sen} we know that the sensitivity of $q^{\downarrow}$ is 1. Therefore, each SVT satisfies $\epsilon / 2$-DP, so UpdateBoundSVT satisfies $\epsilon$-DP. From Theorem \ref{thm:puredp}, UpdateBoundSVT satisfies $\rho_3 = \frac{e^{\epsilon} - 1}{e^{\epsilon} + 1} \epsilon$-zCDP.
\end{proof}

\subsection{Proof for Theorem \ref{thm:adsbpc}}

% \begingroup
% \setcounter{theorem}{0} % Adjust to match the number if needed
% \renewcommand{\thetheorem}{\ref{thm:adsbpc}}
% \begin{theorem}
%     Algorithm AdsBPC \ref{alg:adsbpc} satisfies $\rho$-zCDP, with parameter $\rho = \rho_1 + l*\rho_2 + \rho_3$.
% \end{theorem}
% \endgroup

\begin{proof}
In Algorithm \ref{alg:adsbpc}, we allocate a privacy budget of $\rho_2$ for each call to PrivateQuantile. Since there are $l$ calls in total, the overall budget for PrivateQuantile is $l * \rho_2$. Additionally, we use a privacy budget of $\rho_3$ for UpdateBoundSVT. From Theorem \ref{thm:adsbpc}, recall that the sensitivity of the noise addition mechanism is determined by $\frac{r_i}{\sigma_i}$. Since we update the user contribution bound $r_i$ by scaling $\sigma_i$ proportionally, the sensitivity remains unchanged. Therefore, the noise addition mechanism uses a privacy budget of $\rho_1$. In total, the combined privacy budget used is $\rho = \rho_1 + l \cdot \rho_2 + \rho_3$.
\end{proof}

\subsection{Proof for Theorem \ref{thm:sen-pub}}

\begin{proof}
According to the matrix value mechanism \cite{chanyaswad2018mvg}, the function $f(\mat{X}) = \mat{B} \mat{X}$ has the sensitivity $ \max_{\vec{D} \in \subperdaydiff} \| \mat{B} \vec{D}\|_F $. Recall that $\subperdaydiff = \{\vec{D} \in \mathbb{R}^{n\times k}: 0 \leq \|\vec{D}[i, :]\|_1 \leq r_i, 1\leq i\leq n\}$, then the square of sensitivity of $\mat{B}$ is,
\begin{align*}
    \max _{\mathbf{D} \in \subperdaydiff} \sum_{1\leq i \leq n} \frac{\|\mat{D}[i, :] \|_2^2}{\sigma_i^2} &\leq \sum_{1\leq i \leq n} \frac{\|\mat{D}[i, :] \|_1^2}{\sigma_i^2} 
    \leq \sum_{1 \leq i \leq n} \frac{r_i^2}{\sigma_i^2}.
\end{align*}
\end{proof}

\subsection{Proof for Theorem \ref{thm:dp-pub}}

\begin{proof}
    Let $f(\mat{X}) = \mat{B}\mat{X}$, from theorem \ref{thm:sen-pub} the sensitivity of $f$ is $\Delta_2(f) = \sqrt{\frac{r_1^2}{\sigma_1^2} + \cdots + \frac{r_n^2}{\sigma_n^2} }$. The constraint in problem \ref{eq:cauchy} makes sure that $\Delta_2^2(f) \leq 2\rho$. We can flatten $f(\mat{X})$ and $\mat{Z} \sim MVG_{n, k}(0, \mat{I}_n, \mat{I}_k)$ as vectors of length $n \times k$. From Theorem \ref{thm:gaussian}, we get $f(\mat{X}) + \mat{Z} = \mat{B}\mat{X} + \vec{Z}$ satisfies $\rho$-zCDP. Using the post-processing property, we know that $\mech(\mat{X})$ = $\mat{L} (\mat{B} \mat{X} + \mat{Z}) = \mat{W} \mat{X}+ \mat{L}\mat{Z} $ satisfies $\rho$-zCDP.
\end{proof}

\section{PrivateQuantile Algorithm}
Algorithm \ref{alg:quantile} shows the pseudo-code for the DP-Quantile algorithm \cite{gillenwater2021differentially, smith2011privacy}. Note that PrivateQuantile requires an additional parameter, $\Lambda$, which serves as an upper bound. In our experiments, we set $\Lambda = 10$ for the Crieto, Facebook, and Zipf datasets, and $\Lambda = 20$ for the Normal and Uniform datasets. \cite{durfee2023unbounded} proposed an algorithm that does not require such a bound. Notably, their algorithm can be integrated into our system with minimal modifications.

\begin{algorithm}
\caption{PrivateQuantile}
\label{alg:quantile}
\begin{algorithmic}[1]
\STATE \textbf{Input:} Dataset $D$; Quantile percentage $p$; privacy parameter $\epsilon$, bounding parameter $\Lambda$.
\STATE \textbf{Output:} The selected user contribution bound based on $p$ quantile.

\STATE $X$ = [].

\FOR {Each user $u_i$ in $D$}
\STATE $c_i \gets $ the number of records from $u_i$.
\STATE $X$.append($c_i$).
\ENDFOR

\STATE Sort $X$ in ascending order. $X=[c_1, c_2, \cdots, c_k]$.
\STATE Replace $c_i > \Lambda$ with $\Lambda$. Define $c_0 = 0$ and $c_{k+1}$.

\FOR { $i=1$ to $k$ }
\STATE Set $y_i = (c_{i+1} - c_i) \exp(-\epsilon |i - p *k|/2)$.
\ENDFOR 
\STATE Sample an integer $i \in \{ 0, \cdots, k \}$ with probability $y_i / (\sum_{i=0}^k y_i)$.
\STATE $c \gets$ uniformly sampled from $[c_i, c_{i+1}]$.
\STATE Return $c$.
\end{algorithmic}
\end{algorithm}

\newpage % The Meta-Review should at least start on a new column

\section{Meta-Review}

The following meta-review was prepared by the program committee for the 2025 IEEE Symposium on Security and Privacy (S\&P) as part of the review process as detailed in the call for papers.

\subsection{Summary} This paper presents AdsBPC, a system to measure ad campaign exposure and attribution with differential privacy guarantees. This system allows publishers to measure multi-touch attribution (MTA) while providing user-level differential privacy (DP). The authors claim a significant improvement in accuracy when compared to prior work.

\subsection{Scientific Contributions} \begin{itemize} \item Provides a Valuable Step Forward in an Established Field \item Creates a New Tool to Enable Future Science \end{itemize}

\subsection{Reasons for Acceptance} \begin{enumerate} \item The paper provides a valuable step forward in an established field. AdsBPC novel approach has been evaluated on five datasets, both synthetic and real-world, and achieves improvements over baselines such as IPA, BIN, Stream and UMM. Furthermore, the feasibility and performance of the system was tested across ad campaigns with different duration of up to 365 days.

\item The reviewers were positive about the presentation of the paper, agreeing that it does a really good job of presenting the methodology behind AdsBPC and the thorough comparison with existing technologies. The reviewers also agreed that, while some of the underlying techniques and notions of the paper might not be novel, the way that these are applied to the specific use case of advertisements and measuring conversions provides enough of a step forward on the field to justify accepting this paper.

\end{enumerate}

\end{document}